\newtheorem{theorem}{Theorem}[section]
\newtheorem{lemma}[theorem]{Lemma}
\newtheorem{corollary}[theorem]{Corollary}
\newtheorem{definition}[theorem]{Definition}
\newtheorem{remark}[theorem]{Remark}
\begin{document}
\title{Classical-Quantum Arbitrarily Varying Wiretap Channel: Secret Message Transmission under Jamming Attacks}
\author{Holger Boche\\
Lehrstuhl f\"ur Theoretische
Informationstechnik,\\
 Technische Universit\"at M\"unchen,\\
Munich, Germany\\
boche@tum.de \and Minglai Cai\\
Lehrstuhl f\"ur Theoretische
Informationstechnik,\\
 Technische Universit\"at M\"unchen,\\
Munich, Germany\\
minglai.cai@tum.de \and Christian Deppe \\
Fakult\"at f\"ur
 Mathematik,\\ 
Universit\"at Bielefeld,\\
Bielefeld, Germany\\
cdeppe@mathematik.uni-bielefeld.de  \and Janis N\"otzel  \\
 Universitat Aut\`{o}noma de Barcelona,\\
 Barcelona, Spain/\\
Technische Universit\"at Dresden\\
 Dresden, Germany,\\
Janis.Notzel@uab.cat}

\maketitle
\begin{abstract}
We analyze arbitrarily varying classical-quantum wiretap channels. These channels are subject to two attacks at the same time: one passive (eavesdropping), and one active (jamming).
We progress on previous works \cite{Bo/Ca/De} and \cite{Bo/Ca/De2}, by introducing a reduced class of allowed codes that fulfills a more stringent secrecy requirement than earlier definitions.
In addition, we prove that non-symmetrizability of the legal link is sufficient for equality of the deterministic and the common randomness assisted secrecy capacities.
At last, we focus on analytic properties of both secrecy capacities: We completely characterize their discontinuity points, and their super-activation properties.
\end{abstract}

\tableofcontents

\section{Introduction}

In the last few years
the developments in modern communication systems produced 
many results in a short amount of time.
Especially quantum communication systems allow us to exploit new possibilities while at the same time imposing fundamental limitations.

Quantum mechanics differs significantly from classical mechanics,
it has its own laws.  Quantum information theory unifies information theory with quantum mechanic,
generalizing classical information theory to the quantum world. 
The unit of quantum information is   called    the "qubit", the quantum analogue of the classical "bit".
Unlike a bit, which is either "0" or "1", a qubit can be in "superposition",
i.e. both states at the same time, this is a fundamental tool in quantum information and computing.
%Theoretically the quantum information processing systems
 %can provide huge advantages over the classical counterparts, e.g. perfect secrecy (cf.
%\cite{Be/Br} and \cite{Be} for two well-known examples of quantum key distributions) or
% integer factorization in polynomial time
%(cf. Shor's algorithm
%\cite{Sho}). 

A quantum
channel is  a communication  channel which can transmit
quantum information. 
 In general,  there are two
ways to represent a quantum  channel with linear algebraic tools, 
either as a sum of several
transformations, or as a single unitary  transformation which
explicitly includes the unobserved environment.

Quantum  channels  can transmit both classical
and quantum information. We consider the capacity of
quantum channels   carrying classical information.
This is equivalent  to  considering
 the capacity of  classical-quantum channels,
where the  classical-quantum channels are quantum  channels whose
sender's inputs are classical variables.  The classical capacity
of quantum channels has been  determined  in
\cite{Ho}, \cite{Ho}, \cite{Sch/Ni},
and  \cite{Sch/Wes}.
%However the practical benefit of these advantages
% on our daily lives is

Our goal is
  to investigate   in   communication that
	takes place over a quantum channel 
which is, in addition to   the   noise from the environment, subjected 
to the action of a jammer which actively manipulates the states.
The messages ought also to be kept secret from an 
eavesdropper.

A classical-quantum channel with a jammer is called an arbitrarily varying classical-quantum channel,
where the jammer may change his input in every channel use and is not restricted to use a 
repetitive probabilistic strategy. 
In  the model of an arbitrarily varying channel, we consider  a channel  which is not stationary  
and    can change  with every use. We  interpret   this as an   attack of a
jammer.  
It works as follows:    the sender and the receiver have to select their
coding scheme first. After that the jammer makes his choice of the channel state to sabotage the message  transmission.
However, due to the physical properties, we assume
that  the jammer's changes only take  place in a known set. 

The arbitrarily varying channel was first introduced
 in \cite{Bl/Br/Th2}. 
 \cite{Ahl1}  showed a surprising result
which is  known as the Ahlswede Dichotomy:
Either the 
capacity of an arbitrarily varying channel is zero or it equals its shared randomness assisted capacity. 
After the discovery in   \cite{Bl/Br/Th2} it remained as an open question  when the deterministic capacity is positive. 
In \cite{Rei}  a sufficient condition for that has been given, and in \cite{Cs/Na} it is
 proved that this condition is also necessary.
The  Ahlswede Dichotomy demonstrates the importance of shared randomness for communication in a very clear
 form. 

In \cite{Ahl/Bli} the capacity of arbitrarily varying classical-quantum channels is analyzed. 
A lower 
bound of the capacity has been given. An alternative proof of \cite{Ahl/Bli}'s result and a proof of 
the strong converse are given in \cite{Bj/Bo/Ja/No}. In \cite{Ahl/Bj/Bo/No} the Ahlswede Dichotomy for 
the arbitrarily varying classical-quantum channels is established, and a sufficient and necessary 
condition for the zero deterministic capacity is given. In \cite{Bo/No} a simplification of this 
condition for the arbitrarily varying classical-quantum channels is given.

In  the
model of a wiretap channel
 we consider secure communication. This was first introduced   in \cite{Wyn}. We
 interpret the  wiretap channel as a channel with an eavesdropper.
For a discussion 
of the relation of the different security criteria we refer to 
\cite{Bl/La} and \cite{Wi/No/Bo}.

A classical-quantum  channel with an eavesdropper is called a
classical-quantum wiretap channel, its secrecy capacity has been
determined in \cite{De} and \cite{Ca/Wi/Ye}.

 This work is a progress of our
previous papers  \cite{Bo/Ca/De} and \cite{Bo/Ca/De2}, where  we considered  channel robustness against
 jamming and at the same time security against  eavesdropping.
 A classical-quantum channel with  a jammer and at the same time an eavesdropper is called an arbitrarily 
varying classical-quantum wiretap channel. It is defined as a family of pairs of indexed channels 
$\{(W_t,V_t):t=1,\ldots,T\}$ with a common input alphabet and possible different output systems, 
connecting a sender with two receivers, a legal one and a wiretapper, where $t$ is called a channel 
state of the channel pair. The legitimate receiver accesses the output of the first part of the pair, 
i.e., the first channel $W_t$ in the pair, and the wiretapper observes the output of the second part, 
i.e., the second channel $V_t$. A channel state $t$, which varies from symbol to symbol 
in an arbitrary manner, governs both the legal receiver's channel and the wiretap channel. A code for 
the channel conveys information to the legal receiver such that the wiretapper knows nothing about 
the transmitted information. 

In  \cite{Bo/Ca/De} 
we established the Ahlswede Dichotomy for arbitrarily varying classical-quantum
wiretap  channels,  i.e., either the deterministic 
capacity of an arbitrarily varying channel is zero or equals  its randomness assisted capacity.
 Our proof is similar to the proof of the Ahlswede Dichotomy for  arbitrarily varying classical-quantum channels in
\cite{Ahl/Bli}: we build a two-part code word, the first part is used to
create the common randomness for the sender and the legal receiver, the second is used to transmit the message to the legal
receiver.
We also analyzed the secrecy capacity     when   the sender and the
receiver used various resources. 
In \cite{Bo/Ca/De2} 
we determined the secrecy capacities under common
randomness assisted coding of arbitrarily varying classical-quantum
wiretap  channels. We also 
examined when the secrecy
capacity was a continuous function of the system parameters.
  Furthermore, we proved the phenomenon
``super-activation'' for arbitrarily varying classical-quantum
wiretap channels, i.e.,    there were    two channels, both with zero
deterministic secrecy capacity,    such that if they were used together they    allowed perfect
secure transmission    with positive deterministic secrecy capacity.   
Combining the results of these two paper     we get    the formula for
deterministic secrecy capacity of the arbitrarily varying classical-quantum
wiretap   channel. \vspace{0.2cm}

\begin{center}\begin{figure}[H]
\includegraphics[width=0.8\linewidth]{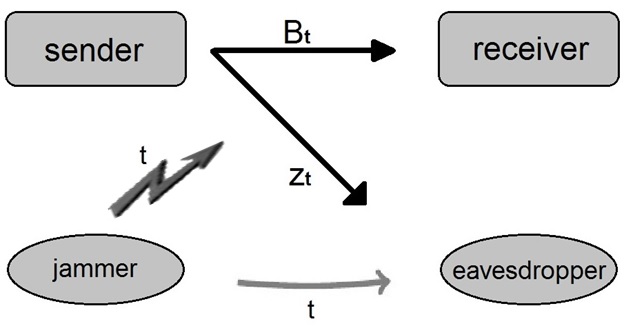}
\caption{Arbitrarily varying classical-quantum wiretap channel}
\end{figure}\end{center}

As aforementioned 
 the lower bound in \cite{Bo/Ca/De} and \cite{Bo/Ca/De2}     is shown    
by building a two-part deterministic code. However 
that code concept still leaves something to be desired
because
we had to reduce the generality of the code concept
when we explicitly allowed a small part of the code word to
be non-secure.
The code word we built was a composition of a public
code word to synchronize the second part and a common randomness assisted code word
to transmit the message.
We only    required    security for the last part.
As we     will   show in 
Corollary \ref{enrtitvwsc}, when
the jammer has access 
to the first part,
it will be rendered
completely useless. Thus
the code concept only works when
the jammer is limited in his action, e.g.
we have to assume that
eavesdropper cannot
send messages towards the jammer.
Nevertheless this code concept with weak criterion 
can be useful when small amount of public messages
is desired, e.g. when  the    receiver    uses it
to estimate the channels.
In this work we 
consider the general code concept
when we construct
a code in such a way that every
part of it is secure.
We show that when the legal channel
is not symmetrizabel the sender can send
a small amount of secure transmissions
which push the secure 
capacity  to the maximally attainable value. 
Thus the entire security is granted.
We call
it the strong code concept.
This completes our analysis of   
arbitrarily     varying classical-quantum wiretap channel.

%The compound channel describes channel uncertainty. We envision the scenario
%as follows. The sender and the receiver use a fiber-optic cable for message
%transmission. Here we assume that the transmitters can, using powerful shielding,
%gather perfect protection against any environment's interference. However
%due to measurement inaccuracy they do not have the full information about
%the actual parameters about the fiber-optical cable. The only knowledge is that
%these parameters take place in a known set. The task for the transmitters is to
%build a robust code which works despite the measurement inaccuracy about the
%cable. In our previous paper
%\cite{Bo/Ca/Ca/De} we determined the capacity of the
%classical-quantum compound wiretap channel 
%with channel state information
%at the transmitter under weak code concept. In this work
%we expand this result to strong code concept.

In \cite{Bo/Ca/De} we analyzed
the secrecy capacities of various coding schemes with resource assistance.
We showed that when the jammer was not allowed to
has access to the resource, it was
very helpful for the secure message transmission
through an arbitrarily varying classical-quantum wiretap channel.
In this work we analyze the case when the shared randomness is not secure
against the jammer.

In \cite{Bo/Ca/De2} we showed that the secrecy
capacity was in general not a continuous 
 function of the system parameters.
In \cite{Bo/Ca/De}
we proved super-activation
for arbitrarily varying classical-quantum wiretap channels.
In this work we establish  complete characterizations for 
  continuity and positivity
of 
the capacity function of arbitrarily varying classical-quantum wiretap channels,
and a complete characterization for
super-activation.

 This
paper is organized as follows:\vspace{0.15cm}

The main definitions  are given in
Section \ref{bdacs2}.

In Section  \ref{CAVQW} we determine a secrecy capacity
formula for a mixed channel model which 
is called the  classic arbitrarily varying quantum wiretap channel.
This formula
is used for
our result in Section  \ref{TSMwSCR}.

In Section  \ref{TSMwSCR} our main result
is presented. In this section we determine the secrecy capacity
for the arbitrarily varying  classical-quantum channels under
strong code concept.

In Section  \ref{cwrioppc} we analyze when the sender and
the legal receiver had the possibility to use shared randomness 
which is not secure against the jammer. We also  determine the secrecy capacity of arbitrarily varying classical-quantum
wiretap  channels  shared randomness 
which is secure against eavesdropping.

%In Section  \ref{cqcwcstr} we  determine the capacity of the
%classical-quantum compound wiretap channel 
%with channel state information
%at the transmitter under strong code concept.

As an application of the results of our earlier works, in Section  \ref{SomeApp}
we establish when the secrecy capacity of an arbitrarily varying
classical-quantum wiretap channel is positive and when it  is a  continuous quantity of the
system parameters. Furthermore we show when   ``super-activation'' occurs 
for arbitrarily varying
classical-quantum wiretap channels.

\section{Basic Definitions and Communication Scenarios}
\label{bdacs2}

For  a finite set $\mathbf{A}$ we denote  the
set of probability distributions on $\mathbf{A}$ by $P(\mathbf{A})$.
Let $\rho_1$ and  $\rho_2$ be  Hermitian   operators on a  finite-dimensional
complex Hilbert  space $G$.
We say $\rho_1\geq\rho_2$ and $\rho_2\leq\rho_1$ if $\rho_1-\rho_2$
is positive-semidefinite.
 For a finite-dimensional
complex Hilbert space  $G$, we denote
the (convex) space 
of  density operators on $G$ by
\[\mathcal{S}(G):= \{\rho \in \mathcal{L}(G) :\rho  \text{ is Hermitian, } \rho \geq 0_{G} \text{ , }  \mathrm{tr}(\rho) = 1 \}\text{ ,}\]
where $\mathcal{L}(G)$ is the set  of linear  operators on $G$, and $0_{G}$ is the null
matrix on $G$. Note that any operator in $\mathcal{S}(G)$ is bounded.\vspace{0.15cm}

For  finite sets $\mathbf{A}$  and $\mathbf{B}$,
we define a (discrete) classical channel   $\mathsf{V}$:  $\mathbf{A} \rightarrow P(\mathbf{B})$,
$ \mathbf{A} \ni x \rightarrow \mathsf{V}(x) \in P(\mathbf{B})$
 to be a system characterized by a probability transition matrix $\mathsf{V}(\cdot|\cdot)$.
For $x \in \mathbf{A}$ and $y \in \mathbf{B}$,
$\mathsf{V}(y|x)$  expresses the probability of  the output symbol $y$
when we send the symbol $x$ through the channel. The channel is said to be memoryless
if the probability distribution of the output depends only on the input at
that time and is conditionally independent of previous channel inputs and
outputs.  Further we can extend this definition when we define a 
 classical channel to a map  $\mathsf{V}$:  $P(\mathbf{A}) \rightarrow P(\mathbf{B})$
by denoting $\mathsf{V}(y|p)$ $:=$ $\sum_{x\in\mathbf{A}} p(x)\mathsf{V}(y|x)$.

Let $n \in \mathbb{N}$.
We define
 the $n$-th memoryless extension of
the stochastic matrix $\mathsf{V}$ by $\mathsf{V}^{ n}$, i.e., for $x^n=
(x_1,\dots,x_n)\in A^{ n}$ and  $y^n= (y_1,\dots,y_n)\in B^{ n}$,
$\mathsf{V}^{ n}(y^n|x^n) = \prod_{i=1}^{n} \mathsf{V}(y_i|x_i)$.

 For  finite-dimensional
complex Hilbert spaces  $G$ and  $G'$  a quantum channel $N$:
$\mathcal{S}(G) \rightarrow \mathcal{S}(G')$, $\mathcal{S}(G)  \ni
\rho \rightarrow N(\rho) \in \mathcal{S}(G')$ is represented by a
completely positive trace-preserving map
 which accepts input quantum states in $\mathcal{S}(G)$ and produces output quantum
states in  $\mathcal{S}(G')$.

If the sender wants to transmit a classical message of a finite set $A$ to
the receiver using a quantum channel $N$, his encoding procedure will
include a classical-to-quantum encoder 
to prepare a quantum message state $\rho \in
\mathcal{S}(G)$ suitable as an input for the channel. If the sender's
encoding is restricted to transmit an  indexed finite set of
 quantum states $\{\rho_{x}: x\in \mathbf{A}\}\subset
\mathcal{S}(G)$, then we can consider the choice of the signal
quantum states $\rho_{x}$ as a component of the channel. Thus, we
obtain a channel $\sigma_x := N(\rho_{x})$ with classical inputs $x\in \mathbf{A}$ and quantum outputs,
 which we call a classical-quantum
channel. This is a map $\mathbf{N}$: $\mathbf{A} \rightarrow
\mathcal{S}(G')$, $\mathbf{A} \ni x \rightarrow \mathbf{N}(x) \in
\mathcal{S}(G')$ which is represented by the set of $|\mathbf{A}|$ possible
output quantum states $\left\{\sigma_x = \mathbf{N}(x) :=
N(\rho_{x}): x\in \mathbf{A}\right\}\subset \mathcal{S}(G')$, meaning that
each classical input of $x\in \mathbf{A}$ leads to a distinct quantum output
$\sigma_x \in \mathcal{S}(G')$. In view of this, we have the following
definition.\vspace{0.15cm}

Let $H$ be a finite-dimensional
complex Hilbert space.  A classical-quantum channel is
a linear map $W: P(\mathbf{A})\rightarrow\mathcal{S}(H)$,
$P(\mathbf{A})  \ni P
\rightarrow W(P) \in \mathcal{S}(H)$.
Let $a\in \mathbf{A}$. For a $P_a\in P(\mathbf{A})$, 
defined by $P_a(a')= \begin{cases} 1 &\mbox{if } a'=a\text{ ;}\\
0 &\mbox{if } a'\not=a \end{cases}\text{ ,}$ we 
write $W(a)$ instead of $W(P_a)$.

\begin{remark}    Frequently    a classical-quantum channel is
defined as a map $ \mathbf{A}\rightarrow\mathcal{S}(H)$,
$\mathbf{A}  \ni a
\rightarrow W(a) \in \mathcal{S}(H)$. This is a 
special case when the input is limited on the set $\{P_a :
a\in \mathbf{A}\}$. \end{remark}

For any finite set $\mathbf{A}$, any finite-dimensional
complex Hilbert space $H$, and  $n\in\mathbb{N}$, we define ${\mathbf{A}}^n:= \Bigl\{(a_1,\cdots,a_n): a_i \in \mathbf{A}
\text{ } \forall i \in \{1,\cdots,n\}\Bigr\}$, and $H^{\otimes n}:=
span\Bigl\{v_1\otimes \cdots \otimes v_n: v_i \in H
\text{ } \forall i \in \{1,\cdots,n\}\Bigr\}$. We also write $a^n$ 
for the elements of
${\mathbf{A}}^n$.

Let $n \in \mathbb{N}$.
We define
 the $n$-th memoryless extension of
the stochastic matrix $\mathsf{V}$ by $\mathsf{V}^{ n}$, i.e., for $x^n=
(x_1,\dots,x_n)\in A^{ n}$ and  $y^n= (y_1,\dots,y_n)\in B^{ n}$,
$\mathsf{V}^{ n}(y^n|x^n) = \prod_{i=1}^{n} \mathsf{V}(y_i|x_i)$.

We define the $n$-th extension of
 classical-quantum channel
$W$ as follows.
Associated with $W$ is the channel map on the n-block $W^{\otimes n}$: $P({\mathbf{A}}^n)
\rightarrow \mathcal{S}({H}^{\otimes n})$, such that 
 $W^{\otimes n}(P^n) = W(P_1)
\otimes \cdots \otimes W(P_n)$
if $P^n\in P(\mathbf{A}^{ n})$ can be written as
$(P_1,\cdots,P_n)$. 

Let $\theta$ $:=$ $\{1,\cdots,T\}$ be a finite set.
Let $\Bigl\{W_t:t\in\theta\Bigr\}$ be a set of classical-quantum channels.
For $t^n=(t_1,\cdots,t_n)$, $t_i\in\theta$ we define the n-block $W_{t^n}$ 
such that  for
 $W_{t^n}(P^n) = W_{t_1}(P_1)
\otimes \cdots \otimes W_{t_n}(P_n)$
if $P^n\in P(\mathbf{A}^{ n})$ can be written as
$(P_1,\cdots,P_n)$.

For a discrete random variable $X$  on a finite set $\mathbf{A}$ and a discrete
random variable  $Y$  on  a finite set $\mathbf{B}$  we denote the Shannon entropy
of $X$ by
$H(X)=-\sum_{x \in \mathbf{A}}p(x)\log p(x)$ and the mutual information between $X$
and $Y$ by  
$I(X;Y) = \sum_{x \in \mathbf{A}}\sum_{y \in \mathbf{B}}  p(x,y) \log{ \left(\frac{p(x,y)}{p(x)p(y)} \right) }$.
Here $p(x,y)$ is the joint probability distribution function of $X$ and $Y$, and 
$p(x)$ and $p(y)$ are the marginal probability distribution functions of $X$ and $Y$ respectively,
and ``$\log$''  means logarithm to base $2$.

For a quantum state $\rho\in \mathcal{S}(H)$ we denote the von Neumann
entropy of $\rho$ by \[S(\rho)=- \mathrm{tr}(\rho\log\rho)\text{
,}\] where ``$\log$''  means logarithm to base $2$.

Let $\mathfrak{P}$ and $\mathfrak{Q}$ be quantum systems. We 
denote the Hilbert space of $\mathfrak{P}$ and $\mathfrak{Q}$ by 
$G^\mathfrak{P}$ and $G^\mathfrak{Q}$, respectively. Let $\phi^\mathfrak{PQ}$ be a bipartite
quantum state in $\mathcal{S}(G^\mathfrak{PQ})$. 
We denote the partial
trace over $G^\mathfrak{P}$ by
\[\mathrm{tr}_{\mathfrak{P}}(\phi^\mathfrak{PQ}):= 
\sum_{l} \langle l|_{\mathfrak{P}} \phi^\mathfrak{PQ} |  l \rangle_{\mathfrak{P}}\text{ ,}\]
where $\{ |  l \rangle_{\mathfrak{P}}: l\}$ is an orthonormal basis
of $G^\mathfrak{P}$.
We denote the conditional entropy by
\[S(\mathfrak{P}\mid\mathfrak{Q})_{\rho}:=
S(\phi^\mathfrak{PQ})-S(\phi^\mathfrak{Q})\text{
.}\]
Here $\phi^\mathfrak{Q}=\mathrm{tr}_{\mathfrak{P}}(\phi^\mathfrak{PQ})$.
Let
$\mathtt{V}$: $A \rightarrow
\mathcal{S}(G)$ be a classical-quantum
channel. For $P\in P(A)$
the conditional entropy of the channel for $\mathtt{V}$ with input distribution $P$
is denoted by
 \[S(\mathtt{V}|P) := \sum_{x\in A} P(x)S(\mathtt{V}(x))\text{
.}\]

Let $\Phi := \{\rho_x : x\in \mathbf{A}\}$  be a set of quantum  states
labeled by elements of $\mathbf{A}$. For a probability distribution  $Q$
on $\mathbf{A}$, the    Holevo $\chi$ quantity is defined as
\[\chi(Q;\Phi):= S\left(\sum_{x\in \mathbf{A}} Q(x)\rho_x\right)-
\sum_{x\in \mathbf{A}} Q(x)S\left(\rho_x\right)\text{ .}\]
Note that we can always associate a state 
$\rho^{XY}=\sum_{x}Q(x)|x\rangle\langle x|\otimes \rho_x$ to
$(Q;\Phi)$ such that $\chi(Q;\Phi)=I(X;Y)$ holds for the quantum
mutual information.
For a set $\mathbf{A}$ and a  Hilbert space $G$ let
$\mathbf{V}$: $\mathbf{A} \rightarrow
\mathcal{S}(G)$ be a classical-quantum
channel. For a probability distribution $P$ on $\mathsf{A}$
the    Holevo $\chi$ quantity  of the channel for $\mathbf{V}$ with input distribution $P$
is
defined as
\[\chi(P;\Phi):= S\left(\mathbf{V} (P)\right)-
S\left(\mathbf{V}|P\right)\text{ .}\] 

We denote the identity operator on a space $H$ by $\mathrm{id}_H$
and the symmetric group on $\{1,\cdots,n\}$ by $\mathsf{S}_n$.

For a probability distribution $P$ on a finite set $\mathbf{A}$  and a positive constant $\delta$,
we denote the set of typical sequences by 
\[\mathcal{T}^n_{P,\delta} :=\left\{ a^n \in \mathbf{A}^n: \left\vert \frac{1}{n} N(a'\mid a^n)
- P(a') \right\vert \leq \frac{\delta}{|\mathbf{A}|}\forall a'\in \mathbf{A}\right\}\text{ ,}\]
where $N(a'\mid a^n)$ is the number of occurrences of the symbol $a'$ in the sequence $a^n$.

Let
$G$ be a finite-dimensional
complex Hilbert space.
Let $n \in \mathbb{N}$ and $\alpha > 0$.
We suppose $\rho \in \mathcal{S}(G)$ has
the spectral decomposition
$\rho = \sum_{x} P(x)  |x\rangle\langle x|$,
its
$\alpha$-typical subspace is the subspace spanned
by $\left\{|x^n\rangle, x^n \in {\mathcal{T}}^n_{P, \alpha}\right\}$,
where  $|x^n\rangle:=\otimes_{i=1}^n|x_i\rangle$.  The orthogonal subspace projector onto the
typical subspace is
\[ \Pi_{\rho ,\alpha}=\sum_{x^n \in {\mathcal{T}}^n_{P, \alpha}}|x^n\rangle\langle x^n|\text{ .}\]

Similarly let $\mathbf{A}$ be a finite set, and  $G$ be a finite-dimensional
complex Hilbert space.
Let
$\mathtt{V}$: $\mathbf{A} \rightarrow
\mathcal{S}(G)$ be a classical-quantum
channel. For $a\in\mathbf{A}$  suppose
$\mathtt{V}(a)$ has
the spectral decomposition $\mathtt{V}(a)$
$ =$ $\sum_{j}
V(j|a) |j\rangle\langle j|$
for a stochastic matrix
$V(\cdot|\cdot)$.
 The $\alpha$-conditional typical
subspace of $\mathtt{V}$ for a typical sequence   $a^n$ is the
subspace spanned by
 $\left\{\bigotimes_{a\in\mathbf{A}}|j^{\mathtt{I}_a}\rangle, j^{\mathtt{I}_a} \in \mathcal{T}^{\mathtt{I}_a}_{V(\cdot|a),\delta}\right\}$.
Here $\mathtt{I}_a$ $:=$ $\{i\in\{1,\cdots,n\}: a_i = a\}$ is an indicator set that selects the indices $i$ in the sequence $a^n$
$=$ $(a_1,\cdots,a_n)$ for which the $i$-th
symbol $a_i$ is equal to $a\in\mathbf{A}$.
The subspace is often referred as the $\alpha$-conditional typical
subspace of the state  $\mathtt{V}^{\otimes n}(a^n)$.
  The orthogonal subspace projector onto it is defined as
	\[\Pi_{\mathtt{V}, \alpha}(a^n)=\bigotimes_{a\in\mathbf{A}}
\sum_{j^{\mathtt{I}_a} \in {\cal
T}^{\mathtt{I}_a}_{\mathtt{V}(\cdot \mid a^n),\alpha}}|j^{\mathtt{I}_a} \rangle\langle j^{\mathtt{I}_a}|\text{ .}
\]
%\[\Pi_{\mathtt{V}, \alpha}(a^n)=\sum_{j^n \in {\cal
%T}^n_{\mathtt{V}(\cdot \mid a^n),\alpha}}|j^n \rangle\langle j^n|\text{ .}\]
 %$\left\{|j^n\rangle, j^n \in {\mathcal{T}}^n_{V(\cdot|a^n),\alpha}\right\}$.
The typical subspace 
has following  properties:

For $\sigma \in \mathcal{S}(G^{\otimes n})$ and $\alpha > 0$ 
there are positive constants $\beta(\alpha)$, $\gamma(\alpha)$, 
and $\delta(\alpha)$, depending on $\alpha$ such that
\begin{equation} \label{te1}
\mathrm{tr}\left({\sigma} \Pi_{\sigma ,\alpha}\right) > 1-2^{-n\beta(\alpha)}
\text{ ,}\end{equation}

\begin{equation} \label{te2}
2^{n(S(\sigma)-\delta(\alpha))}\le \mathrm{tr} \left(\Pi_{\sigma ,\alpha}\right)
\le 2^{n(S(\sigma)+\delta(\alpha))}
\text{ ,}\end{equation}

\begin{equation} \label{te3}
2^{-n(S(\sigma)+\gamma(\alpha))} \Pi_{\sigma ,\alpha} \le \Pi_{\sigma ,\alpha}
{\sigma} \Pi_{\sigma ,\alpha}
\le 2^{-n(S(\sigma)-\gamma(\alpha))} \Pi_{\sigma ,\alpha}
\text{ .}\end{equation}

For 
$a^n \in {\mathcal{T}}^n_{P, \alpha}$ 
there are positive constants $\beta(\alpha)'$, $\gamma(\alpha)'$, 
and $\delta(\alpha)'$, depending on $\alpha$ such that

\begin{equation} \label{te4}
\mathrm{tr}\left(\mathtt{V}^{\otimes n}(a^n) \Pi_{\mathtt{V}, \alpha}(a^n)\right)
> 1-2^{-n\beta(\alpha)'}
\text{ ,}\end{equation}

\begin{align} \label{te5}
&2^{-n(S(\mathtt{V}|P)+\gamma(\alpha)')} \Pi_{\mathtt{V}, \alpha}(a^n)
 \le \Pi_{\mathtt{V}, \alpha}(a^n)\mathtt{V}^{\otimes n}(a^n) \Pi_{\mathtt{V}, \alpha}(a^n)\notag\\
 &\le 2^{-n(S(\mathtt{V}|P)-\gamma(\alpha)')} \Pi_{\mathtt{V}, \alpha}(a^n)
\text{ ,}\end{align}
\begin{equation} \label{te6}
2^{n(S(\mathtt{V}|P)-\delta(\alpha)')}\le \mathrm{tr}\left(
\Pi_{\mathtt{V}, \alpha}(a^n) \right)\le 2^{n(S(\mathtt{V}|P)+\delta(\alpha)')}
\text{ .}\end{equation}

For the classical-quantum channel
$\mathtt{V}: P(\mathbf{A}) \rightarrow \mathcal{S}(G)$ and a probability
distribution $P$ on $\mathbf{A}$  we define
 a quantum state $P\mathtt{V}$ $:=$ $\mathtt{V}(P)$ on $\mathcal{S}(G)$.
 For $\alpha > 0$ we define an
orthogonal subspace projector $\Pi_{P\mathtt{V}, \alpha}$
 fulfilling (\ref{te1}), (\ref{te2}), and (\ref{te3}).
Let $x^n\in{\mathcal{T}}^n_{P, \alpha}$.
For $\Pi_{P\mathtt{V}, \alpha}$ there is a positive constant $\beta(\alpha)''$ such that
following inequality holds:
\begin{equation} \label{te7}  \mathrm{tr} \left(  \mathtt{V}^{\otimes n}(x^n) \cdot \Pi_{P\mathtt{V}, \alpha } \right)
 \geq 1-2^{-n\beta(\alpha)''} \text{ .}\end{equation}

We give here a sketch of the proof. For a detailed proof please see \cite{Wil}.

\begin{proof}
(\ref{te1}) holds because 
$\mathrm{tr}\left({\sigma} \Pi_{\sigma ,\alpha}\right)$
$=$ $\mathrm{tr}\left(\Pi_{\sigma ,\alpha}{\sigma} \Pi_{\sigma ,\alpha}\right) $
$=$ $P({\mathcal{T}}^n_{P, \alpha})$.
(\ref{te2}) holds because
$\mathrm{tr} \left(\Pi_{\sigma ,\alpha}\right)$ $=$
$\left\vert {\mathcal{T}}^n_{P, \alpha} \right\vert$.
(\ref{te3}) holds because
$2^{-n(S(\sigma)+\gamma(\alpha))}$ $\le$
$P^n(x^n)$ $\le$ $2^{-n(S(\sigma)-\gamma(\alpha))}$
for $x\in {\mathcal{T}}^n_{P, \alpha}$ and a positive $\gamma(\alpha)$.
(\ref{te4}), (\ref{te5}), and (\ref{te6})
can be obtained in a similar way.
(\ref{te7}) follows from the permutation-invariance of $\Pi_{P\mathtt{V}, \alpha}$.
\end{proof}

\begin{definition}
Let $\mathbf{A}$ and  $\mathbf{B}$ be finite sets, and  $H$ be a finite-dimensional
complex Hilbert spaces. 
Let $\theta$ := $\{1,\dots,T\}$ be a finite set. 
For every $t \in \theta$ let
$\mathsf{W}_{t}$ be a classical  channel $P(\mathbf{A}) \rightarrow P(\mathbf{B})$,
and $W_{t}$   be a classical-quantum channel
$P(\mathbf{A}) \rightarrow \mathcal{S}(H)$.
We call the set of the  classical
channels  $\{\mathsf{W}_t : t \in \theta\}$ an \bf arbitrarily varying
 channel \rm
and the set of the  classical-quantum
channels  $\{W_t : t \in \theta\}$ an \bf arbitrarily varying
classical-quantum  channel \rm
when the channel state $t$ varies from
symbol to symbol in an  arbitrary manner.

\end{definition}

Strictly speaking, the set $\{W_t : t \in \theta\}$ generates the 
arbitrarily varying
classical-quantum  channel $\{W_{t^n} : t^n \in \theta^n\}$.
When the sender inputs a  $P^n \in P({\mathbf{A}}^n)$ into the channel, the receiver
receives the output $W_{t^n}(P^n)
 \in \mathcal{S}(H^{\otimes n})$, where $t^n = (t_1, t_2, \cdots ,
t_n)\in\theta^n$ is the channel state of $W_{t^n}$.

\begin{definition}\label{symmet}
We say that the arbitrarily varying channel
$\{\mathsf{W}_t : t \in \theta\}$ is symmetrizable if
 there exists a
parametrized set of distributions $\{\tau(\cdot\mid a):
 a\in \mathbf{A}\}$ on $\theta$ such that for all $a$, ${a'}\in \mathbf{A}$, and $b \in \mathbf{B}$
\[\sum_{t\in\theta}\tau(t\mid a)\mathsf{W}_{t}(b\mid {a'})=\sum_{t\in\theta}\tau(t\mid {a'})\mathsf{W}_{t}(b\mid a)\text{ .}\]

We say that the arbitrarily varying classical-quantum  channel
$\{W_t : t \in \theta\}$ is symmetrizable if
 there exists a
parametrized set of distributions $\{\tau(\cdot\mid a):
 a\in \mathbf{A}\}$ on $\theta$ such that for all $a$, ${a'}\in \mathbf{A}$,
\[\sum_{t\in\theta}\tau(t\mid a)W_{t}({a'})=\sum_{t\in\theta}\tau(t\mid {a'})W_{t}(a)\text{ .}\]
\end{definition}

\begin{definition}
Let $\mathbf{A}$ and  $\mathbf{B}$ be finite sets, and  $H$ be a finite-dimensional
complex Hilbert spaces.  Let
  $\theta$ $:=$ $\{1,2, \cdots\}$ be an index set. For every $t \in \theta$  let $\mathsf{W}_{t}$   be a classical channel
$P(\mathbf{A}) \rightarrow P(\mathbf{B})$ and ${V}_t$ be a classical-quantum channel $P(\mathbf{A})
\rightarrow \mathcal{S}(H)$. We call the set of the  classical/classical-quantum 
channel pairs  $\{(\mathsf{W}_t,{V}_t): t \in \theta\}$ an \bf classic arbitrarily varying quantum wiretap channel\it, when the
state $t$ varies from symbol to symbol in an  arbitrary
manner, while  the legitimate
receiver accesses the output of the first channel, i.e., $W_t$  in the pair
$(\mathsf{W}_t,{V}_t)$, and the wiretapper observes the output of the second
 channel, i.e.,  ${V}_t$ in the pair $(\mathsf{W}_t,{V}_t)$, respectively.\end{definition}

\begin{definition}
Let $\mathbf{A}$ be a finite set. Let
 $H$ and $H'$ be finite-dimensional
complex Hilbert spaces. Let
  $\theta$ $:=$ $\{1,2, \cdots\}$ be an index set. For every $t \in \theta$  let $W_{t}$   be a classical-quantum channel
$P(\mathbf{A}) \rightarrow \mathcal{S}(H)$ and ${V}_t$ be a classical-quantum channel $P(\mathbf{A})
\rightarrow \mathcal{S}(H')$. We call the set of the  classical-quantum
channel pairs  $\{(W_t,{V}_t): t \in \theta\}$ an \bf arbitrarily
varying classical-quantum wiretap channel\it, when the
state $t$ varies from symbol to symbol in an  arbitrary
manner, while  the legitimate
receiver accesses the output of the first channel, i.e., $W_t$  in the pair
$(W_t,{V}_t)$, and the wiretapper observes the output of the second
 channel, i.e.,  ${V}_t$ in the pair $(W_t,{V}_t)$, respectively.\end{definition}

\begin{definition}
 An $(n, J_n)$   (deterministic)  code $\mathcal{C}$ for the
arbitrarily
varying classical-quantum wiretap channel $\{(W_t,{V}_t): t \in \theta\}$
consists of a stochastic encoder $E$ : $\{
1,\cdots ,J_n\} \rightarrow P({\mathbf{A}}^n)$, 
$j\rightarrow E(\cdot|j)$,
 specified by
a matrix of conditional probabilities $E(\cdot|\cdot)$, and
 a collection of positive-semidefinite operators $\left\{D_j: j\in \{ 1,\cdots ,J_n\}\right\}$
on ${H}^{\otimes n}$,
which is a partition of the identity, i.e. $\sum_{j=1}^{J_n} D_j =
\mathrm{id}_{{H}^{\otimes n}}$. We call these  operators the decoder operators.
\end{definition}
A code is created by the sender and the legal receiver before the
message transmission starts. The sender uses the encoder to encode the
message that he wants to send, while the legal receiver uses the
decoder operators on the channel output to decode the message.

\begin{remark}
 An $(n, J_n)$   deterministic code $\mathcal{C}$ with deterministic encoder
consists of a family of $n$-length strings of symbols $\left(c_j\right)_{j\in
\{1,\cdots ,J_n\}} \in \left({\mathbf{A}}^n\right)^{J_n}$   and
 a collection of positive-semidefinite operators $\left\{D_j: j\in \{ 1,\cdots ,J_n\}\right\}$
on ${H}^{\otimes n}$
which is a partition of the identity.

 The deterministic encoder is a special case of the
 stochastic encoder when we require that for every $j\in
\{1,\cdots ,J_n\}$, there is a sequence $a^n\in {\mathbf{A}}^n$  chosen with probability $1$.
 The standard technique for  message transmission over a  channel and
robust message transmission over an arbitrarily
varying  channel
 is to use the deterministic encoder
(cf.  \cite{Ahl/Bj/Bo/No} and \cite{Bo/No}).
 However, we use  the stochastic encoder, since it is  a tool
 for  secure message transmission over wiretap channels (cf. \cite{Bl/Ca} and \cite{Ahl/Bli}).
\label{detvsran}
\end{remark}

\begin{definition}
A non-negative number $R$ is an achievable   secrecy
rate for the arbitrarily varying classical-quantum wiretap channel
$\{(W_t,{V}_t): t \in \theta\}$ if for every $\epsilon>0$, $\delta>0$,
$\zeta>0$ and sufficiently large $n$ there exist an  $(n, J_n)$
code $\mathcal{C} = \bigl(E, \{D_j^n : j = 1,\cdots J_n\}\bigr)$  such that $\frac{\log
J_n}{n}
> R-\delta$, and
\begin{equation} \max_{t^n \in \theta^n} P_e(\mathcal{C}, t^n) < \epsilon\text{ ,}\label{annian1}\end{equation}
\begin{equation}\max_{t^n\in\theta^n}
\chi\left(R_{uni};Z_{t^n}\right) < \zeta\text{
,}\label{b40}\end{equation} where $R_{uni}$ is the uniform
distribution on $\{1,\cdots J_n\}$. Here
$P_e(\mathcal{C}, t^n)$ (the average probability of the decoding error of a
deterministic code $\mathcal{C}$, when the channel state  of the
arbitrarily varying classical-quantum wiretap channel $\{(W_t,{V}_t): t \in \theta\}$  is $t^n =
(t_1, t_2, \cdots , t_n)$), is defined as \[ P_e(\mathcal{C}, t^n) := 1- \frac{1}{J_n} \sum_{j=1}^{J_n}
\mathrm{tr}(W_{t^n}(E(~|j))D_j)\text{ ,}\] 
$Z_{t^n}=\Bigl\{{V}_{t^n}(E(~|i)):$ 
$i\in\{1,\cdots,J_n\}\Bigr\}$ 
 is the set of the resulting
quantum state at the output of the wiretap channel when the channel
state of $\{(W_t,{V}_t): t \in \theta\}$ is $t^n$.

 The supremum on achievable secrecy rate  for the   
$\{(W_t,{V}_t): t \in \theta\}$ is called 
the secrecy capacity of $\{({W}_t,{V}_t): t \in \theta\}$
denoted by  $C_s(\{({W}_t,{V}_t): t \in \theta\})$.
\label{defofrate}
\end{definition}

\begin{remark} A weaker and widely used
 security criterion is obtained if we replace
(\ref{b40})  with $\max_{t \in \theta}\frac{1}{n}
\chi\left(R_{uni};Z_{t^n}\right)$ $<$ $\zeta$.
\label{remsc}\end{remark}

\begin{remark}When we defined  $W_{t}$ as
$ \mathbf{A}\rightarrow\mathcal{S}(H)$ then
$P_e(\mathcal{C}, t^n)$ is defined as $1-$ $\frac{1}{J_n} \sum_{j=1}^{J_n}  \sum_{a^n \in {\mathbf{A}}^n}$
$E(a^n|j)\mathrm{tr}(W_{t^n}(a^n)D_j)$.

When deterministic encoder is used then
$P_e(\mathcal{C}, t^n)$ is defined as $1-$ $\frac{1}{J_n} \sum_{j=1}^{J_n} $
$\mathrm{tr}(W_{t^n}(c_j)D_j)$.
\end{remark}

\begin{definition}
We denote  the set of $(n, J_n)$ deterministic  codes by $\Lambda$.
A non-negative number $R$ is an achievable \bf secrecy rate \it for the
arbitrarily varying classical-quantum wiretap channel
$\{(W_t,{V}_t): t \in \theta\}$ \bf under randomness  assisted  coding \it if  for
every
 $\delta>0$, $\zeta>0$, and  $\epsilon>0$, if $n$ is sufficiently large,
there is an s a distribution $G$ on
$\left(\Lambda,\sigma\right)$ such that
$\frac{\log J_n}{n} > R-\delta$, and
\[ \max_{t^n\in\theta^n}\int_{\Lambda}P_{e}(\mathcal{C}^{\gamma},t^n)d
G(\gamma) < \epsilon\text{ ,}\]
\[\max_{t^n\in\theta^n} \int_{\Lambda}
\chi\left(R_{uni},Z_{\mathcal{C}^{\gamma},t^n}\right)dG(\gamma) < \zeta\text{
.}\] 
Here  $\sigma$ is a sigma-algebra so
chosen such that the functions $\gamma \rightarrow P_e(\mathcal{C}^{\gamma},
t^n)$ and  $\gamma \rightarrow \chi
\left(R_{uni};Z_{\mathcal{C}^{\gamma},t^n}\right)$ are both $G$-measurable
with respect to $\sigma$ for every $t^n\in\theta^n$

A non-negative number $R$ is an achievable \bf secrecy rate \it for the
arbitrarily varying classical-quantum wiretap channel
$\{(W_t,{V}_t): t \in \theta\}$ \bf under non-secure randomness  assisted  coding \it if  for
every
 $\delta>0$, $\zeta>0$, and  $\epsilon>0$, if $n$ is sufficiently large,
there is an s a distribution $G$ on
$\left(\Lambda,\sigma\right)$ such that
$\frac{\log J_n}{n} > R-\delta$, and
\[ \int_{\Lambda}\max_{t^n\in\theta^n}P_{e}(\mathcal{C}^{\gamma},t^n)d
G(\gamma) < \epsilon\text{ ,}\]
\[ \int_{\Lambda}\max_{t^n\in\theta^n}
\chi\left(R_{uni},Z_{\mathcal{C}^{\gamma},t^n}\right)dG(\gamma) < \zeta\text{
.}\] 
Here  $\sigma$ is a sigma-algebra so
chosen such that the functions $\gamma \rightarrow \max_{t^n\in\theta^n}P_e(\mathcal{C}^{\gamma},
t^n)$ and  $\gamma \rightarrow \max_{t^n\in\theta^n}\chi
\left(R_{uni};Z_{\mathcal{C}^{\gamma},t^n}\right)$ are both $G$-measurable
with respect to $\sigma$.

 The supremum on achievable secrecy rate  for the   
$\{(\mathsf{W}_t,{V}_t): t \in \theta\}$ under randomness  assisted  coding
is called 
the randomness  assisted secrecy capacity of $\{(\mathsf{W}_t,{V}_t): t \in \theta\}$
denoted by  $C_s(\{(\mathsf{W}_t,{V}_t): t \in \theta\},r)$.
The supremum on achievable secrecy rate  for the   
$\{(\mathsf{W}_t,{V}_t): t \in \theta\}$ under non-secure randomness  assisted  coding
is called 
the non-secure randomness  assisted secrecy capacity of $\{(\mathsf{W}_t,{V}_t): t \in \theta\}$
denoted by  $C_s(\{(\mathsf{W}_t,{V}_t): t \in \theta\},r_{ns})$.
\label{wdtsonjdc} 
\end{definition}

\begin{definition}
 An $(n, J_n)$    code $\mathcal{C}$ for the
 classic arbitrarily varying quantum wiretap channel $\{(\mathsf{W}_t,{V}_t): t \in \theta\}$
consists of a stochastic encoder $E$ : $\{
1,\cdots ,J_n\} \rightarrow P({\mathbf{A}}^n)$, 
$j\rightarrow E(\cdot|j)$,
 specified by
a matrix of conditional probabilities $E(\cdot|\cdot)$, and
  a collection of mutually disjoint sets $\left\{D_j
\subset \mathbf{B}^n: j\in \{ 1,\dots ,J_n\}\right\}$ (decoding sets).
\end{definition}

\begin{definition}
A non-negative number $R$ is an achievable  secrecy
rate for the  classic arbitrarily varying quantum wiretap channel 
$\{(\mathsf{W}_t,{V}_t): t \in \theta\}$ if for every $\epsilon>0$, $\delta>0$,
$\zeta>0$ and sufficiently large $n$ there exist an  $(n, J_n)$
code $\mathcal{C} = \bigl(E, \{D_j : j = 1,\cdots J_n\}\bigr)$  such that $\frac{\log
J_n}{n}
> R-\delta$, and
\begin{equation} \label{b3t} \frac{1}{J_n}\max_{t \in
\theta} \sum_{j=1}^{J_n} 
\mathsf{W}_{t}^{n}(D_j^c|E(\cdot |j))\leq  \varepsilon\text{ ,}\end{equation} and
\begin{equation}\max_{t^n\in\theta^n}
\chi\left(R_{uni};Z_{t^n}\right) < \zeta\text{ .}\label{b40t}\end{equation} 

 The supremum on achievable secrecy rate  for the   
$\{(\mathsf{W}_t,{V}_t): t \in \theta\}$ is called 
the secrecy capacity of $\{(\mathsf{W}_t,{V}_t): t \in \theta\}$
denoted by  $C_s(\{(\mathsf{W}_t,{V}_t): t \in \theta\})$.
\label{defofrate2t}
\end{definition}

\begin{definition}A non-negative number $R$ is an achievable secrecy rate for the
arbitrarily varying classical-quantum wiretap channel
$\{(W_t,{V}_t): t \in \theta\}$  under common randomness assisted  quantum coding 
using an amount $g_n$
of secret common randomness, where $g_n$ is a non-negative number depending on $n$, if  for
every
 $\delta>0$, $\zeta>0$, $\epsilon>0$, and   sufficiently large $n$,
there is a set of $(n, J_n)$   codes $(\{\mathcal{C}^{\gamma}:\gamma\in \Gamma_n\})$ 
 such that
$\frac{1}{n}\log |\Gamma_n|=g_n$,
$\frac{\log J_n}{n} > R-\delta$, and
\[ \max_{t^n\in\theta^n} \frac{1}{2^{ng_n}} \sum_{\gamma=1}^{2^{ng_n}}P_{e}(\mathcal{C}^{\gamma},t^n) < \epsilon\text{ ,}\]
\[\max_{t^n\in\theta^n} \chi\left(R_{uni},Z_{t^n}\right) < \zeta\text{ 
,}\] where $R_{uni}$ is the uniform
distribution on $\{1,\cdots J_n\}$.

Unlike in \cite{Bo/Ca/De} 
and \cite{Bo/Ca/De2} 
 we require that the randomness  to be secure
against eavesdropping here.

  The
supremum on achievable secrecy rate   under  random   assisted  quantum
coding using an amount $g_n$
of common randomness of $\{(W_t,{V}_t): t \in \theta\}$ is called the
secret random   assisted   secrecy capacity of
$\{(W_t,{V}_t): t \in \theta\}$ using an amount $g_n$
of common randomness, denoted by
$C_{key}(\{(W_t,{V}_t): t \in \theta\};g_n)$.\end{definition}

\begin{definition}

We say  super-activation occurs to
two  arbitrarily varying classical-quantum wiretap  channels
$\{({W}_t,{V}_t): t \in \theta\}$ and 
 $\{({W'}_t,{V'}_t): t \in \theta\}$ when
the following hold:
\[C_s(\{({W}_t,{V}_{t}): t\in \theta\})=0\text{ ,}\]
\[C_s(\{({W'}_t,{V'}_t): t\in \theta\})=0\text{ ,}\]
and
\[C_s(\{W_t\otimes {W'}_{t'},{V}_t\otimes {V'}_{t'}: t, t'  \in \theta\})>0\text{ .}\]

\end{definition}

\begin{remark}
For  super-activation we do not require the
strong code concept.
\end{remark}

\section{Strong Code Concept}

\subsection{Classic Arbitrarily Varying Quantum Wiretap Channel}
\label{CAVQW}

At first we determine a capacity
formula for a mixed channel model, i.e.
the secrecy capacity of   classic arbitrarily varying quantum wiretap channel.
This formula
will be used for
our  result for   secrecy capacity of
 arbitrarily
varying classical-quantum wiretap  channels using secretly sent common randomness.

\begin{theorem}
	Let $\{(\grave{W}_{t}, V_{t}): t\in \theta\}$ be a
classic arbitrarily varying quantum wiretap 
channel.
When $\{\grave{W}_{t} : t\in \theta\}$
is not symmetrizable, 
then
\[C_s(\{(\grave{W}_{t}, V_{t}): t\in \theta\})= \lim_{n\rightarrow \infty} \frac{1}{n} \max_{U\rightarrow A \rightarrow \{B_q^{n},Z_{t^n}:q,t_n\}}
\min_{q \in P({\theta})}  I(p_U,\grave{B}_q^n) -\max_{t^{n}\in \theta^{n}}\chi (p_U,Z_{t^{n}})\text{ .}\]
Here $\grave{B}_t$ are the resulting classical random variables at the output of the
legitimate receiver's channels and $Z_{t^n}$ are the resulting quantum states at the output of wiretap channels. The maximum is taken over all random
variables  that satisfy the Markov chain relationships:
$U\rightarrow A \rightarrow \grave{B}_q Z_{t}$ for every $\grave{B}_q \in Conv((\grave{B}_t)_{t\in \theta})$ and
$ t\in \theta$.  $A$ is here a random
variable taking values on $\mathbf{A}$, $U$ a random
variable taking values on some finite set $\mathbf{U}$
with probability  distribution $p_U$.
\label{lgwtvttitbaca}\end{theorem}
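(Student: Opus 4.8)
The plan is to prove the two bounds separately; the converse is routine and the direct part is where the non-symmetrizability hypothesis enters.

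\emph{Converse.} Fix any $(n,J_n)$ code $\mathcal C$ with $\max_{t^n}P_e(\mathcal C,t^n)<\epsilon$ and $\max_{t^n}\chi(R_{uni};Z_{t^n})<\zeta$, let $J$ denote the uniform message and $A$ the codeword it induces on $\mathbf A^n$ through the stochastic encoder, so that $U:=J$ satisfies the required Markov relation $U\to A\to(\grave B_q,Z_t)$ automatically, since $\grave B_q^n$ and $Z_{t^n}$ are outputs of channels applied to $A$. For each fixed $q\in P(\theta)$ a jammer drawing $t^n$ i.i.d.\ from $q$ turns the legal link into the memoryless channel $\grave W_q^{\otimes n}$ with average error below $\epsilon$, so Fano's inequality gives $H(J\mid\grave B_q^n)\le 1+\epsilon\log J_n$, hence $I(J;\grave B_q^n)>(1-\epsilon)\log J_n-1$. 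Subtracting $\chi(J;Z_{t^n})<\zeta$, which holds for every $t^n$, yields $I(J;\grave B_q^n)-\chi(J;Z_{t^n})>(1-\epsilon)\log J_n-1-\zeta$ for all $q$ and all $t^n$ at once; minimising over $q$ and maximising over $t^n$ on the left, then dividing by $n$ and letting $n\to\infty$, $\epsilon,\delta,\zeta\to 0$, gives the ``$\le$'' direction.

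\emph{Achievability, first stage: the common-randomness assisted secrecy capacity equals the formula.} Fix a block length $\ell$, a distribution $p_U$ on a finite set $\mathbf U$ and a prefix channel $\mathbf U\to\mathbf A^\ell$. Following the method of \cite{Bo/Ca/De2}, I would draw a random wiretap codebook of $U$-sequences at total rate about $\tfrac1\ell\min_q I(p_U;\grave B_q^\ell)$, with the randomisation (junk) rate set just above $\tfrac1\ell\max_{t^\ell}\chi(p_U;Z_{t^\ell})$, leaving the message rate at $\tfrac1\ell[\min_q I(p_U;\grave B_q^\ell)-\max_{t^\ell}\chi(p_U;Z_{t^\ell})]$. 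Reliability for the legal receiver against \emph{every} jamming sequence is the random-coding bound for the arbitrarily varying channel $\{\grave W_t\}$ (random-code capacity $\min_q I(\cdot;\grave W_q)$, no symmetrizability needed), obtained by a union bound over the $|\theta|^{n}$ state sequences, each error probability being doubly-exponentially small. Secrecy is a soft-covering (channel-resolvability) estimate for the classical-quantum channels $\{V_{t^n}\}$: the randomisation inside a message bin makes $Z_{t^n}$ nearly message-independent, again made uniform over all $t^n$ by a union bound. Optimising over $p_U$, the prefix and $\ell$ and regularising, the common-randomness assisted secrecy capacity equals the right-hand side of the theorem.

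\emph{Achievability, second stage: derandomisation under the strong criterion.} Here non-symmetrizability of $\{\grave W_t\}$ is used. First apply Ahlswede's elimination technique (\cite{Ahl1}) to retain only $\mu=O(n^2)$ deterministic codes $\{\mathcal C^{1},\dots,\mathcal C^{\mu}\}$, each still meeting both the reliability and the secrecy requirement for all $t^n$; the residual shared randomness is the choice of $\gamma\in\{1,\dots,\mu\}$, i.e.\ $O(\log n)$ bits. Since $\{\grave W_t\}$ is not symmetrizable, \cite{Cs/Na} furnishes a deterministic code on a prefix block of sublinear length $n'=O(\log n)$ conveying this index reliably against every jamming sequence; because $n'/n\to 0$ it costs no rate. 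The concatenated deterministic code is secure in \emph{every} part, as the strong criterion demands: the sender picks $\gamma$ uniformly and independently of the message $J$, so the eavesdropper's observation $Z^{\mathrm{pre}}_{t^{n'}}$ of the prefix block is a function of $\gamma$ alone and carries no information about $J$; introducing $\gamma$ as an auxiliary and using monotonicity of the Holevo quantity, $\chi(J;Z^{\mathrm{pre}}_{t^{n'}}\otimes Z^{\mathrm{main}}_{t^{n-n'}})\le\tfrac1\mu\sum_{\gamma=1}^{\mu}\chi(J;Z^{\mathrm{main}}_{t^{n-n'}})<\zeta$, the last bound holding code by code. This is precisely the step that the two-part construction of \cite{Bo/Ca/De} and \cite{Bo/Ca/De2} could not perform, since there the synchronising part was unprotected against the jammer and declared public.

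\emph{Main obstacle.} The heaviest technical work is the first stage — producing a \emph{single} codebook that simultaneously satisfies the Csisz\'ar--Narayan-type reliability bound for $\{\grave W_t\}$ after prefixing and the soft-covering bound for $\{V_t\}$ uniformly over the exponentially many jamming sequences $t^n$ — but this follows the template already used in \cite{Bo/Ca/De2}. The genuinely new point, and the conceptual crux, is the third stage: one must verify that distributing the $O(\log n)$ bits of common randomness over a vanishing fraction of the block can be done robustly against \emph{every} jammer (exactly what non-symmetrizability buys, via \cite{Cs/Na}) \emph{and} without leaking message information (bought by the prefix content being message-independent), so that the strict inequality $\chi(R_{uni};Z_{t^n})<\zeta$, rather than its $\tfrac1n$-weakened form, holds for the final deterministic code.
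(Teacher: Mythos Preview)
Your argument is essentially correct, but it takes a noticeably more indirect route than the paper's own proof, and in doing so it partly conflates this theorem with the harder Theorem~\ref{pdwumsfsov} that follows.

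The paper exploits the fact that in this mixed model the \emph{legal} link $\{\grave W_t\}$ is classical. By \cite{Cs/Na}, for a non-symmetrizable classical AVC a random i.i.d.\ codebook is, with probability close to one, already a good \emph{deterministic} code (there exist decoding sets with small error for every $t^n$). The paper therefore runs a single random-coding experiment: draw the wiretap codebook $\{X_{j,l}\}$, apply the covering lemma to control $\chi(R_{uni};Z_{t^n})$ uniformly in $t^n$, invoke \cite{Cs/Na} to control the legal error uniformly in $t^n$, and conclude that some realization satisfies both. No Ahlswede elimination, no prefix, no second stage is needed. This is why the theorem is proved first and then used as an auxiliary tool in Theorem~\ref{pdwumsfsov}, where the legal channel is quantum and the one-shot Csisz\'ar--Narayan argument is unavailable.

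Your three-stage plan (CR-assisted code, elimination to $O(n^2)$ codes, prefix transmission of the index) is the machinery the paper deploys for Theorem~\ref{pdwumsfsov}, not here; it works but is overkill. One small slip: after Ahlswede elimination the retained codes do \emph{not} ``each'' meet the reliability requirement for all $t^n$ --- only their average does, which is precisely why the index $\gamma$ must still be communicated. Your argument actually uses this correctly (you transmit $\gamma$ and average), so the slip is only in the wording. Your secrecy bound for the concatenated code is fine: since $\gamma\perp J$ and, conditional on $\gamma$, the prefix output is independent of $J$ and tensored with the main output, one gets $\chi(J;Z^{\mathrm{pre}}\otimes Z^{\mathrm{main}})\le I(J;Z^{\mathrm{pre}},Z^{\mathrm{main}}\mid\gamma)=\tfrac1\mu\sum_\gamma\chi(J;Z^{\mathrm{main}}_\gamma)<\zeta$.
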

\begin{proof} We fix a probability distribution
$p\in P(\mathbf{A})$
and choose an arbitrarily positive $\delta$.
Let 
\[J_n =  \lfloor  2^{\inf_{\grave{B}_q \in Conv((\grave{B}_s)_{s\in \theta})}I(p;\grave{B}_q^n)-\max_{t^n\in \theta^n}\chi(p;Z_{t^n})-n\delta} \rfloor  \text{ ,}\]
and
\[L_{n} =  	\lceil 2^{\max_{t^n\in \theta^n} (\chi(p;Z_{t^n})+n\delta)}  	\text{ .}\rceil\]
Let $p' (x^n):= \begin{cases} \frac{p^{ n}(x^n)}{p^{ n}
(\mathcal{T}^n_{p,\delta})} & \text{if } x^n \in \mathcal{T}^n_{p,\delta}\text{ ;}\\
0 & \text{else .} \end{cases}$\\
 and $X^n := \{X_{j,l}\}_{j \in
\{1, \cdots, J_n\}, l \in \{1, \cdots, L_{n}\}}$  be a family of
random matrices whose components are i.i.d. according to
$p'$.

We fix a $t^n$ $\in\theta^n$ and
 define a map  $\mathsf{V}:$ $P(\theta) \times P(\mathbf{A})$
$\rightarrow$ $\mathcal{S}(H)$ by
\[\mathsf{V}(t,p):= V_{t}(p)\text{ .}\]
For  $t\in \theta$ we define $q(t)$ $:=$ $\frac{N(t\mid t^n)}{n}$.  $t^n$ 
is trivially a typical sequence of $q$.
For  $p\in  P(\mathbf{A})$,  $\mathsf{V}$
defines a map
$\mathsf{V}(\cdot,p):$
$P(\theta) $
$\rightarrow$ $\mathcal{S}(H)$.

 Let
\[Q_{t^n}(x^n) := \Pi_{\mathsf{V}(\cdot,p), \alpha }(t^n)\Pi_{\mathsf{V},\alpha}(t^n, x^n)
 \cdot V_{{t^n}}(x^n) \cdot \Pi_{\mathsf{V},\alpha}(t^n,x^n)\Pi_{\mathsf{V}(\cdot,p), \alpha }(t^n)\text{ .}\]\vspace{0.2cm}

\begin{lemma} [Gentle Operator, cf.  \cite{Win} and \cite{Og/Na}] \label{eq_4a}  %%\marginpar{32}
Let $\rho$ be a  quantum state and $X$ be a positive operator with $X  \leq
\mathrm{I}$  and $1 - \mathrm{tr}(\rho X)  \leq
\lambda \leq1$. Then
\begin{equation} \| \rho -\sqrt{X}\rho \sqrt{X}\|_1 \leq \sqrt{2\lambda}\text{ .} \label{tenderoper}
\end{equation}\end{lemma}

The Gentle Operator was first introduced in \cite{Win}, where it has been
shown that $\| \rho -\sqrt{X}\rho \sqrt{X}\|_1 \leq
\sqrt{8\lambda}$. In \cite{Og/Na}, the result of \cite{Win}
has been improved, and (\ref{tenderoper}) has been proved.

In view of the fact
that $\Pi_{\mathsf{V}(\cdot,p), \alpha }(t^n)$
 and $\Pi_{\mathsf{V},\alpha}(t^n, x^n)$ are
both projection matrices,
by   (\ref{te1}), (\ref{te7}),  and   Lemma \ref{eq_4a}
for any $t$ and $x^n$, it holds that
\begin{equation} \label{eq_42w}\|Q_{{t^n}}(x^n)-V_{{t^n}}(x^n)\|_1 \leq
\sqrt{2^{-n\beta(\alpha)}+2^{-n\beta(\alpha)''} }\text{ .}\end{equation}\vspace{0.2cm}

The following Lemma has been showed 
in \cite{Bo/Ca/De2}:

\begin{lemma}[Alternative Covering Lemma] \label{cov3}
Let  $\mathcal{V}$ be a finite-dimensional Hilbert space. 
Let $\mathsf{M}$ and  $\mathsf{M}'\subset \mathsf{M}$ be  finite sets. Suppose we have
an ensemble  $\{\rho_m:m\in\mathsf{M} \} \subset \mathcal{S}(\mathcal{V})$   of quantum states.
Let  $p$ be
 a   probability distribution on $\mathsf{M}$.

Suppose a total subspace projector $\Pi$ and codeword subspace projectors $\{\Pi_{m}: m\in \mathsf{M}\}$  exist  which project
onto subspaces of the Hilbert space in which the states exist, and  for all $ m\in \mathsf{M}'$ there are positive constants
$\epsilon\in
]0,1[$, $D$, $d$ such that  the following conditions hold:
\[\mathrm{tr}(\rho_m\Pi)\geq 1-\epsilon\text{ ,}\] \[\mathrm{tr}(\rho_m\Pi_m)\geq 1-\epsilon\text{ ,}\]
\[\mathrm{tr}(\Pi)\leq D\text{ ,}\] 
and 
\[\Pi_m\rho_m\Pi_m\leq\frac{1}{d}\Pi_m\text{ .}\]

We denote $\omega:=\sum_{m\in\mathsf{M}'} p(m) \rho_m$.
Notice that $\omega$ is not a
density operator in general.
We define a  sequence of  i.i.d.~random variables  $X_1,
\dots ,X_L$, taking values  
in $\{\rho_m:m\in\mathsf{M} \}$. If $L\gg \frac{d}{D}$ then

\begin{align}&
Pr \biggl( \lVert L^{-1}
\sum_{i=1}^{L}\Pi\cdot\Pi_{X_i}\cdot X_i \cdot\Pi_{X_i} \cdot\Pi
- \omega \rVert_1\allowdisplaybreaks\notag\\
&\leq    1-p(\mathsf{M}')+  4\sqrt{1-p(\mathsf{M}')}+ 
42 \sqrt[8]{\epsilon} 
 \biggr)\allowdisplaybreaks\notag\\
 & \geq 1- 2D \exp
\left( -p(\mathsf{M}')\frac{\epsilon^3Ld}{2\ln 2D} \right) \text{ .}\label{pbllsilpcpx32w}
\end{align}
\end{lemma}

By
(\ref{te2})  we have
\begin{align}&\mathrm{tr}(\Pi_{\mathsf{V}(\cdot,p), \alpha }(t^n))\allowdisplaybreaks\notag\\
&\leq 2^{n (S(\mathsf{V}(\cdot,p)\mid q) +\delta(\alpha))}\allowdisplaybreaks\notag\\
&= 2^{n (\sum_{t}q(t)\mathsf{V}(t,p) +\delta(\alpha))}\allowdisplaybreaks\notag\\
&= 2^{n (\sum_{t}q(t)S(V_{t}(p)) +\delta(\alpha))}\text{ .}\label{eq_5a22w}\end{align}

Furthermore, for all $x^n$ it holds that

\begin{align}&
\Pi_{\mathsf{V},\alpha}(t^n, x^n)
 V_{{t^n}}(x^n)  \Pi_{\mathsf{V},\alpha}(t^n,x^n)\allowdisplaybreaks\notag\\
&\leq 2^{-n(S(\mathsf{V}|r) + \delta(\alpha)')}\Pi_{\mathsf{V},\alpha}(t^n,x^n)\allowdisplaybreaks\notag\\
&= 2^{-n(\sum_{t,x}r(t,x)S(\mathsf{V}(t,x)) + \delta(\alpha)')}\Pi_{\mathsf{V},\alpha}(t^n,x^n)
\text{ .}\label{eq_5a2w}
\end{align} 
\vspace{0.15cm}

We define 
\[\theta':=\left\{t\in\theta:nq(t)\geq \sqrt{n}\right\}\text{ .}\]
By properties of classical typical set (cf. \cite{Win} ) there is a positive $\hat{\beta}(\alpha)$
such that
\begin{equation} \underset{p'}{Pr} \biggl(x^n\in \biggl\{x^n\in{\mathbf{A}}^n:
(x_{\mathtt{I}_t})\in \mathsf{T}^{nq(t)}_{p,\delta} ~ \forall 
t\in\theta'\biggr\}\biggr)\geq \left(1-2^{-\sqrt{n}\hat{\beta}(\alpha)}\right)^{|\theta|}\geq 1-2^{-\sqrt{n}
\frac{1}{2}\hat{\beta}(\alpha)}\text{ ,}\label{uppbxm2w}\end{equation} 
where $\mathtt{I}_t$ $:=$ $\{i\in\{1,\cdots,n\}: t_i = t\}$ 
is an indicator set that selects the indices $i$ in the sequence $t^n$
$=$ $(t_1,\cdots,t_n)$.

We denote the set $\{x^n: (x_{\mathtt{I}_t})\in \mathsf{T}^{nq(t)}_{p,\delta} ~ \forall 
t\in\theta'\}$ $\subset {\mathbf{A}}^n$ by $\mathsf{M}_{t^n}$.
For all $x^n\in\mathsf{M}_{t^n}$, if $n$ is sufficiently
large,
we have
\begin{align}&\left\vert\sum_{t,x}r(t,x)S(\mathsf{V}(t,x))-
\sum_{t}q(t)S(V_{t}|p)\right\vert\allowdisplaybreaks\notag\\
&\leq\left\vert\sum_{t\in\theta',x}r(t,x)S(\mathsf{V}(t,x))-
\sum_{t\in\theta'}q(t)S(V_{t}|p)\right\vert\allowdisplaybreaks\notag\\ 
&+ \left\vert\sum_{t\notin\theta',x}r(t,x)S(\mathsf{V}(t,x))-
\sum_{t\notin\theta'}q(t)S(V_{t}|p)\right\vert\allowdisplaybreaks\notag\\
&\leq\sum_{t\in\theta'}\left\vert\sum_{x}r(t,x)S(\mathsf{V}(t,x))-
q(t)S(V_{t}|p)\right\vert+2|\theta|\frac{1}{\sqrt{n}}C \notag\\ 
&\leq 2|\theta|\frac{\delta}{n}C  +2|\theta|\frac{1}{\sqrt{n}}C
\text{ ,}\label{uppbxmn22w} \end{align} where $C:=\max_{t\in\theta}\max_{x\in \mathbf{A}}(S(\mathsf{V}(t,x))+
S(V_{t}|p))$.
We set $\Theta_{t^n}:=  \sum_{x^n \in \mathsf{M}_{t^n}}
{p}(x^n) Q_{{t^n}}(x^n)$. For given $z^n\in\mathsf{M}_{t^n}$ and ${t^n}\in\theta^n$, $\langle
z^n|\Theta_{t^n}|z^n\rangle$  is the expected  value of $\langle z^n|
Q_{{t^n}}(x^n) |z^n\rangle$ under the condition $x^n \in
\mathsf{M}_{t^n}$.\vspace{0.15cm}

We choose a positive $\bar{\beta}(\alpha)$
such that  $\bar{\beta}(\alpha)\leq \min(2^{-n\beta(\alpha)},
2^{-n\beta(\alpha)'})$, and 
set
$\epsilon := 2^{-n\bar{\beta}(\alpha)}$.
In view of (\ref{eq_5a2w}) 
we now apply Lemma \ref{cov3}, where
we consider the set $\mathsf{M}_{t^n}$ $\subset {\mathbf{A}}^n$:
If  $n$ is sufficiently large, for all $j$ 
 we have

\begin{align}&Pr \left( \lVert \sum_{l=1}^{L_{n}} \frac{1}{L_{n}} Q_{t^n}(X_{j,l}) -
\Theta_{t^n} \rVert_1 >  2^{-\sqrt{n}
\frac{1}{8}\hat{\beta}(\alpha)}  +40 \sqrt[8]{\epsilon} 
\right)\allowdisplaybreaks\notag\\
&\leq 2^{n (\sum_{t,x}r(t,x)S(\mathsf{V}(t,x)) +\delta(\alpha))} \notag\\
&\cdot \exp \left( -L_{n}\frac{\epsilon^3}{2\ln  2}  (1-2^{-\sqrt{n}\frac{1}{2}\hat{\beta}(\alpha)})
 \cdot 2^{n(\sum_{t}q(t)S(V_{t}(p))-\sum_{t}q(t)S(V_{t}|p)) +\delta(\alpha)+ \delta(\alpha)' + 2|\theta|\frac{\delta}{n}C
+ 2|\theta|\frac{1}{\sqrt{n}}C}
 \right)\allowdisplaybreaks\notag\\
&= 2^{n (\sum_{t,x}r(t,x)S(\mathsf{V}(t,x)) +\delta(\alpha)} \notag\\
&\cdot \exp \left( -L_{n}\frac{\epsilon^3}{2\ln  2}  \cdot (1-2^{-\sqrt{n}\frac{1}{2}\hat{\beta}(\alpha)})
2^{n (-\sum_{t}q(t)\chi(p;Z_{t}) + \delta(\alpha)+\delta(\alpha)'+ 2|\theta|\frac{\delta}{n}C
+ 2|\theta|\frac{1}{\sqrt{n}}C )} \right)\text{
.}\label{eq_5b+2w}\end{align}  The  equality  holds since
$S(V_{t}(p))-S(V_{t}|p)  =\chi(p;Z_{t})$. 

Furthermore,
 \begin{align}&Pr \left( \lVert \sum_{l=1}^{L_{n}} \frac{1}{L_{n}} Q_{t^n}(X_{j,l}) -
\Theta_{t^n} \rVert_1 >   2^{-\sqrt{n}
\frac{1}{8}\hat{\beta}(\alpha)}  +40 \sqrt[8]{\epsilon} ~ \forall t^n ~ \forall j \right)\allowdisplaybreaks\notag\\
&\leq J_n {|\theta|}^n 2^{n (\sum_{t,x}r(t,x)S(\mathsf{V}(t,x)) +\delta(\alpha)} \notag\\
&\cdot \exp \left( -L_{n}\frac{\epsilon^3}{2\ln  2} (1-2^{-\sqrt{n}\frac{1}{2}\hat{\beta}(\alpha)})
2^{n (-\sum_{t}q(t)\chi(p;Z_{t}) + \delta(\alpha)+\delta(\alpha)'+ 2|\theta|\frac{\delta}{n}C
+ 2|\theta|\frac{1}{\sqrt{n}}C )} \right)\text{
.}\label{eq_5b2w}\end{align} 
\vspace{0.2cm}

Let $\phi_{t}^{j}$ be the quantum state at the output of wiretapper's channel when
the channel state is $t$ and $j$ has
been sent.
We have
\begin{align*}&\sum_{t\in \theta}q(t)\chi\left(p;Z_{t}\right)- \chi\left(p;\sum_{t}q(t)Z_{t}\right)\\
&= \sum_{t\in \theta}q(t)S\left(\sum_{j=1}^{J_n}\frac{1}{J_n} \phi_{t}^{j}\right)
-\sum_{t\in \theta}\sum_{j=1}^{J_n}q(t)\frac{1}{J_n}S\left( \phi_{t}^{j}\right)\\
&-S\left(\frac{1}{J_n}\sum_{t\in \theta}\sum_{j=1}^{J_n} q(t)\phi_{t}^{j}\right)
+\sum_{j=1}^{J_n}\frac{1}{J_n}S\left(\sum_{t\in \theta}q(t) \phi_{t}^{j}\right)\text{ .}
\end{align*}
 Let $H^{\mathfrak{T}}$ be a $\left|\theta\right|$-dimensional Hilbert space
spanned by an orthonormal basis $\{|t\rangle : t = 1, \cdots, \left|\theta\right|\}$. 
Let $H^{\mathfrak{J}}$ be a $J_n$-dimensional Hilbert space spanned by an orthonormal basis 
$\{|j\rangle : j = 1, \cdots, J_n\}$. 
We define
\[\varphi^{\mathfrak{J}\mathfrak{T}H^{n}}:=\frac{1}{J_n}\sum_{j=1}^{J_n}\sum_{t\in \theta}q(t)
|j\rangle\langle j|\otimes|t\rangle\langle t|\otimes
 \phi_{t}^{j}\text{ .}\]
We have
\[\varphi^{\mathfrak{J}H^{n}}=\mathrm{tr}_{\mathfrak{T}}\left(\varphi^{\mathfrak{J}\mathfrak{T}H^{n}}\right)=
\frac{1}{J_n}\sum_{j=1}^{J_n}\sum_{t\in \theta}q(t)
|j\rangle\langle j|\otimes
 \phi_{t}^{j}\text{ ;}\]
\[\varphi^{\mathfrak{T}H^{n}}=\mathrm{tr}_{\mathfrak{J}}\left(\varphi^{\mathfrak{J}\mathfrak{T}H^{n}}\right)=
\frac{1}{J_n}\sum_{j=1}^{J_n}\sum_{t\in \theta}q(t)|t\rangle\langle t|\otimes
 \phi_{t}^{j}\text{ ;}\]
\[\varphi^{H^{n}}=\mathrm{tr}_\mathfrak{J}{\mathfrak{T}}\left(\varphi^{\mathfrak{J}\mathfrak{T}H^{n}}\right)=
\frac{1}{J_n}\sum_{j=1}^{J_n}\sum_{t\in \theta}q(t)
 \phi_{t}^{j}\text{ .}\]
Thus,
\[S(\varphi^{\mathfrak{J}H^{n}}) = H(R_{uni})+ \frac{1}{J_n}\sum_{j=1}^{J_n}S\left(\sum_{t\in \theta}q(t)
\phi_{t}^{j}\right)\text{ ;}\]
\[S(\varphi^{\mathfrak{T}H^{n}})=H(Y_q)+  \sum_{t\in \theta}q(t)S\left(\frac{1}{J_n}\sum_{j=1}^{J_n}
 \phi_{t}^{j}\right)\text{ ;}\]
\[S(\varphi^{\mathfrak{J}\mathfrak{T}H^{n}})= H(R_{uni})+H(Y_q)+ \frac{1}{J_n}\sum_{j=1}^{J_n}\sum_{t\in \theta}q(t)S\left(
 \phi_{t}^{j}\right)\text{ ,}\]
where $Y_q$ is a random variable on $\theta$ with distribution $q(t)$.

 By strong subadditivity of von Neumann entropy, it holds that $S(\varphi^{\mathfrak{J}H^{n}}) + S(\varphi^{\mathfrak{T}H^{n}})$
$\geq$ $S(\varphi^{H^{n}})+S(\varphi^{\mathfrak{j}\mathfrak{T}H^{n}})$, therefore
\begin{equation}\sum_{t}q(t)\chi\left(p;Z_{t}\right)- \chi\left(p;\sum_{t}q(t)Z_{t}\right)
\geq 0\text{ .}\label{stqtclpztr2w}\end{equation}\vspace{0.2cm}

For an arbitrary $\zeta$,
we define $L_{n} = \lceil 2^{\max_{t^n}\chi(p;Z_{t^n})+n\zeta} \rceil$,
and choose a suitable $\alpha$, $\bar{\beta}(\alpha)$, and 
sufficiently large $n$ such that
$6\bar{\beta}(\alpha)$ + $2\delta(\alpha)$ $+2\delta(\alpha)'$ $+ 2|\theta|\frac{\delta}{n}C$
$+ 2|\theta|\frac{1}{\sqrt{n}}C$ $\leq\zeta$.
By (\ref{stqtclpztr2w}), if $n$ is  sufficiently large, we have
$L_{n} \geq \lceil 2^{n(\sum_{t}q(t)\chi(p;Z_{t})+\zeta)} \rceil$
and
\begin{align*}&
L_{n}\frac{\epsilon^3}{2\ln  2} (1-2^{-\sqrt{n}\frac{1}{2}\hat{\beta}(\alpha)})
2^{n (-\sum_{t}q(t)\chi(p;Z_{t}) + \delta(\alpha)+\delta(\alpha)'+ 2|\theta|\frac{\delta}{n}C
+ 2|\theta|\frac{1}{\sqrt{n}}C )} > 2^{\frac{1}{2}n\zeta}\text{ .}
\end{align*} 

When $n$ is sufficiently large 
for any positive $\vartheta$ it holds that
\begin{align*}&J_n{|\theta|}^n 2^{n (\sum_{t,x}r(t,x)S(\mathsf{V}(t,x)) +\delta(\alpha)} \exp(- 2^{\frac{1}{4}n\zeta})\\
&\leq 2^{-n\vartheta}
\end{align*}
and 
\[ 2^{-\sqrt{n}
\frac{1}{8}\hat{\beta}(\alpha)}  +40 \sqrt[8]{\epsilon}  \leq 2^{-\sqrt{n}
\frac{1}{16}\hat{\beta}(\alpha)} \text{ .}\]

Thus for sufficiently large $n$ 
we have
\begin{align} &Pr \biggl( \lVert \sum_{l=1}^{L_{n}} \frac{1}{L_{n}} Q_{t^n}(X_{j,l}) -\Theta_{t^n}
\rVert_1 \leq  2^{-\sqrt{n}
\frac{1}{16}\hat{\beta}(\alpha)} \text{ }\forall t^n  ~ \forall j 
 \biggr)\allowdisplaybreaks\notag\\
&\geq 1- 2^{n\vartheta}
\label{eq_6blslln2w}
\end{align} for any positive $\varrho $.
\vspace{0.3cm}

In 
\cite{Cs/Na} , the following was
 shown: Let $\{{X}_{j,l}\}_{j \in \{1, \dots, J_n\}, l \in
\{1, \dots, L_{n}\}}$ be a family of random variables taking value
according to $p'$. We assume $ \{\grave{W}_{t}: t\in \theta\}$ is not symmetrizable.
If $n$ is sufficiently large,
and if $J_n\cdot L_{n} \leq 2^{inf_{\grave{B}_q \in Conv((\grave{B}_t)_{t\in \theta})}I(p;\grave{B}_q)-\mu)}$
for an  arbitrary positive $\mu$
there exists a set  of mutually disjoint sets 
$\{D_{j,l}: j \in
\{1, \cdots, J_n\}, l \in \{1, \cdots, L_{n}\}$ on $\mathbf{B}^n$ such that
 for all positive $\epsilon$,  $t^n \in \theta^n$,
 and $j \in \{1,\dots,J_n\}$
\begin{equation}\underset{p'}{Pr} \left[\mathrm{tr}\left(\grave{W}_{t^n}({X}_{j,l})D_{j,l}\right) \geq 
1-2^{-{n}\beta}\right] > 1-2^{-n\gamma } \text{
,}\label{qnocsig4'2w}\end{equation}

By (\ref{eq_6blslln2w}) and (\ref{qnocsig4'2w}), when $ \{\grave{W}_{t}: t\in \theta\}$ is not symmetrizable
  we can find with positive probability a realization $x_{j,l}$ of
$X_{j,l}$ 
and
 set  of mutually disjoint sets 
$\{D_{j,l}: j \in
\{1, \cdots, J_n\}, l \in \{1, \cdots, L_{n}\}$ such that
 for all positive $\epsilon$,  $t^n \in \theta^n$,
 and $j \in \{1,\dots,J_n\}$
\begin{equation} \label{b3lnrimti} \max_{t \in
\theta} \frac{1}{J_n} \sum_{j=1}^{J_n}
\grave{W}_{t^n}(D_{j,l}^c| x_{j,l})\leq \epsilon \text{ ,}\end{equation}
 and 
\begin{equation} 
\lVert \sum_{l=1}^{L_{n}} \frac{1}{L_{n}} Q_{t^n}(x_{j,l}) -\Theta_{t^n}
\rVert_1\leq \epsilon\label{elrttljx}
\text{ .}\end{equation}\vspace{0.2cm}
Here we define $E(x^n\mid j)=\frac{1}{L_{n}}$ if $x^n\in \{x_{j,l} :l\in\{1,\dots,L_{n}\}$,
%and $E(x^n\mid j)=0$ if $x\not\in \{x_{j,l} :l\in\{1,\dots,L_{n}\}$.

We choose a suitable positive  $\alpha$.
For any given $j' \in \{1, \dots, J_n\}$,
by (\ref{eq_42w}) and (\ref{elrttljx}) we have
\begin{align}
& \left\|\sum_{l=1}^{L_n}\frac{1}{L_n}V_{t^n}(x_{j',l})-\Theta_{t^n}\right\|_1\notag\\
 &\leq\|\sum_{l=1}^{L_n}\frac{1}{L_n}V_{t^n}(x_{j',l})-\sum_{l=1}^{L_n}\frac{1}{L_n}Q_{t^n}(x_{j',l})\|_1\notag\\
 &\qquad+\|\sum_{l=1}^{L_n}\frac{1}{L_n}Q_{t^n}(x_{j',l})-\Theta_{t^n}\|_1\notag\\
&\leq 2^{-\sqrt{n}
\frac{1}{16}\hat{\beta}(\alpha)}+\sqrt{2^{-\frac{1}{2}n\beta(\alpha)}+2^{-\frac{1}{2}n\beta(\alpha)''}}\notag\\
 &\leq 2^{-\sqrt{n}
\frac{1}{32}\hat{\beta}(\alpha)}\text{ .}\label{eq_82w}
 \end{align}
Notice that by (\ref{eq_82w}) we have
$\|\frac{1}{J_n\cdot L_n}\sum_{j=1}^{J_n}\sum_{l=1}^{L_n}V_{t^n}(x_{j,l})-\Theta_{t^n}\|_1$
$\leq  2^{-\sqrt{n}
\frac{1}{32}\hat{\beta}(\alpha)}$.\vspace{0.2cm}

\begin{lemma}[Fannes-Audenaert  Ineq.,
 cf. \cite{Fa}, \cite{Au}]\label{eq_9}  %%\marginpar{34}
Let $\Phi$ and $\Psi$ be two  quantum states in a
$d$-dimensional complex Hilbert space and
$\|\Phi-\Psi\| \leq \mu < \frac{1}{e}$, then
\begin{equation} |S(\Phi)-S(\Psi)| \leq \mu \log (d-1)
+ h(\mu)\text{ ,}\label{faaudin}\end{equation}where $h(\nu) := -\nu \log \nu - (1- \nu) \log (1-\nu)$
for $\nu\in [0,1]$.\end{lemma}\vspace{0.15cm}

 The Fannes  Inequality was first introduced in \cite{Fa}, where it has been
shown that $|S(\mathfrak{X})-S(\mathfrak{Y})| \leq \mu \log d - \mu
\log \mu $. In \cite{Au} the result of \cite{Fa} has been
improved, and (\ref{faaudin}) has been proved.\vspace{0.15cm}

By Lemma \ref{eq_9}  and  the inequality (\ref{eq_82w}),  for a uniformly distributed
 random variable $R_{uni}$ with values  in
$\{1,\dots,J_n\}$ a and $t^n\in\theta^n$, we have
\begin{align}& \chi(R_{uni};Z_{t^n}) \allowdisplaybreaks\notag\\
&=S\left( \sum_{j=1}^{J_n} \frac{1}{J_{n}} \sum_{l=1}^{L_{n}}
\frac{1}{L_{n}} V_{t^n}(x_{j,l})\right) \allowdisplaybreaks\notag\\
&- \sum_{j=1}^{J_n}
\frac{1}{J_{n}}S\left(\sum_{l=1}^{L_{n}}
 \frac{1}{L_{n}}V_{t^n}(\pi(x_{j,l}))\right)\allowdisplaybreaks\notag\\
  &\leq \left\vert  S\left( \sum_{j=1}^{J_n} \frac{1}{J_{n}} \sum_{l=1}^{L_{n}}
\frac{1}{L_{n}} V_{t^n}(x_{j,l})\right)-S\left( \Theta_{t^n} \right) \right\vert \allowdisplaybreaks\notag\\
&+\left\vert  S(\Theta_{t^n} )- \sum_{j=1}^{J_n} \frac{1}{J_{n}}S\left( \sum_{l=1}^{L_{n}}
\frac{1}{L_{n}} V_{t^n}(x_{j,l})\right)\right\vert \allowdisplaybreaks\notag\\
&\leq 2\cdot  2^{-\sqrt{n}
\frac{1}{32}\hat{\beta}(\alpha)} \log (nd-1) + 
2h( 2^{-\sqrt{n}
\frac{1}{32}\hat{\beta}(\alpha)}) 
\text{ .}\label{wehaveb2w}\end{align}

By (\ref{wehaveb2w}), for any positive $\lambda$ if $n$ is sufficiently
large, we have
\begin{equation} \label{clruztr2w}
\chi\left(R_{uni};Z_{t^n}\right)\leq \lambda
\text{ .}\end{equation}

We define $E(x^n\mid j)=\begin{cases}
 \frac{1}{L_{n}} \text{ if }x^n\in \{x_{j,l} :l\in\{1,\dots,L_{n}\}\}\text{ ;}\\
    0 \text{ if }  x\not\in \{x_{j,l} :l\in\{1,\dots,L_{n}\}\}     \text{ .}     \end{cases}
$ and $D_j := \bigcup_{l} D_{j,l}$.
By (\ref{elrttljx}) and (\ref{clruztr2w}), when $ \{\grave{W}_{t}: t\in \theta\}$ is not symmetrizable
the 
deterministic secrecy capacity
of 
$\{(\grave{W}_{t}, V_{t}): t\in \theta\}$
is larger or  equal to 

\begin{equation}
\lim_{n\rightarrow \infty} \frac{1}{n}\left(\inf_{B_q \in Conv((B_s)_{s\in \theta})}\chi(p;\grave{B}_{q}^n)
-\max_{t^n\in \theta^n}\chi(p;Z_{t^n})\right)-\varepsilon
	\text{ .}\label{hatceque3}\end{equation}

The achievability of
$\lim_{n\rightarrow \infty}$ $\frac{1}{n}$
$\max_{U\rightarrow A \rightarrow \{B_q,Z_{t}:q,t\}}$ 
$(\inf_{\grave{B}_q \in Conv((\grave{B}_s)_{s\in \theta})}I(p_U;\grave{B}_q)$ $-$ $ \max_{t^n\in \theta^n}\chi(p_U;Z_{t^n}))$
and the converse are shown by the standard arguments  (cf. \cite{De} and \cite{Bj/Bo}).
\end{proof}

\subsection{The Secure Message Transmission  With Strong Code Concept}
\label{TSMwSCR}

Now we are going to prove our main result:
the  secrecy capacity formula for
arbitrarily varying  classical-quantum wiretap  channels
 using  secretly sent common randomness.
In our previous papers  \cite{Bo/Ca/De} and \cite{Bo/Ca/De2}
we determined the
secrecy capacity formula for
arbitrarily varying  classical-quantum wiretap  channels.
Our strategy is to
build a two-part code word,
which consists of a non-secure code word and a common randomness-assisted
secure code word. The non-secure one is used to create the common randomness
for the sender and the legal receiver. The common randomness-assisted secure
code word is used to transmit the message to the legal receiver.

Now we
build a code in such a way that the transmission of both
the message and the randomization is secure.
Since the technique introduced
in \cite{Cs/Na} for classical  channels cannot be easily transferred into quantum channels,
our idea is to
construct a classical arbitrarily varying quantum wiretap 
channel and apply Theorem \ref{lgwtvttitbaca}.
In \cite{Ahl/Bli}
a technique has been introduced to
construct a classical arbitrarily varying channel by means of
an arbitrarily varying classical-quantum channel.
However this technique does not work
for classical arbitrarily varying quantum wiretap 
channel since it cannot provide security.
We have to find a more sophisticated way.

\begin{theorem}	

 If the
arbitrarily varying  classical-quantum channel $\{W_t : t \in \theta\}$
is not symmetrizable, then
	\begin{equation}C_{s}(\{(W_t,{V}_t): t \in \theta\}) =
\lim_{n\rightarrow \infty} \frac{1}{n}\max_{U\rightarrow A \rightarrow \{B_q,Z_{t}:q,t\}}
	\Bigl(\inf_{B_q \in Conv((B_t)_{t\in \theta})}\chi(p_U;B_q^{\otimes n})-\max_{t^n\in \theta^n}\chi(p_U;Z_{t^n})\Bigr)
	\label{bibqicbtt} \text{ ,}\end{equation}
when we use a two-part code word
that both parts are secure.
	
Here $B_t$ are the resulting  quantum states at the output of the
legitimate receiver's channels. $Z_{t^n}$ are the resulting  quantum states  at
the output of wiretap channels.
 The maximum is taken over all random
variables  that satisfy the Markov chain relationships:
$U\rightarrow A \rightarrow B_q Z_{t}$ for every $B_q \in Conv((B_t)_{t\in \theta})$ and
$ t\in \theta$.  $A$ is here a random
variable taking values on $\mathbf{A}$, $U$ a random
variable taking values on some finite set $\mathbf{U}$
with probability  distribution $p_U$.
\label{pdwumsfsov}
\end{theorem}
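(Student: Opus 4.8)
The plan is to prove the two inequalities of (\ref{bibqicbtt}) separately; the converse is immediate, and essentially all the work goes into the achievability, for which I would use a two-part code whose first block securely distributes a short key and whose second block is a derandomised common-randomness-assisted code.

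\emph{Converse.} Every $(n,J_n)$ deterministic code --- in particular every code of the two-part strong form --- is a common-randomness-assisted code in the sense of Definition~\ref{wdtsonjdc} with degenerate randomness, and for such a code the conditions there collapse to (\ref{annian1}) and (\ref{b40}). Hence $C_s(\{(W_t,V_t):t\in\theta\})$ is at most the common-randomness-assisted secrecy capacity, which by \cite{Bo/Ca/De2} equals the right-hand side of (\ref{bibqicbtt}). So it remains to show that every $R$ below the right-hand side of (\ref{bibqicbtt}) is achievable with codes of the strong form.

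\emph{Reduction to securely sharing a short key.} If the right-hand side of (\ref{bibqicbtt}) is $0$ there is nothing to prove, so assume it is positive and fix $R$ slightly below it. By \cite{Bo/Ca/De2} there is a sequence of common-randomness-assisted codes of rate $R$ whose error and Holevo leakage (maximal in the state sequence, average in the randomness $\gamma\in\Gamma_n$) vanish; applying Ahlswede's elimination/derandomisation technique (as in \cite{Ahl/Bj/Bo/No}) to both the error and the leakage term, one may assume $|\Gamma_n|$ polynomial in $n$, so that $g_n:=\frac1n\log|\Gamma_n|\to0$. It then suffices to (i) convey the value $\gamma$ to the legitimate receiver \emph{securely} over a first block of length $n_1=o(n)$ and (ii) run the code $\mathcal C^\gamma$ over the remaining $n-n_1$ uses. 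The wiretapper's total output for message $j$ is $\frac1{|\Gamma_n|}\sum_\gamma\sigma_\gamma\otimes\tau_{\gamma,j}$, with $\sigma_\gamma$ (the Part-1 output) depending on $\gamma$ only and $\tau_{\gamma,j}$ the Part-2 output; since $\gamma$ is chosen independently of the message, adjoining a register carrying $\gamma$ and applying the data-processing inequality bounds the wiretapper's Holevo information about the message by $\frac1{|\Gamma_n|}\sum_\gamma\chi(R_{uni};\{\tau_{\gamma,j}\}_j)$, i.e. by the average leakage of the codes $\mathcal C^\gamma$, which is below $\zeta$; the decoding error is at most the Part-1 error plus the average Part-2 error. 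Thus (i)+(ii) yields a fully secure two-part code of rate tending to $R$.

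\emph{The core step: securing the first block via Theorem~\ref{lgwtvttitbaca}.} To realise (i) I would build a classic arbitrarily varying quantum wiretap channel out of $\{(W_t,V_t):t\in\theta\}$ and invoke Theorem~\ref{lgwtvttitbaca}. Fix a large inner block length $\ell$. Since $\{W_t:t\in\theta\}$ is not symmetrizable there is, by \cite{Ahl/Bj/Bo/No} and \cite{Bo/No}, a deterministic code $(c_1,\dots,c_{K_\ell};D_1,\dots,D_{K_\ell})$ for $\{W_{t^\ell}:t^\ell\in\theta^\ell\}$ of rate $\frac1\ell\log K_\ell$ bounded away from $0$ and maximal error $\varepsilon_\ell\to0$, whose codewords may be chosen all of one type $p^\ast\in P(\mathbf A)$. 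Set $\grave W_{t^\ell}(j'\mid j):=\mathrm{tr}(W_{t^\ell}(c_j)D_{j'})$ and $V'_{t^\ell}(j):=V_{t^\ell}(c_j)$; then $\{(\grave W_{t^\ell},V'_{t^\ell}):t^\ell\in\theta^\ell\}$ is a classic arbitrarily varying quantum wiretap channel, and its classical part is \emph{not} symmetrizable: because $\grave W_{t^\ell}(j\mid j)>1-\varepsilon_\ell$ for every $t^\ell$ and $j$, a symmetrizing kernel would force the state-averaged channel to map $j\mapsto j'$ and $j'\mapsto j$ each with probability $>1-\varepsilon_\ell$, which is impossible once $\varepsilon_\ell<\frac12$. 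By Theorem~\ref{lgwtvttitbaca} this channel therefore has a \emph{deterministic} secure code, which --- reading its super-symbols as $\ell$-blocks of the original channel --- is a deterministic secure code for $\{(W_t,V_t):t\in\theta\}$. Its secrecy rate per original channel use is strictly positive: for a Markov chain $U\to J\to\grave B_q$ with $\grave W$ essentially noiseless one has $\min_{q\in P(\theta^\ell)}I(p_U;\grave B_q)\ge I(U;J)$ up to a term vanishing with $\ell$, while the stochastic encoder of Theorem~\ref{lgwtvttitbaca} randomises over the inner messages so as to keep $\max_{t^\ell\in\theta^\ell}\chi(p_U;\{V_{t^\ell}(c_J)\}_J)$ small; optimising over $U$, $p^\ast$ and $\ell$ in fact pushes $\frac1\ell\big(\min_q I(p_U;\grave B_q)-\max_{t^\ell}\chi(p_U;\{V_{t^\ell}(c_J)\}_J)\big)$ up to the right-hand side of (\ref{bibqicbtt}), which is positive by assumption. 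For (i) a strictly positive secure rate already suffices, since $\gamma$ carries only $\log|\Gamma_n|=O(\log n)$ bits and can be sent over $n_1=O(\log n)=o(n)$ channel uses.

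\emph{Expected main obstacle.} The delicate point is exactly the construction just described: the naive embedding of \cite{Ahl/Bli} turns the legitimate channel classical but hands the wiretapper the full ensemble $\{V_{t^\ell}(c_j)\}_j$, destroying secrecy, so one must choose the type $p^\ast$ of the inner codewords and the way the outer stochastic encoder of Theorem~\ref{lgwtvttitbaca} randomises over inner messages carefully enough that the Holevo leakage stays strictly below the classical information reaching the legitimate receiver --- equivalently, that the constructed classic arbitrarily varying quantum wiretap channel inherits a positive secrecy capacity from non-symmetrizability of $\{W_t:t\in\theta\}$ together with positivity of the right-hand side of (\ref{bibqicbtt}). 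Granting this, the combination of the previous two paragraphs, with the optimisation over $p^\ast$ and the prefix variable $U$, yields every rate below the right-hand side of (\ref{bibqicbtt}) with a fully secure two-part code, and together with the converse this proves the theorem.
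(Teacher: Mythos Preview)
Your overall architecture matches the paper's: trivial converse via the common-randomness formula of \cite{Bo/Ca/De2}; achievability via a two-part code where Part~1 securely distributes $O(\log n)$ bits of randomness and Part~2 is a derandomised common-randomness-assisted code; and the reduction of Part~1 to Theorem~\ref{lgwtvttitbaca} by manufacturing a classic arbitrarily varying quantum wiretap channel out of $\{(W_t,V_t)\}$. But the step you yourself flag as the ``expected main obstacle'' is a genuine gap you do not close. You take a \emph{generic transmission} code $(c_j;D_j)$ for $\{W_t\}$ as the inner code and define $\grave W_{t^\ell}(j'|j)=\mathrm{tr}(W_{t^\ell}(c_j)D_{j'})$, $V'_{t^\ell}(j)=V_{t^\ell}(c_j)$. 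Your near-identity argument for non-symmetrizability of $\grave W$ is clean (modulo needing \emph{maximal}-error $\varepsilon_\ell\to0$ over all $t^\ell$, which non-symmetrizability alone does not give for $K_\ell>2$), but on the secrecy side you only say $p^\ast$ and the outer randomisation should be chosen ``carefully enough'' and then write ``Granting this''. There is no reason a good transmission code makes $\max_{t^\ell}\chi(p_U;\{V_{t^\ell}(c_J)\})$ strictly smaller than $\min_q I(p_U;\grave B_q)$: the inner codewords are fixed sequences, not i.i.d.\ samples, and positivity of the right-hand side of \eqref{bibqicbtt} for i.i.d.\ inputs of type $p^\ast$ does not transfer to an arbitrary codebook of that type.

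The paper closes exactly this gap by taking the inner code to be \emph{already secure}. From Theorem~1 of \cite{Bo/Ca/De2} it extracts an $(m,2)$ code that has small error for every averaged legal channel \emph{and} satisfies $\lVert\check V_{t^m}(E^m(\cdot|j))-\Theta_{t^m}\rVert<2^{-\sqrt m\zeta}$ for all $j$ and $t^m$. Fannes--Audenaert then forces $\max_{t^{mn'}}\chi(R';\check Z_{t^{mn'}})\to0$, while the small error forces $\min_q I(R';\grave B_q)\ge\tfrac12-o(1)$; together these make the constructed classic AVQWC have strictly positive secrecy capacity, so Theorem~\ref{lgwtvttitbaca} applies. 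Non-symmetrizability of $\grave W$ is obtained differently too: instead of near-identity, the paper augments the two decoding operators $D_1^m,D_2^m$ with a spanning family $L_1,\dots,L_{d^{2m}+1}$ of positive operators as additional output symbols, so that symmetrizability of $\grave W$ would force symmetrizability of the original quantum channel $\check W$. There is also a preliminary step you omit: the pre-channel $T_U$ in the Markov chain $U\to A$ can render $\{W_t\circ T_U\}$ symmetrizable even when $\{W_t\}$ is not, and the paper first replaces $T_U^n$ by $\tilde T_U^n=T_U^{n-1}\times\mathrm{id}_{\mathbf A}$ to restore non-symmetrizability before carrying out the construction.
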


\begin{proof}

Since the security of both the message and the randomization 
implies the security of only the message,
 the secrecy
capacity of $\{(W_t,{V}_t): t \in \theta\}$ for 
the message and the randomization transmission cannot exceed
$\lim_{n\rightarrow \infty}$ $\frac{1}{n}$ $\max_{U\rightarrow A \rightarrow \{B_q^{\otimes n},Z_{t^n}:q,t_n\}}$
$\inf_{B_q \in Conv((B_t)_{t\in \theta})}\chi(p_U;B_q^{\otimes n})$ $-\max_{t^n\in \theta^n}\chi(p_U;Z_{t^n})$,
which is the   secrecy
capacity of $\{(W_t,{V}_t): t \in \theta\}$ for 
only the message transmission (cf. \cite{Bo/Ca/De2}).
Thus the converse is trivial.

For the achievability we at first
assume that $\{W_t : t \in \theta\}$
is  symmetrizable. In this case
the secrecy
capacity of $\{(W_t,{V}_t): t \in \theta\}$ for 
only the message transmission is zero and there
is nothing to prove.
At next we assume that for all $p\in P(\mathbf{U})$ we have
$\lim_{n\rightarrow \infty}$ $\frac{1}{n}$ $\max_{U\rightarrow A \rightarrow \{B_q^{\otimes n},Z_{t^n}:q,t_n\}}$
$\Bigl(\inf_{B_q \in Conv((B_t)_{t\in \theta})}\chi(p_U;B_q^{\otimes n})$ $-\max_{t^n\in \theta^n}\chi(p_U;Z_{t^n})\Bigr)$
$\leq 0$. In this case
the secrecy
capacity of $\{(W_t,{V}_t): t \in \theta\}$ for 
only the message transmission is also zero and again there
is nothing to prove.
Now we assume that $\{W_t : t \in \theta\}$
is not symmetrizable and
 for all sufficiently large $n$ and a positive $\epsilon$
\begin{equation}\frac{1}{n} \max_{U\rightarrow A \rightarrow \{B_q^{\otimes n},Z_{t^n}:q,t_n\}}\left(\inf_{B_q \in Conv((B_t)_{t\in \theta})}
\chi(p_U;B_q^{\otimes n})-\max_{t^n\in \theta^n}\chi(p;Z_{t^n})\right)>2\epsilon
\label{finmpibqic1}\end{equation}
holds.
\vspace{0.2cm}

\it i) Construction of a Non-Symmetrizable Channel with Random Pre-coding \rm\vspace{0.2cm}

We consider the Markov chain
 $U\rightarrow A \rightarrow \{B_q,Z_{t}:q,t\}$, where
we define the classical channel $P(\mathbf{U})\rightarrow P(\mathbf{A})$ by $T_U$.
It may
happen that $\{{W}_t\circ T_U: t \in \theta\}$ is symmetrizable although $\{{W}_t: t \in \theta\}$ is not
symmetrizable, as following example shows:

We assume that  $\{{W}_t: t \in \theta\}$ :$P(\mathbf{A})\rightarrow \mathcal{S}(H)$
is not symmetrizable but there is a subset $\mathbf{A}'\subset \mathbf{A}$ such that
 $\{{W}_t: t \in \theta\}$ limited on $A'$ is symmetrizable. We choose
a $T_U$ such that  for every  $u\in \mathbf{U}$ there is $a\in \mathbf{A}'$
such that $T_U(a\mid u) = 1$, and 
$T_U(a\mid u) = 0$ for all $a\in \mathbf{A}\setminus \mathbf{A}'$ and $u\in \mathbf{U}$.
It is clear that $\{{W}_t\circ T_U: t \in \theta\}$ is symmetrizable
(cf. also \cite{Wi/No/Bo2} for an example for classical channels).

We now use a  technique introduced in
\cite{Wi/No/Bo2} to overcome this:
Without loss of generality we may 
assume that 
$|\mathbf{A}|=|\mathbf{U}|$ by optimization.
Furthermore
without loss of generality we may 
assume that 
$\mathbf{A}=\mathbf{U}$ by relabeling the symbols.
For every $n>1\in \mathbb{N}$ we define a new classical channel
$\tilde{T}_U^{n}$ $:P(\mathbf{A}^{n}) \rightarrow P(\mathbf{A}^{n})$
by setting $\tilde{T}_U^{n}$ $:= T_U^{n-1} \times id_{\mathbf{A}}$, i.e.,
\[\tilde{T}_U^{n}(a_1,\cdots,a_{n-1},a_{n}) := T_U^n(a_1,\cdots,a_{n-1})\cdot \delta_{a_{n}}\text{ .}\]
We have 
\begin{equation}{W}_{t^{n}}\circ \tilde{T}_U^{n}(a_1,\cdots,a_{n-1},a_{n}) = {W}_{t^{n-1}}\circ T_U^n(a_1,\cdots,a_{n-1}){W}_{t_n}(a_{n})\text{ ,}\end{equation}
where for $t^n=(t_1,\cdots,t_{n-1},t_n)$ we denote $t^{n-1}:=(t_1,\cdots,t_{n-1})$.
Since $\{{W}_t: t \in \theta\}$ is not
symmetrizable,
 $\{{W}_{t^{n}}\circ  \tilde{T}_U^{n}: t \in \theta\}$ is not symmetrizable.
Furthermore, for any positive $\delta$
sufficiently large $n$ 
we have 
\[ C(\{{W}_{t^{n}}\circ T_U^{n}: t \in \theta\}; r) \leq C(\{{W}_{t^{n-1}}\circ T_U^{n-1}: t \in \theta\}; r)+\delta
\leq  C(\{{W}_{t^{n}}\circ \tilde{T}_U^{n-1}: t \in \theta\}; r) +\delta\text{ .}\]

For every $n>1$ and $t^{n}\in {\theta}^{n}$ we define 
$\check{W}_{t^{n}}$ $:P(\mathbf{U}^{n})\rightarrow P(\mathbf{A}^{n})$ by
\[\check{W}_{t^{n}}:={W}_{t^{n}}\circ  \tilde{T}_U^{n}\text{ .}\]

This shows that, if for a non-symmetrizable channel $\{W_t:t\in\Theta\}$ 
we have (\ref{finmpibqic1}), then 
 \begin{equation}\frac{1}{n} \max_{U\rightarrow A \rightarrow \{\check{B}_q^{\otimes n},Z_{t^n}:q,t_n\}}
\left(\inf_{\check{B}_q \in Conv((\check{B}_t)_{t\in \theta})}
\chi(p_U;\check{B}_q^{\otimes n})-\max_{t^n\in \theta^n}\chi(p;Z_{t^n})\right)>\epsilon
\label{finmpibqic}\end{equation} holds,
where $\check{B}_{t^{n}}$ are the resulting  quantum states at the output of $\check{W}_{t^{n}}$.
\vspace{0.2cm}

\it ii) Definition of a Classical Arbitrarily Varying Channel Which Is  Not Symmetrizable \rm\vspace{0.2cm}

We denote $m:= \log n$  
and define $\check{V}_t :={V}_{t}\circ T_U$ for all
 $t \in \theta$.
Now we
consider the 
arbitrarily varying wiretap classical-quantum channel
$\{(\check{W}_t,\check{V}_t);  t \in \theta\}$.
We choose an arbitrary   $\delta>0$, 
by (\ref{finmpibqic}) if $m$ is sufficiently large
we may assume that for at least one $p\in P(\mathbf{U})$
\[2 \leq 2^{m\epsilon}
\leq \lfloor  2^{m\inf_{\check{B}_q \in Conv((\check{B}_s)_{s\in \theta})}\chi(p;\check{B}_q)-\max_{t^m\in \theta^m}\chi(p;Z_{t^m})-m\delta} \rfloor\text{ .}\]

By Theorem 1 of \cite{Bo/Ca/De2} if $m$ is sufficiently large we can find a 
$(m, 2)$   code $\bigl(E^m , \{D_j^m : j \in \{1, 2\}\}\bigr)$ and  positive  $\lambda$, $\zeta$
such that for some $q\in P(\theta)$ 

	\begin{equation}
1- \frac{1}{2} \sum_{j=1}^{2}
\mathrm{tr}\left( \check{W}_{q}^{\otimes m}( E^m(~|j))D_j^m \right)\geq 1-2^{-m^{1/16}\lambda}
\label{1f1jmspi}\end{equation}
and for all $t^m\in\theta^m$ and $\pi \in \Pi_m$
\begin{align}&\lVert
\check{V}_{t^m}\left(\pi(E^m(~|j))\right) -
\Theta_{t^m} \rVert
< 2^{-\sqrt{m}\zeta}
\label{balf1lnsl1}\end{align}
for a $\Theta_{t^m}\in\mathcal{S}(H^m)$ which is independent of $j$.
Here for $\pi\in\mathsf{S}_m$ we define
its permutation matrix on ${H}^{\otimes m}$ by $P_{\pi}$.

(Notice that $\bigl(E^m , \{D_j^m : j \in \{1, 2\}\}\bigr)$
is a deterministic code for a mixed channel model
called compound-arbitrarily varying wiretap classical-quantum channel
which we introduced in \cite{Bo/Ca/De2}.)
\vspace{0.2cm}

We now combine a technique introduced in
\cite{Ahl/Bli} with the concept of superposition code to
define a set of classical channels.

We choose ${d^m}^2+1$
 Hermitian operators 
$L_i\geq 0$, $i=1, \cdots, {d^m}^2+1$ which span the space of Hermitian operators on
 $H^m$ and fulfill $\sum_{i=1}^{{d^m}^2+1} L_i = id_{H^m}$ by the technique introduced in
\cite{Ahl/Bli}: We choose arbitrarily ${d^m}^2$
 Hermitian operators 
$\bar{L}_i\geq 0$, $i=1, \cdots, {d^m}^2$ which span the space of Hermitian operators on
 $H^m$ and denote the trace of $\sum_{i=1}^{{d^m}^2} \bar{L}_i$ by $\lambda$. Now
we  define $L_i:= \frac{1}{\lambda}\bar{L}_i$ for $i\in\{1, \cdots, {d^m}^2\}$ and
$L_{{d^m}^2+1}:= id_{H^m}-\sum_{i=1}^{{d^m}^2} L_i$.

Now we defined the classical arbitrarily varying  channel $\{\grave{W}_{t^m} : t^m \in \theta^m\}$
$: P(\mathbf{U}^m) \rightarrow P(\{1,\cdots , {d^m}^2+3\})$ by 
\[\grave{W}_{t^m}(i\mid p^m):= \begin{cases}\frac{1}{2}\mathrm{tr}\left(\check{W}_{t^m}(p^m)D_i^m\right) ~\text{ for } i\in \{1,2\}\text{ ;}\\
\frac{1}{2}\mathrm{tr}\left(\check{W}_{t^m}(p^m)L_{i-2}\right)~\text{ for } i= 3,\cdots , {d^m}^2+3 \text{ .}\end{cases}\]

Since $\frac{1}{2}\sum_{j = 1}^{2} D_j^m + \frac{1}{2}\sum_{i=1}^{{d^m}^2+1}L_i = id_{H^m}$ 
we have
\[ \sum_{i=1}^{{d^m}^2+3}  \grave{W}_{t^m}(i\mid p^m)=1\] for all $p^m\in P(\mathbf{U}^m)$. Thus
this definition is valid.\vspace{0.2cm}

When $\{\grave{W}_{t^m} : t^m \in \theta^m\}$
is symmetrizable then
there is a  $\{\tau(\cdot\mid a^m):
 a^m\in \mathbf{U}^m\}$ on $\theta^m$ such that 
\[\sum_{t^m\in\theta^m}\tau(t^m\mid a^m)\grave{W}_{t^m}({a'}^m)
=\sum_{t^m\in\theta^m}\tau(t^m\mid {a'}^m)\grave{W}_{t^m}(a^m)\]
for all $i\in\{1,\cdots, {d^m}^2+3\}$. 
This implies that
\[\frac{1}{2}\sum_{t^m\in\theta^m}\tau(t^m\mid a^m)\mathrm{tr}\left(\check{W}_{t^m}({a'}^m)L_i\right)
=\frac{1}{2}\sum_{t^m\in\theta^m}\tau(t^m\mid {a'}^m)\mathrm{tr}\left(\check{W}_{t^m}(a^m)L_i\right)\]
for all $i\in\{1,\cdots, {d^m}^2+1\}$. 

Since 
$\{L_i: i=1, \cdots, {d^m}^2\}$ span  the space of Hermitian operators on $H^m$  
we have 
\[\sum_{t^m\in\theta^m}\tau(t^m\mid a^m)\check{W}_{t^m}({a'}^m)=\sum_{t^m\in\theta^m}\tau(t^m\mid {a'}^m)\check{W}_{t^m}(a^m)\text{ .}\] 
This is a contradiction to our assumption
that $\{\check{W}_{t^m} : t^m \in \theta^m\}$ is not symmetrizable,
 therefore 
$\{\grave{W}_{t^m} : t^m \in \theta^m\}$
is not symmetrizable.\vspace{0.2cm}

\it iii) The
Deterministic Secrecy Capacity
of 
$\{(\grave{W}_{t^m}, \check{V}_{t^m}): t^m \in \theta^m\}$
Is Positive\rm\vspace{0.2cm}

By (\ref{1f1jmspi}) 
for all  $q\in P(\theta)$ and $j\in\{1,2\}$
we have
\begin{equation}
\grave{W}_{q} (j\mid j) \geq \frac{1}{2} - \frac{1}{2}2^{-m^{1/16}\lambda}\text{ ,}
\label{1f1jmspi1} \end{equation}
and for all   $q\in P(\theta)$  and $j\not= i \in\{1 ,2\}$
\begin{equation}
\grave{W}_{q} (j\mid i) \leq \frac{1}{2}2^{-m^{1/16}\lambda}\text{ .}
\label{1f1jmspi2}\end{equation}

We denote the uniform distribution on 
$\{1, 2\}$ by $R'$. 
For  any positive $\zeta'$ if $m$ is 
sufficiently large
by (\ref{1f1jmspi1}) and (\ref{1f1jmspi2}) for all  $q\in P(\theta)$
we have
\begin{align}&
\min_{q \in P({\theta}^m)} I \left(E^m(\cdot\mid R'), \grave{B}_{q}\right)\notag\\
&> 
(\frac{1}{2}-\frac{1}{2}2^{-m^{1/16}\lambda})\log(\frac{1}{2}-\frac{1}{2}2^{-m^{1/16}\lambda})
-(\frac{1}{2}+\frac{1}{2}2^{-m^{1/16}\lambda})\log(\frac{1}{4}+\frac{1}{4}2^{-m^{1/16}\lambda})-\zeta'\notag\\
&\geq \frac{1}{2}-2\zeta'\text{ ,}\label{mqiptmilem}
\end{align} where $\grave{B}_{q}$ is the 
resulting distribution at the output of $\grave{W}_{q}$.\vspace{0.2cm}

%By (\ref{balf1lnsl1}) for any $n'\in\mathbb{N}$ and $j\in \{1,2\}^{n'}$
%we have
%\[\frac{1}{n'}  \left\lVert
%\check{V}_{t^{mn'}}\left((E^m)^{\otimes n'}(~|j)\right) -\left(
%\Theta_{t^{m}}\right)^{\otimes n'} \right\rVert
%< 2^{-\sqrt{m}\zeta} \text{ .}\]
%\[\frac{1}{n'}  \left\lVert
%\check{V}_{t^{mn'}}\left(\pi((E^m)^{\otimes n}(~|j^{n'}))\right) -\left(P_{\pi}
%\Theta_{\pi(t^{m})}P_{\pi}^{T}\right)^{\otimes n'} \right\rVert
%< 2^{-\sqrt{m}\zeta} \text{ .}\]

Applying the Lemma \ref{eq_9}
and
 (\ref{balf1lnsl1}) if $m$ is 
sufficiently large  for any $n'\in\mathbb{N}$,
 positive $\zeta'$, and
 for all $t^{mn'}$ $=(t^{m}_1,\cdots, t^{m}_{n'})$
$=(t_1,\cdots, t_{m},t_{m+1},\cdots,  t_{2m},t_{2m+1},\cdots, t_{mn'})$  $\in\theta^{mn'}$
we have
\begin{align}\allowdisplaybreaks[2]&\frac{1}{n'}
\max_{ t^{mn'} \in \theta^{mn'}}\chi\left({R'}^{\otimes n'}, \check{Z}_{t^{mn'}}\right)\notag\\
&= \frac{1}{n'} \biggl( S\left(   \frac{1}{2^{n'}} \sum_{j\in \{1,2\}^{n'}} 
\check{V}_{t^{mn'}}(({E^{m}})^{\otimes n'}(\cdot\mid j))\right) 
-     \frac{1}{2^{n'}}\sum_{j\in \{1,2\}^{n'}} S\left(  \check{V}_{t^{mn'}}(({E^{m}})^{\otimes n'}(\cdot\mid j))\right)\biggr)\notag\\
&\leq \frac{1}{n'} \left\vert S\left( \frac{1}{2^{n'}} \sum_{j\in \{1,2\}^{n'}} 
 \check{V}_{t^{mn'}}(({E^{m}})^{\otimes n'}(\cdot\mid j))\right) 
- S\left(  \Theta_{t^{mn'}}\right)  \right\vert\notag\\
&+ \frac{1}{n'} \left\vert  S\left(   \Theta_{t^{mn'}} \right) -
 \frac{1}{2^{n'}} \sum_{j\in \{1,2\}^{n'}} S\left(  \check{V}_{t^{mn'}}(({E^{m}})^{\otimes n'}(\cdot\mid j))\right) \right\vert\notag\\
&= \frac{1}{n'} \left\vert \sum_{i=1}^{n'} \biggl(
 S\left(   \frac{1}{2} \sum_{j\in \{1,2\}}  \check{V}_{t^{m}_{i}}(({E^{m}})(\cdot\mid j))\right) 
- S\left(     \Theta_{t^{m}_{i}}\right)   \biggr) \right\vert\notag\\
&+  \frac{1}{n'}  \left\vert \sum_{i=1}^{n'}  \left( S\left(    \Theta_{t^{m}_{i}} \right) -
\frac{1}{2} \sum_{j\in \{1,2\}}  S\left(  \check{V}_{t^{m}_{i}}(({E^{m}})(\cdot\mid j))\right) \right)\right\vert\notag\\
&\leq 2\cdot 2^{-\sqrt{m}\zeta}\log (d^m-1)+2\cdot h(2^{-\sqrt{m}\zeta})\notag\\
&\leq \zeta'\text{ ,}\label{mtmnitmnc}
\end{align}
where
$\check{Z}_{t^{mn'}}$ is the resulting quantum state at the output of $\check{V}_{t^{mn'}}$.

%Similar to the results of 
%\cite{Wi/No/Bo2} and \cite{Bo/Ca/De2}
%one can show that 
%the enhanced secrecy capacity of
%the compound-arbitrarily varying wiretap
% channel
%$\{(\grave{W}_{q}, V_{t^m}):q\in P(\theta^m), t^m \in \theta^m\}$
%is positive, and thus
%the common randomness
%assisted secrecy capacity
%of the arbitrarily varying wiretap
% channel
%$\{(\grave{W}_{t^m}, V_{t^m}): t^m \in \theta^m\}$
%is positive.
We choose  $\zeta'<  \frac{1}{18}$ and
a sufficiently large $m$ such that
(\ref{mqiptmilem}) and (\ref{mtmnitmnc}) hold.
Sine $\{\grave{W}_{t^m} : t^m \in \theta^m\}$
is not symmetrizable, by Theorem
\ref{lgwtvttitbaca} the
deterministic secrecy capacity
of 
$\{(\grave{W}_{t^m}, \check{V}_{t^m}): t^m \in \theta^m\}$
is  equal to 
\[ \limsup_{n'\rightarrow \infty} \frac{1}{n'}\max_{p\in P(\mathbf{U}^m)}\min_{q \in P({\theta}^m)}  I(p,\grave{B}_q^{n'}) 
-\max_{t^{mn'}\in \theta^{mn'}}\chi (p,\check{Z}_{t^{mn'}})
 \geq \frac{1}{2}- 3\zeta'> \frac{1}{3}\text{ .}\]\vspace{0.2cm}

\it iv) The Secure Transmission of  the Message with a Deterministic Code\rm\vspace{0.2cm}

Since $(\log n)^2 \gg 3 \log (n^3)$
we can build a $((\log n)^2,n^3)$
code $(\tilde{E}^{(\log n)^3},\{\tilde{S}_i^{(\log n)^3}:i\in\{1,\cdots,n^3\}\})$
such that
\begin{align}& \label{b3mstnomn}1- \min_{t^{(\log n)^3} \in \theta^{(\log n)^3}} \min_{i\in\{1,\cdots,n^3\}} 
\grave{W}_{t^{(\log n)^3}}\left(\tilde{S}_i^{(\log n)^3}\mid \tilde{E}^{(\log n)^3} \left(\cdot \mid i\right)\right)\notag\\
\leq  \varepsilon\text{ ,}\end{align}
and

\begin{equation}
\max_{ t^{(\log n)^3} \in \theta^{(\log n)^3}}\left\|\check{V}_{t^{(\log n)^3}}\left( \tilde{E}^{(\log n)^3} \left(\cdot \mid i\right)\right)- \Theta_{t^{(\log n)^3}}\right\|
\leq \varepsilon\label{mt6l2it}
\end{equation}

for a $\Theta_{t^{(\log n)^3}}\in\mathcal{S}(H^{(\log n)^3})$ which is independent of $j$.
%\begin{equation}
%\max_{ t^{(\log n)^3} \in \theta^{(\log n)^3}}\chi\left(R_{n^3}, Z_{t^{(\log n)^3}}\right)
%\leq \varepsilon
%\end{equation}
%for any positive 
%$\varepsilon$. Here $R_{n^3}$ is the 
%uniform distribution on $\{1,\cdots,n^3\}$.

We define $D_j:= L_{j-2}$ for 
$j\in\{3,\cdots,d^{m^2}+3\}$.
For $i\in\{1,\cdots,n^3\}$ we define
\[\tilde{D}_i^{(\log n)^3}:= \frac{1}{2}\sum_{j^m\in \tilde{S}_i^{(\log n)^3}}  D_{j^m} \text{ .}\]
Here for $j^m=(j_1,\cdots,j_m)$ we set $D_{j^m} =D_{j_1}\otimes \cdots \otimes D_{j_m}$.
Since $\sum_{i=1}^{n^3} \tilde{D}_i^{(\log n)^3}$ $=\frac{1}{2}\sum_{j^m\in \{1,\cdots,d^{m^2}+3\}^m} D_{j^m}$ $= id_{H^m}$, 
$\{\tilde{D}_i^{(\log n)^3} : i\in\{1,\cdots,n^3\}\}$ is a valid set of decoding operators.\vspace{0.2cm}

 $(\tilde{E}^{(\log n)^3},\{\tilde{D}_i^{(\log n)^3}:i\in\{1,\cdots,n^3\}\})$
is a $((\log n)^3,n^3)$
code which fulfills
\begin{equation}\min_{ t^{(\log n)^3} \in \theta^{(\log n)^3}}
\frac{1}{n^3}\sum_{i=1}^{n^3}  
\mathrm{tr}(\check{W}_{t^{(\log n)^3}}(\tilde{E}^{(\log n)^3}(\cdot|i))\tilde{D}_i^{(\log n)^3})\geq 1-2^{-n^{1/16}\lambda} 
	\text{ .}\label{f1n3mtwtnr2}\end{equation}\vspace{0.2cm}

\it v) The Secure Transmission of Both the Message and the Randomization Index \rm\vspace{0.2cm}

We choose an arbitrary  positive $\delta$. Let
\[J_n = \frac{1}{n} \left(\max_{U\rightarrow A \rightarrow \{B_q^{n},Z_{t^n}:q,t_n\}} \inf_{B_q \in Conv((B_t)_{t\in \theta})} 
\chi(p_U;B_q^{\otimes n}) - \max_{t^n\in \theta^n}\chi(p_U;Z_{t^n})\right) -\delta\text{ .}\]
By the results of 
\cite{Wi/No/Bo2} if $n$ is sufficiently large there is a $(n, J_n)$
common randomness assisted quantum code
$\left\{\left(\pi \circ E^n, \{P_{\pi} D_j^n P_{\pi}^{T}: j\in\{1,\cdots,J_n\}\}\right): \pi\in\mathsf{S}_n\right\}$,
 a quantum state $\Theta_{t^n}\in\mathcal{S}(H^n)$
%, and positive $\lambda$, $\zeta$, 
such that for  all $t^n\in\theta^n$ 
\begin{align}&
\frac{1}{n!}\frac{1}{J_n}\sum_{\pi\in\mathsf{S}_n} \sum_{j=1}^{J_n}  
\mathrm{tr}\left(\check{W}_{t^n}\left(\pi(E^n(\cdot|j))\right)P_{\pi} D_{\pi(t^n)} P_{\pi}^{T}\right)
\geq 1-2^{-n^{1/16}\lambda} \text{ ,}\end{align}
and for all $t^n\in\theta^n$, $j\in\{1,\cdots,J_n\}$ and all $\pi\in\mathsf{S}_n$ 
\begin{align}&\lVert
 \check{V}_{t^n}\left(\pi(E^n(\cdot|j))\right) -
 P_{\pi}
\Theta_{\pi(t^n)} P_{\pi}^{T} \rVert
< 2^{-\sqrt{n}\zeta}
\label{balf1lnsl12}\end{align}
for a  $\Theta_{t^n}\in\mathcal{S}(H^n)$ which is independent of $j$.

Using  technique in  \cite{Bo/Ca/De} to reduce the amount
of common
randomness if $n$ is sufficiently large
we can find  a set $\{\pi_1,\cdots,\pi_{n^3}\} \subset \mathsf{S}_n$ 
such that
\begin{equation}\max_{t^n\in\theta^n} 
\frac{1}{n^3}\frac{1}{J_n}\sum_{i=1}^{n^3} \sum_{j=1}^{J_n}
\mathrm{tr}\left(\check{W}_{t^n}\left(\pi(E^n(\cdot|j))\right)P_{\pi_i}D_{\pi(t^n)} P_{\pi_i}^{T}\right)\geq 1-2\cdot 2^{-n^{1/16}\lambda} 
	\text{ ,}\label{mtnitn1n3f1jn}\end{equation}
	and
	\begin{align}&\lVert
\check{V}_{t^n}\left(\pi_i(E^n(\cdot|j))\right) -
 P_{\pi_i}
\Theta_{\pi_i(t^n)} P_{\pi_i}^{T} \rVert
<  2^{-\sqrt{n}\zeta}
\label{balf1lnsl13}\end{align}

	Furthermore by the permutation-invariance of $p'$  we also have
$\Theta_{t^n}$ $= P_{\pi}
\Theta_{\pi(t^n)} P_{\pi}^{T}$ for all $\pi\in\mathsf{S}_n$.\vspace{0.2cm}

Now we can construct a $((\log n)^3+n,n^3J_n)$ code
$\bigl(E^{(\log n)^3+n} , \{D_{i,j}^{(\log n)^3+n} : ~ i=1,\cdots,n^3, ~ j = 1,\cdots J_n\}\bigr)$
by
	\begin{equation}E^{(\log n)^3+n}(a^{(\log n)^3+n}\mid i,j) :=\tilde{E}^{(\log n)^3}(a^{(\log n)^3}|i)\cdot
E^n(\pi_i(a^n)|j)
\text{ ,}\end{equation}
for every $a^{{(\log n)^3}+n}= (a^{(\log n)^3},a^{n})\in \mathbf{U}^{{(\log n)^3}+n}$
and 
	\begin{equation}D_{i,j}^{(\log n)^3+n} :=\tilde{D}_i^{(\log n)^3} \otimes (P_{\pi_i}D_j^n P_{\pi_i}^{T})
\text{ .}\end{equation}\vspace{0.2cm}

By (\ref{f1n3mtwtnr2}) and (\ref{mtnitn1n3f1jn})
for every $t^{(\log n)^3+n}= (t^{(\log n)^3},t^n)\in\theta^{(\log n)^3+n}$ we have
\begin{align}& \frac{1}{n^3}\frac{1}{J_n}\sum_{i=1}^{n^3} \sum_{j=1}^{J_n} 
\mathrm{tr}\left(\check{W}_{t^{(\log n)^3+n}} (E^{(\log n)^3+n}(\cdot\mid i,j)) D_{i,j}^{(\log n)^3+n}\right) \notag\\
&=\frac{1}{n^3}\frac{1}{J_n}\sum_{i=1}^{n^3} \sum_{j=1}^{J_n} 
\mathrm{tr}\Biggl(\left[\check{W}_{t^{(\log n)^3}} (\tilde{E}^{(\log n)^3}(\cdot|i))\otimes
 \left(
\check{W}_{t^n}(\pi_i( E^n(\cdot|j)))\right)\right]\left[ \tilde{D}_i^{(\log n)^3} 
\otimes (P_{\pi_i}D_j^n P_{\pi_i}^{T})\right]\Biggr) \allowdisplaybreaks\notag\\
&=\frac{1}{n^3}\sum_{i=1}^{n^3}  
\mathrm{tr}\Biggl(\left[\check{W}_{t^{(\log n)^3}} (\tilde{E}^{(\log n)^3}(\cdot|i))
\tilde{D}_i^{(\log n)^3}\right] \otimes\left[ \frac{1}{J_n}\sum_{j=1}^{J_n}\left(
\check{W}_{t^n}(\pi_i(E^n(\cdot|j)))\right)
 P_{\pi_i}D_j^n P_{\pi_i}^{T}\right]\Biggr) \allowdisplaybreaks\notag\\
&=\frac{1}{n^3}\sum_{i=1}^{n^3}  \Biggl(
\mathrm{tr}\left(\check{W}_{t^{(\log n)^3}} (\tilde{E}^{(\log n)^3}(\cdot|i))
\tilde{D}_i^{(\log n)^3}\right) \cdot  \mathrm{tr}\left( \frac{1}{J_n}\sum_{j=1}^{J_n}\left(
\check{W}_{t^n}(\pi_i(E^n(\cdot|j)))\right)
 P_{\pi_i}D_j^n P_{\pi_i}^{T}\right) \Biggr)\notag\\
&\geq 1-\frac{1}{n^{1/16}}2^{\lambda}-2\cdot 2^{-n^{1/16}\lambda}\notag\\
&\geq 1-  \varepsilon\label{l21gnbrtippnjd}
\end{align}
for any positive 
$\varepsilon$.

By (\ref{mt6l2it}) and (\ref{balf1lnsl13})
for every $t^{(\log n)^3+n}= (t^{(\log n)^3},t^n)\in\theta^{(\log n)^3+n}$ 
and $i \in \{1,\cdots, n^3\}$ and $j\in \{1,\cdots, J_n\}$ we have

\begin{align}& \lVert
\check{V}_{t^{(\log n)^3+n}} (E^{(\log n)^3+n}(\cdot\mid i,j)) -
\Theta_{t^{(\log n)^3}} \otimes
\Theta_{t^n}  \rVert\notag\\
&= \lVert \check{V}_{t^{(\log n)^3}}\left( \tilde{E}^{(\log n)^3} \left(\cdot \mid i\right)\right) \otimes
 \check{V}_{t^n}\left(\pi(E^n(\cdot|j))\right) -
\Theta_{t^{(\log n)^3}} \otimes
\Theta_{t^n}  \rVert\notag\\
&< \frac{1}{\sqrt{n}}2^{\zeta}+ 2^{-\sqrt{n}\zeta}
\text{ .}\end{align}\vspace{0.2cm}

Let 
 $R_{n^3}$ be the 
uniform distribution on $\{1,\cdots,n^3\}$. We define
a random variable $R_{n^3,uni}$ on the
set $\{1,\cdots,n^3\} \times \{1,\cdots,R_n\}$
 by $R_{n^3,uni}:=R_{n^3} \times R_{uni}$.
Applying Lemma \ref{eq_9}
we obtain

\begin{align}&
\max_{ t^{(\log n)^3+n} \in \theta^{(\log n)^3+n}}\chi\left(R_{n^3,J_n}, Z_{t^{(\log n)^3+n}}\right)\notag\\
&\leq\max_{ t^{(\log n)^3} \in \theta^{(\log n)^3}}\chi\left(R_{n^3}, Z_{t^{(\log n)^3}+n}\right)\notag\\
&~+\frac{1}{n^3}\sum_{i=1}^{n^3}\max_{ t^{n} \in \theta^{n}}\chi\left(R_{uni}, \check{V}_{t^{(\log n)^3}} (\tilde{E}^{(\log n)^3}(\cdot|i))\otimes Z_{t^{n},\pi_i}\right)\allowdisplaybreaks\notag\\
&=  \max_{ t^{(\log n)^3} \in \theta^{(\log n)^3}}\Biggl( S\left(\frac{1}{n^3}
\frac{1}{J_n}\sum_{i=1}^{n^3} \sum_{j=1}^{J_n}\check{V}_{t^{(\log n)^3+n}} (E^{(\log n)^3+n}(\cdot\mid i,j)) \right)\allowdisplaybreaks\notag\\
&~-\frac{1}{n^3}\sum_{i=1}^{n^3} S\left(
\frac{1}{J_n}\sum_{j=1}^{J_n}\check{V}_{t^{(\log n)^3+n}} (E^{(\log n)^3+n}(\cdot\mid i,j)) \right)
\Biggr)\allowdisplaybreaks\notag\\
&~ + \max_{ t^{n} \in \theta^{n}}\frac{1}{n^3}\sum_{i=1}^{n^3}\Biggl( S\left(\frac{1}{n^3}
\frac{1}{J_n}\sum_{i=1}^{n^3} \sum_{j=1}^{J_n}\check{V}_{t^{(\log n)^3}} (\tilde{E}^{(\log n)^3}(\cdot|i))\otimes
 \left(
\check{V}_{t^n}(\pi_i( E^n(\cdot|j)))\right) \right)\allowdisplaybreaks\notag\\
&~-  \frac{1}{J_n}\sum_{j=1}^{J_n}S\left(
\check{V}_{t^{(\log n)^3}} (\tilde{E}^{(\log n)^3}(\cdot|i))\otimes
 \left(
\check{V}_{t^n}(\pi_i( E^n(\cdot|j)))\right) \right)\Biggr)\allowdisplaybreaks\notag\\
&\leq \max_{ t^{(\log n)^3} \in \theta^{(\log n)^3}}\Biggl( \Biggl\vert S\left(\frac{1}{n^3}
\frac{1}{J_n}\sum_{i=1}^{n^3} \sum_{j=1}^{J_n}\check{V}_{t^{(\log n)^3+n}} (E^{(\log n)^3+n}(\cdot\mid i,j)) \right)
- \Theta_{t^{(\log n)^3}} \otimes
\Theta_{t^n} \Biggr\vert\allowdisplaybreaks\notag\\
&~+\Biggl\vert\Theta_{t^{(\log n)^3}} \otimes
\Theta_{t^n}-\frac{1}{n^3}\sum_{i=1}^{n^3} S\left(
\frac{1}{J_n}\sum_{j=1}^{J_n}\check{V}_{t^{(\log n)^3+n}} (E^{(\log n)^3+n}(\cdot\mid i,j)) \right)\Biggr\vert
\Biggr)\allowdisplaybreaks\notag\\
&~ + \max_{ t^{n} \in \theta^{n}}\frac{1}{n^3}\Biggl( \Biggl\vert S\left(
\frac{1}{J_n} \sum_{j=1}^{J_n}\check{V}_{t^{(\log n)^3}} (\tilde{E}^{(\log n)^3}(\cdot|i))\otimes
 \left(
\check{V}_{t^n}(\pi_i( E^n(\cdot|j)))\right) \right)
-  \check{V}_{t^{(\log n)^3}} (\tilde{E}^{(\log n)^3}(\cdot|i))\otimes\Theta_{t^n} \Biggr\vert\allowdisplaybreaks\notag\\
&~+\Biggl\vert  \check{V}_{t^{(\log n)^3}} (\tilde{E}^{(\log n)^3}(\cdot|i))\otimes \Theta_{t^n} -  \frac{1}{J_n}\sum_{j=1}^{J_n}S\left(
\check{V}_{t^{(\log n)^3}} (\tilde{E}^{(\log n)^3}(\cdot|i))\otimes
 \left(
\check{V}_{t^n}(\pi_i( E^n(\cdot|j)))\right) \right)\Biggr\vert\Biggr)\allowdisplaybreaks\notag\\
&\leq(\frac{1}{\sqrt{n}}2^{\zeta}+ 2^{-\sqrt{n}\zeta})\log (d^{(\log n)^3} -1)+
h(\frac{1}{\sqrt{n}}2^{\zeta}+ 2^{-\sqrt{n}\zeta}) \notag\\
&~+2^{-\sqrt{n}\zeta}\log (d^n-1) +h(2^{-\sqrt{n}\zeta})\notag\\
&\leq \varepsilon
\end{align}
for any positive 
$\varepsilon$. Here 
$Z_{i,t^{n}}$ is the resulting quantum state at $\check{V}_{t^n}$
after $i\in\{1,\cdots,n^3\}$ has been sent with $E^{(\log n)^3}$.

For any positive $\delta$, if $n$ is large enough we have $\frac{1}{n}\log J_n -\frac{1}{(\log n)^3 +n}\log J_n
\leq \delta$.
Thus 
the  secrecy
rate of $\{(W_t,{V}_t): t \in \theta\}$ to
transmission of both the message and the randomization index 
is large than
	\[
\frac{1}{n}\max_{U\rightarrow A \rightarrow \{B_q,Z_{t}:q,t\}}
	\Bigl(\inf_{B_q \in Conv((B_t)_{t\in \theta})}\chi(p_U;B_q^{\otimes n})-\max_{t^n\in \theta^n}\chi(p_U;Z_{t^n})\Bigr)-2\delta
	\text{ .}\]
\end{proof}

\section{Further Notice on Code Concepts and Applications}

\subsection{Communication With Resources}\label{cwrioppc}

In our previous papers  \cite{Bo/Ca/De} and \cite{Bo/Ca/De2}
we  determined the randomness  assisted secrecy capacities  of arbitrarily varying classical-quantum
wiretap  channels.
\begin{center}\begin{figure}[H]
\includegraphics[width=0.8\linewidth]{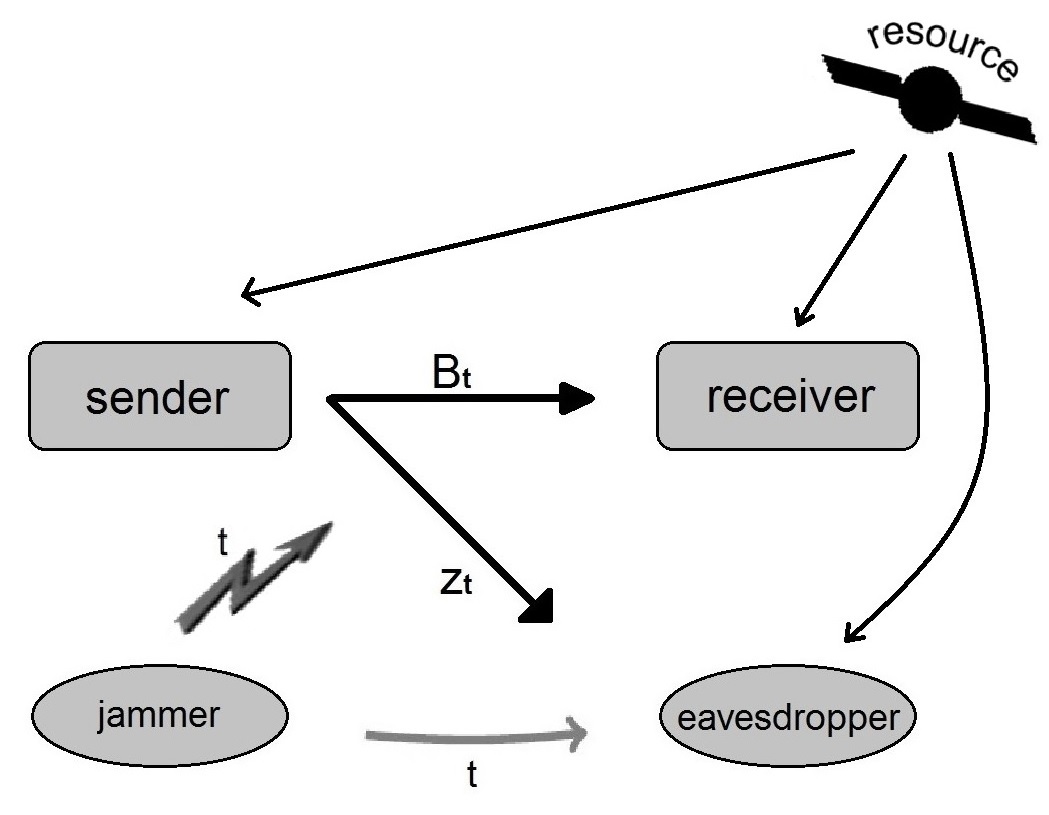}
\caption{
Arbitrarily varying classical-quantum wiretap channel with assistance 
by shared randomness that is not known by the jammer
}
\end{figure}\end{center}
In \cite{Bo/Ca/De} 
we gave an example when the deterministic capacity of an arbitrarily varying classical-quantum
wiretap channel is not equal to its randomness-assisted capacity.
Thus having  resources is very helpful for achieving
a positive secrecy capacity.
For the proofs in  \cite{Bo/Ca/De} and \cite{Bo/Ca/De2} we did not allow
 the jammer to have access 
to the shared randomness.

\begin{center}\begin{figure}[H]
\includegraphics[width=0.8\linewidth]{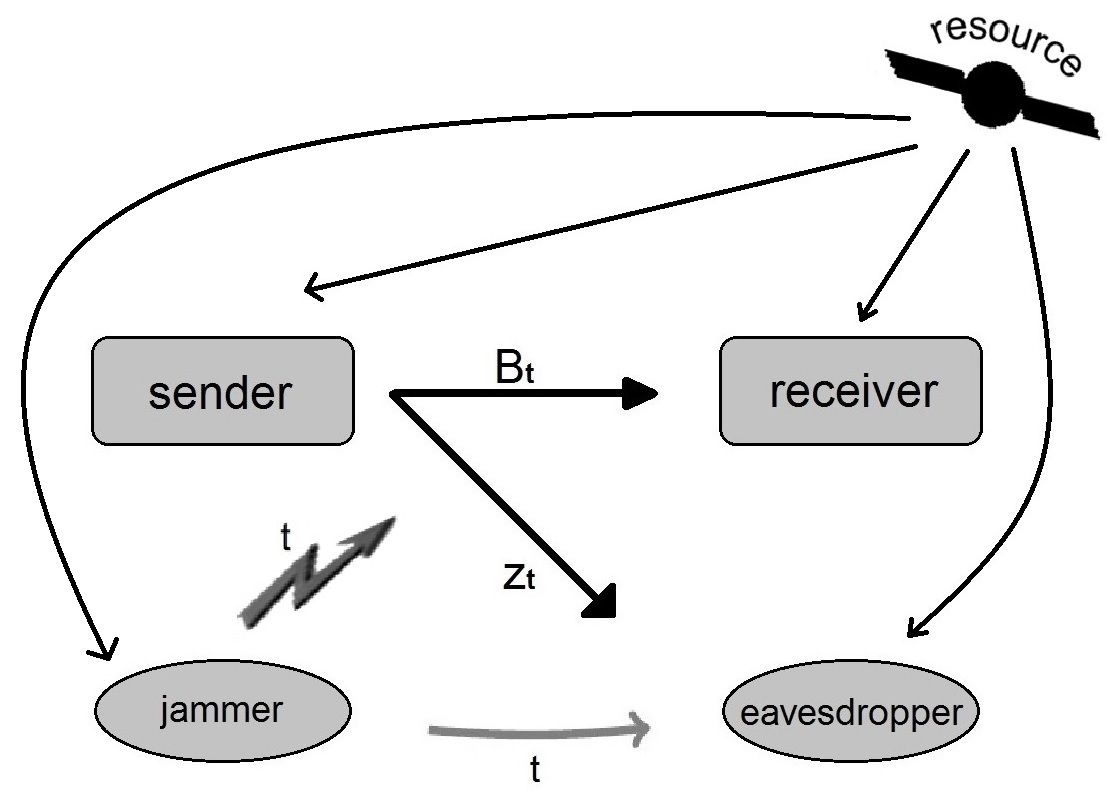}
\caption{
Arbitrarily varying classical-quantum wiretap channel
	with assistance by shared randomness that is known by the jammer}
\end{figure}\end{center}

Now we consider  the case when 
the shared randomness is not secure, i.e.
when the jammer can have access 
to the shared randomness.

\begin{corollary}
	Let $\{({W}_t,{V}_t): t \in \theta\}$ 
	be an arbitrarily varying classical-quantum wiretap channel.
We have
\begin{equation} C_{s}(\{(W_t,{V}_t): t \in \theta\})
=C_{s}(\{(W_t,{V}_t): t \in \theta\};r_{ns})\end{equation}\label{enrtitvwsc}
\end{corollary}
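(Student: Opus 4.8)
The plan is to prove the two inequalities
$C_s(\{(W_t,V_t):t\in\theta\})\leq C_s(\{(W_t,V_t):t\in\theta\};r_{ns})$
and
$C_s(\{(W_t,V_t):t\in\theta\};r_{ns})\leq C_s(\{(W_t,V_t):t\in\theta\})$
separately. The first is a formal specialization: given a deterministically achievable rate $R$ and, for prescribed $\epsilon,\delta,\zeta>0$ and large $n$, an $(n,J_n)$ deterministic code $\mathcal{C}$ witnessing it in the sense of Definition \ref{defofrate}, one takes $G$ to be the point mass on $\Lambda$ concentrated at $\mathcal{C}$. Then $\int_{\Lambda}\max_{t^n}P_e(\mathcal{C}^{\gamma},t^n)\,dG(\gamma)=\max_{t^n}P_e(\mathcal{C},t^n)<\epsilon$ and $\int_{\Lambda}\max_{t^n}\chi(R_{uni};Z_{\mathcal{C}^{\gamma},t^n})\,dG(\gamma)=\max_{t^n}\chi(R_{uni};Z_{\mathcal{C},t^n})<\zeta$, so $G$ meets the requirements of Definition \ref{wdtsonjdc} for non-secure randomness assisted coding, and $R$ is $r_{ns}$-achievable.

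The second inequality is the real content, and the plan is to derandomize. Fix an $R$ achievable under non-secure randomness assisted coding and fix $\epsilon,\delta,\zeta\in(0,\tfrac{1}{2})$ (no loss of generality, since smaller error parameters only strengthen the conclusion). Applying the definition with the parameters $\epsilon^{2},\delta,\zeta^{2}$ produces, for every sufficiently large $n$, a distribution $G$ on $(\Lambda,\sigma)$ with $\tfrac{\log J_n}{n}>R-\delta$, with $\int_{\Lambda}\max_{t^n}P_e(\mathcal{C}^{\gamma},t^n)\,dG(\gamma)<\epsilon^{2}$ and with $\int_{\Lambda}\max_{t^n}\chi(R_{uni};Z_{\mathcal{C}^{\gamma},t^n})\,dG(\gamma)<\zeta^{2}$. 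Markov's inequality then gives $G(\{\gamma:\max_{t^n}P_e(\mathcal{C}^{\gamma},t^n)\geq\epsilon\})<\epsilon$ and $G(\{\gamma:\max_{t^n}\chi(R_{uni};Z_{\mathcal{C}^{\gamma},t^n})\geq\zeta\})<\zeta$; since $\epsilon+\zeta<1$ these two exceptional sets cannot cover $\Lambda$, so there is a $\gamma^{\ast}\in\Lambda$ outside both. The single deterministic code $\mathcal{C}^{\gamma^{\ast}}$ then has $\tfrac{\log J_n}{n}>R-\delta$ and satisfies (\ref{annian1}) and (\ref{b40}), i.e.\ it witnesses that $R$ is deterministically achievable. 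Since $\epsilon,\delta,\zeta$ were arbitrary this yields $C_s(\{(W_t,V_t):t\in\theta\};r_{ns})\leq C_s(\{(W_t,V_t):t\in\theta\})$, and combining the two inequalities proves the corollary.

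I do not expect a genuine obstacle here: the whole point is the structural observation that once the jammer knows the shared-randomness realization $\gamma$ it can optimize its attack for each $\gamma$, so the worst-case error and the worst-case Holevo leakage get averaged over $\gamma$ rather than jointly controlled, and a pigeonhole choice of a single good $\gamma^{\ast}$ already supplies a deterministic code of the same rate. The only points needing a little care are the measurability of $\gamma\mapsto\max_{t^n}P_e(\mathcal{C}^{\gamma},t^n)$ and $\gamma\mapsto\max_{t^n}\chi(R_{uni};Z_{\mathcal{C}^{\gamma},t^n})$ (built into the choice of $\sigma$ in Definition \ref{wdtsonjdc}, which is what makes the Markov step legitimate), the bookkeeping of the $\epsilon,\delta,\zeta$ quantifiers, and the harmless observation that the derandomized code $\mathcal{C}^{\gamma^{\ast}}$ may still carry a stochastic encoder, which Definition \ref{defofrate} permits. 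An alternative route would be to sandwich $C_s\leq C_s(\,\cdot\,;r_{ns})\leq C_s(\,\cdot\,;r)$ and invoke the Ahlswede dichotomy together with Theorem \ref{pdwumsfsov}, so that $C_s=C_s(\,\cdot\,;r)$ when the legal link is non-symmetrizable while all three capacities vanish when it is symmetrizable; but the direct derandomization is shorter and needs no case distinction.
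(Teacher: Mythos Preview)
Your proof is correct, but it differs from the paper's in the reverse inequality $C_s(\cdot;r_{ns})\leq C_s$. The paper does not derandomize directly; instead it first observes (just as you do at the end) that $C_s\leq C_s(\cdot;r_{ns})\leq C_s(\cdot;r)$ by exchanging the integral and the maximum, and then closes the gap by a case distinction on symmetrizability of $\{W_t:t\in\theta\}$: in the non-symmetrizable case it invokes the Ahlswede dichotomy from \cite{Bo/Ca/De} to get $C_s=C_s(\cdot;r)$, and in the symmetrizable case it argues that for every single code with $J_n>1$ the worst-case error probability is bounded away from zero, so $\int_\Lambda\max_{t^n}P_e(\mathcal{C}^\gamma,t^n)\,dG(\gamma)$ cannot be made small and hence $C_s(\cdot;r_{ns})=0=C_s$. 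Your Markov-inequality extraction of a single good $\gamma^\ast$ is shorter, self-contained, and avoids both the case split and the appeal to \cite{Bo/Ca/De}; the paper's route, on the other hand, makes the role of symmetrizability explicit and ties the result visibly to the dichotomy, which is thematically what the paragraph following the corollary is about. Amusingly, you sketch the paper's approach as your ``alternative route.''
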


\begin{proof}
Let $\mathcal{C} = \bigl(E, \{D_j^n : j = 1,\cdots J_n\}\bigr)$ be
an  $(n, J_n)$
code such 
\[ \max_{t^n \in \theta^n} P_e(\mathcal{C}, t^n) < \epsilon\text{ ,}\] and
\[\max_{t^n\in\theta^n}
\chi\left(R_{uni};Z_{t^n}\right) < \zeta\text{ .}\]
We define  a $G'$ such that $G'(\{\mathcal{C}\}) = 1$,
it holds
\[ \int_{\Lambda}\max_{t^n\in\theta^n}P_{e}(\mathcal{C}^{\gamma},t^n)d
G'(\gamma) < \epsilon\text{ ,}\]
\[ \int_{\Lambda}\max_{t^n\in\theta^n}
\chi\left(R_{uni},Z_{\mathcal{C}^{\gamma},t^n}\right)d G'(\gamma) < \zeta\text{
.}\] Thus every achievable secrecy rate for
 $\{({W}_t,{V}_t): t \in \theta\}$ is also an
 achievable secrecy rate for
 $\{({W}_t,{V}_t): t \in \theta\}$ under non-secure randomness assisted coding.

Now we assume that there is a $G''$ such that
\[ \int_{\Lambda}\max_{t^n\in\theta^n}P_{e}(\mathcal{C}^{\gamma},t^n)d
G''(\gamma) < \epsilon\text{ ,}\]
\[ \int_{\Lambda}\max_{t^n\in\theta^n}
\chi\left(R_{uni},Z_{\mathcal{C}^{\gamma},t^n}\right)d G''(\gamma) < \zeta\text{
.}\] Then
for any $s^n\in\theta^n$ we have
	\[	\int_{\Lambda}P_{e}(\mathcal{C}^{\gamma},s^n)d
G''(\gamma)
	\leq	\int_{\Lambda}\max_{t^n\in\theta^n}P_{e}(\mathcal{C}^{\gamma},t^n)d
G''(\gamma) < \epsilon\text{ ,}\]
\[ \int_{\Lambda}
\chi\left(R_{uni},Z_{\mathcal{C}^{\gamma},s^n}\right)d G''(\gamma)
\leq \int_{\Lambda}\max_{t^n\in\theta^n}
\chi\left(R_{uni},Z_{\mathcal{C}^{\gamma},t^n}\right)d G''(\gamma) < \zeta\text{ .}\]
Thus every achievable secrecy rate for
 $\{({W}_t,{V}_t): t \in \theta\}$ under non-secure randomness assisted coding is also an
 achievable secrecy rate for
 $\{({W}_t,{V}_t): t \in \theta\}$ under  randomness assisted coding.

Therefore,  \begin{equation} C_{s}(\{(W_t,{V}_t): t \in \theta\})
\leq C_{s}(\{(W_t,{V}_t): t \in \theta\};r_{ns}) \leq
C_{s}(\{(W_t,{V}_t): t \in \theta\};r)\text{ .}
\end{equation}\vspace{0.2cm}

At first let us assume that $\{W_t: t \in \theta\}$
is not symmetrizable.
By \cite{Bo/Ca/De} when $\{W_t: t \in \theta\}$
is not symmetrizable it holds
$C_{s}(\{(W_t,{V}_t): t \in \theta\})$
$= C_{s}(\{(W_t,{V}_t): t \in \theta\};r)$.
Thus when $\{W_t: t \in \theta\}$
is not symmetrizable  we have
\[ C_{s}(\{(W_t,{V}_t): t \in \theta\})
= C_{s}(\{(W_t,{V}_t): t \in \theta\};r_{ns})\text{ .}\]\vspace{0.2cm}

Now let us assume that $\{W_t: t \in \theta\}$
is symmetrizable.
When $\{W_t: t \in \theta\}$
is symmetrizable and  $J_n>1$ hold then by \cite{Bo/Ca/De} for any 
$(n, J_n)$
code $\mathcal{C}$
there is are $t^n \in \theta^n$ and a positive $c$ such that
\[ P_e(\mathcal{C}, t^n) >c\text{ .}\]
Thus when $\{W_t: t \in \theta\}$
is  symmetrizable for any $G$ we have
\[ \int_{\Lambda}\max_{t^n\in\theta^n}P_{e}(\mathcal{C}^{\gamma},t^n)d
G(\gamma) > c\text{ ,}\]
which implies we can only have
$ \int_{\Lambda}\max_{t^n\in\theta^n}P_{e}(\mathcal{C}^{\gamma},t^n)d
G(\gamma)$ $< c$ when $J_n$ is less or equal to $1$. This means
\[C_{s}(\{(W_t,{V}_t): t \in \theta\};r_{ns})= \log 1 = 0\text{ .}\]
By \cite{Bo/Ca/De} when $\{W_t: t \in \theta\}$
is symmetrizable it holds
$C_{s}(\{(W_t,{V}_t): t \in \theta\})= 0$
and therefore  when $\{W_t: t \in \theta\}$
is  symmetrizable we have
\[ C_{s}(\{(W_t,{V}_t): t \in \theta\})
= C_{s}(\{(W_t,{V}_t): t \in \theta\};r_{ns})\text{ .}\]
\end{proof}

In \cite{Bo/Ca/De} we showed that an  arbitrarily varying classical-quantum channel with zero
deterministic secrecy capacity allowed secure transmission if the sender and the
legal receiver had the possibility to use shared randomness
as long as the shared randomness was kept
secret against the jammer.
Corollary \ref{enrtitvwsc} shows that
when the jammer is able have access 
to the outcomes of the shared random experiment
we can only achieve the rate as when
we do not use any shared randomness  at all.
This means
the shared randomness will be
completely useless when it is known by
the jammer.
\vspace{0.2cm}

\begin{center}\begin{figure}[H]
\includegraphics[width=0.8\linewidth]{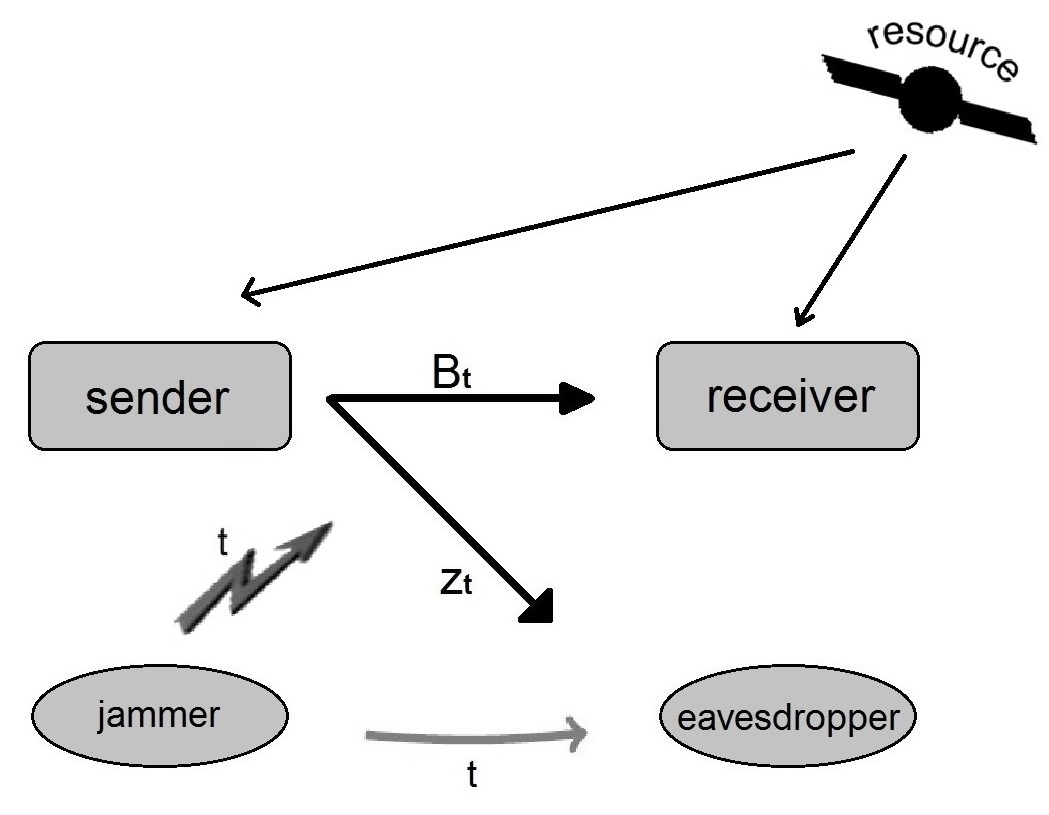}
\caption{
Arbitrarily varying classical-quantum wiretap channel
	with assistance by shared randomness that is not known by the  eavesdropper
}
\end{figure}\end{center}

Applying Theorem \ref{pdwumsfsov} we 
can now determine the
 random   assisted   secrecy capacity 
with the strongest code concept for shared randomness, i.e., 
the randomness which
is secure against both the jammer and
eavesdropping. 

\begin{corollary}
Let   $\theta$ $:=$ $\{1,\cdots,T\}$ be a finite index set.
	Let $\{({W}_t,{V}_t): t \in \theta\}$ 
	be an arbitrarily varying classical-quantum wiretap channel.
	
When $\{{W}_{t} : t\in \theta\}$
is not symmetrizable, we have
	\begin{align}&
C_{key}(\{(W_t,{V}_t): t \in \theta\};g_n)\notag\\
&	=\min\biggl(\lim_{n\rightarrow \infty} \frac{1}{n} \max_{U\rightarrow A \rightarrow \{B_q^{\otimes n},Z_{t^n}:q,t_n\}}
\left(\inf_{B_q \in Conv((B_t)_{t\in \theta})}\chi(p_U;B_q^{\otimes n})-\max_{t^n\in \theta^n}\chi(p_U;Z_{t^n})\right)+g_n,\notag\\
& \max_{U\rightarrow A \rightarrow \{B_q^{\otimes n}:q\}} \inf_{B_q \in Conv((B_t)_{t\in \theta})}\chi(p_U;B_q^{\otimes n})\biggr) \text{ .}
\label{ckwtvttit}\end{align} Here we use the strong code concept.
\label{commpermn2}
\end{corollary}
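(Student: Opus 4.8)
The plan is to establish the formula by separate achievability and converse arguments. Write $C_{\mathrm{tr}}$ for the second quantity in the minimum, i.e.\ $\lim_n\frac1n\max_{U\to A}\inf_{B_q}\chi(p_U;B_q^{\otimes n})$, which is the deterministic transmission capacity of the arbitrarily varying classical-quantum channel $\{W_t:t\in\theta\}$ (equal to its shared-randomness assisted transmission capacity, since $\{W_t:t\in\theta\}$ is not symmetrizable, by the Ahlswede dichotomy of \cite{Ahl/Bj/Bo/No} and \cite{Ahl/Bli}), and $C_{\mathrm{s}}$ for the first quantity without the $+g_n$, which by Theorem~\ref{pdwumsfsov} is the deterministic secrecy capacity of $\{(W_t,V_t):t\in\theta\}$ under the strong code concept. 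The goal is $C_{key}(\{(W_t,V_t):t\in\theta\};g_n)=\min(C_{\mathrm{s}}+g_n,\;C_{\mathrm{tr}})$; the degenerate cases ($\{W_t:t\in\theta\}$ symmetrizable, or $C_{\mathrm{s}}\le 0$) are disposed of exactly as in the proof of Theorem~\ref{pdwumsfsov}.

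For the converse I would prove two inequalities. First, $C_{key}(\cdot;g_n)\le C_{\mathrm{tr}}$: the legitimate receiver must decode the message with vanishing error against every jammer sequence, and an amount $g_n$ of common randomness, secret or not, cannot raise the reliable transmission rate of $\{W_t:t\in\theta\}$ above its shared-randomness assisted capacity $C_{\mathrm{tr}}$, so $\frac1n\log J_n\le C_{\mathrm{tr}}+o(1)$. Second, $C_{key}(\cdot;g_n)\le C_{\mathrm{s}}+g_n$: given a family $(\{\mathcal C^{\gamma}:\gamma\in\Gamma_n\})$, $\frac1n\log|\Gamma_n|=g_n$, with $\max_{t^n}\frac1{|\Gamma_n|}\sum_{\gamma}P_e(\mathcal C^{\gamma},t^n)<\epsilon$ and $\max_{t^n}\chi(R_{uni};Z_{t^n})<\zeta$, I would hand the value of the common randomness $\Gamma$ to the eavesdropper. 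Since the message is independent of $\Gamma$, the Holevo chain rule gives for every $t^n$ that $\frac1{|\Gamma_n|}\sum_{\gamma}\chi(R_{uni};Z_{\mathcal C^{\gamma},t^n})=\chi(R_{uni};Z_{t^n}\Gamma)\le\chi(R_{uni};Z_{t^n})+H(\Gamma)\le\zeta+ng_n$; picking by Markov a $\gamma$ that is simultaneously good for the error and for this averaged leakage yields a deterministic $(n,J_n)$ code with vanishing error and leakage at most $ng_n+o(n)$. Feeding such a code into the leakage version of the converse of \cite{Bo/Ca/De2} (the same converse quoted for the message in the proof of Theorem~\ref{pdwumsfsov}) bounds $\frac1n\log J_n\le C_{\mathrm{s}}+g_n+o(1)$; letting $\epsilon,\zeta\to 0$ and $n\to\infty$ finishes the converse.

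For achievability I would start from the wiretap building block used in Theorems~\ref{lgwtvttitbaca} and~\ref{pdwumsfsov}: a wiretap code for an input distribution $p$ whose message index ranges over $J_n\approx 2^{n(\inf_q\chi(p;B_q^{\otimes n})-\max_{t^n}\chi(p;Z_{t^n}))}$ values and whose randomisation index ranges over $L_n\approx 2^{n\max_{t^n}\chi(p;Z_{t^n})}$ values, with the legitimate receiver able to reliably recover the full pair (message, randomisation) because $J_nL_n\approx 2^{n\inf_q\chi(p;B_q^{\otimes n})}$. The idea is to convert part of the randomisation index into key-protected message: split the randomisation index as $l=(l',k)$ with $\frac1n\log(\text{range of }k)=\min(g_n,\frac1n\log L_n)-\delta$, and to send a pair $(j,m_2)$ transmit the wiretap codeword indexed by $j$ and by $l=(l',m_2\oplus K)$, where $K$ consists of $\log(\text{range of }k)$ bits of the secret common randomness and $l'$ is still drawn uniformly at random. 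The legitimate receiver decodes $(j,l)$ and, knowing $K$, recovers $m_2$, so the error bound of Theorem~\ref{pdwumsfsov} is inherited; and since $K$ is uniform, independent of $(j,m_2)$ and unknown to the eavesdropper, the pair $(l',m_2\oplus K)$ is uniform over the whole randomisation set conditionally on $(j,m_2)$, so the eavesdropper's averaged output state is exactly the average of her states over all $L_n$ randomisation indices, which the covering/secrecy estimate of Theorem~\ref{lgwtvttitbaca} keeps close to a state independent of $j$; the Fannes--Audenaert inequality (Lemma~\ref{eq_9}) then bounds the combined message's Holevo quantity as in the proof of Theorem~\ref{pdwumsfsov}. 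This achieves secret rate $\frac1n\log J_n+\frac1n\log(\text{range of }k)\to C_{\mathrm{s}}(p)+\min(g_n,\,C_{\mathrm{tr}}(p)-C_{\mathrm{s}}(p))=\min(C_{\mathrm{s}}(p)+g_n,\,C_{\mathrm{tr}}(p))$ for this $p$; optimising over the auxiliary $U\to A$ and time-sharing blocks (so that the pair (total transmission rate, secrecy sacrifice) can be any point of the convex hull of the per-distribution values) promotes this to $\min(C_{\mathrm{s}}+g_n,\,C_{\mathrm{tr}})$, matching the converse.

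I expect the main obstacle to be the leakage version of the arbitrarily varying wiretap converse used in the second converse inequality: one has to re-run the wiretap converse of \cite{Bo/Ca/De2} while carrying an extra $ng_n$ bits of leakage through it, simultaneously controlling the worst jammer sequence $t^n$ in the secrecy term and the worst channel state $q$ in the reliability term, and verifying that the degradation is exactly additive in $g_n$ rather than being amplified by the arbitrarily varying structure. A secondary technical point is the achievability bookkeeping: checking that re-using the randomisation index of the full two-part (synchronisation plus payload) construction of Theorem~\ref{pdwumsfsov} really leaves the eavesdropper's averaged state unchanged, that the fine decoder recovering $(j,l)$ is still available, and that the optimisation over the auxiliary yields precisely the stated $\min$.
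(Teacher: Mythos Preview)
Your overall strategy matches the paper's --- separate achievability and converse, with the extra $ng_n$ entering through an averaging/chain-rule identity --- but there is a genuine gap in your converse, and the paper closes it differently.

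In your second converse inequality you propose to extract, via Markov's inequality, a single index $\gamma$ that is ``simultaneously good for the error and for this averaged leakage'', and then feed the resulting deterministic code into the leakage converse of \cite{Bo/Ca/De2}. The problem is that the hypothesis only gives $\max_{t^n}\frac{1}{|\Gamma_n|}\sum_\gamma P_e(\mathcal C^\gamma,t^n)<\epsilon$ and, after your chain-rule step, $\max_{t^n}\frac{1}{|\Gamma_n|}\sum_\gamma\chi(R_{uni};Z^\gamma_{t^n})<\zeta+ng_n$. Markov lets you find a $\gamma$ good for any \emph{fixed} pair $(q,t^n)$, but the deterministic wiretap converse you want to invoke (which takes $\inf_q$ and $\sup_{t^n}$ only \emph{after} the encoder is fixed) requires a $\gamma$ with small error against \emph{every} $t^n$ and bounded leakage against \emph{every} $t^n$, and there are exponentially many of these. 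No such $\gamma$ need exist.

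The paper sidesteps this entirely: it never derandomises. It keeps the random code, applies Fano to the pair $(B_q^{\otimes n},\Gamma)$ (the legitimate receiver knows $\Gamma$) to obtain $\log J_n\le\frac{1}{|\Gamma_n|}\sum_\gamma\chi(R_{uni};B_q^{\gamma\otimes n})-\chi(R_{uni};Z_{t^n})+o(n)$, and then proves the identity
\[
\frac{1}{|\Gamma_n|}\sum_\gamma\chi(R_{uni};B_q^{\gamma\otimes n})-\chi\Bigl(R_{uni};\tfrac{1}{|\Gamma_n|}\sum_\gamma B_q^{\gamma\otimes n}\Bigr)\le H(G_{uni})=ng_n
\]
at the \emph{legitimate receiver} rather than at Eve (your chain-rule identity is the same inequality, just applied on the other side). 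This yields $\log J_n\le\chi(R_{uni};\bar B_q^{\otimes n})-\chi(R_{uni};Z_{t^n})+ng_n+o(n)$ with both Holevo quantities referring to the \emph{averaged} encoder, after which the usual single-letterisation closes the converse without ever selecting a distinguished $\gamma$. Your argument is easily repaired by dropping the Markov step and working with the random code throughout; the identity you already wrote down is enough.

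For achievability the two routes are equivalent in spirit. You one-time-pad part of the randomisation index with the secret key; the paper instead lets the secret common randomness $\gamma$ index a permutation $\pi_\gamma$ of the codewords and sets $L_n\approx 2^{\max_{t^n}\chi(p;Z_{t^n})-ng_n}$, so that the pair $(l,\gamma)$ together supplies enough covering for secrecy while only $L_n$ is paid for in rate. When $|\Gamma_n|<n^3$ the paper tops the randomness up via the secure two-part transmission of Theorem~\ref{pdwumsfsov} (this is where non-symmetrizability enters). Both arguments then close the passage from a fixed input distribution to the full maximisation over the auxiliary with an appeal to ``standard arguments''.
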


\begin{remark}
When $g_n$ is positive and independent of $n$, (\ref {ckwtvttit}) always holds and
we do not have to assume that $\{{W}_{t} : t\in \theta\}$
is not symmetrizable.
\end{remark}

\begin{proof}We define 
 $\Gamma_n':= \{1,\cdots, \lceil \frac{n^3}{ \vert\Gamma_n \vert  }\rceil \}$
it holds $n! > \vert \Gamma_n'\times \Gamma_n \vert  \geq n^3$.
Notice that when $g_n$ is positive and independent of $n$ 
we always have have $n! \geq 2^{ng_n} \geq n^3$ for sufficiently large $n$
and thus $\Gamma_n':= \{1 \}$.

We fix a probability distribution
$p\in P(\mathbf{A})$.
Let
\[J_n =\min \left( \lfloor 2^{n\min_{s \in \overline{\theta}}\chi(p;B_s)-\log
L_{n}+ng_n-2n\mu} \rfloor,  \lfloor 2^{n\min_{s \in \overline{\theta}}\chi(p;B_s)-2n\mu}\rfloor\right)\text{ ,}\]
\[L_{n} = \max \left(\lceil 2^{\max_{t^n}\chi(p;Z_{t^n})-ng_n+2n\zeta}\rceil, 1\right)\text{ .}\]
and $p' (x^n):= \begin{cases} \frac{p^{
n}(x^n)}{p^{ n}
(\mathsf{T}^n_{p,\delta})} \text{ ,}& \text{if } x^n \in \mathsf{T}^n_{p,\delta}\text{ ;}\\
0 \text{ ,}& \text{else} \text{ .}\end{cases}$
Let
 $X^{n}$ $:=$ $\{X_{j,l}: j \in \{1, \dots, J_n\}, l \in
\{1, \dots, L_{n}\}\}$ be a family of random variables taking value
according to $p'$.

It holds 
$J_nL_n < 2^{n\min_{s \in \overline{\theta}}\chi(p;B_s)}$
and $L^n2^{g_n } > 2^{\max_{t^n}\chi(p;Z_{t^n})}$.
Similar to the proof of 
Theorem 1 in \cite{Bo/Ca/De} and Theorem 3.1 in   \cite{Bo/Ca/De2}, 
with a positive probability there is a realization $\{x_{j,l}: j , l\}$
of $\{X_{j,l, }: j , l \}$ and a set $\{\pi_{\gamma} :\gamma\in \Gamma_n'\times \Gamma_n  \}$
$\subset \mathsf{S}_n$
with following properties:

There exits a set of decoding operators 
$ \{D_{j,l}:j=1,\cdots,J_n, l=1,\cdots,L_n,\}$
such that  for
every $t^n\in\theta^n$
$\epsilon>0$,  $\zeta>0$, and   sufficiently large $n$,
 \[1- \frac{1}{J_n}\frac{1}{L_n} \frac{1}{\vert \Gamma_n'\times \Gamma_n \vert} \sum_{j=1}^{J_n}  \sum_{j=1}^{L_n} \sum_{\gamma=1}^{\vert \Gamma_n'\times \Gamma_n \vert}  
\mathrm{tr}\left(W_{t^n}(\pi_{\gamma}^{-1}(x_{j,l}))P_{\pi_{\gamma}}^{\dagger}D_{j,l}P_{\pi_{\gamma}}\right)< \epsilon\]
and 
\[ \chi\left(R_{uni},
\frac{1}{J_n}\frac{1}{L_n}\frac{1}{\vert \Gamma_n'\times \Gamma_n \vert}   \sum_{j=1}^{J_n}   \sum_{j=1}^{L_n} \sum_{\gamma=1}^{\vert \Gamma_n'\times \Gamma_n \vert}V_{t^n}(\pi_{\gamma}^{-1}(x_{j,l}))
\right) < \zeta\text{ .}\]

When $\vert\Gamma_n' \vert>1$ holds we use the  strategy of Theorem \ref{pdwumsfsov}
by
build a two-part secure code word, the first part is used  to send $\gamma'\in \Gamma_n'$, the second is used to transmit the message to the legal
receiver.

Thus 
	\begin{align*}&
C_{key}(\{(W_t,{V}_t): t \in \theta\};g_n)\notag\\
&	\geq\min\biggl(\lim_{n\rightarrow \infty} \frac{1}{n} \max_{p }\left(\inf_{B_q \in Conv((B_t)_{t\in \theta})}\chi(p;B_q^{\otimes n})-\max_{t^n\in \theta^n}\chi(p;Z_{t^n})\right)+g_n,\notag\\
& \max_{p }\inf_{B_q \in Conv((B_t)_{t\in \theta})}\chi(p;B_q^{\otimes n})\biggr) \text{ .}\end{align*}

The achievability of
$\lim_{n\rightarrow \infty} \frac{1}{n}\Bigl(\min_{q}\chi(p_U;B_q)$ $-$ $ \max_{t^n}\chi(p_U;Z_{t^n})\Bigr)$ $+g_n$
and $\inf_{B_q \in Conv((B_t)_{t\in \theta})}\chi(p_U;B_q^{\otimes n})$
is then shown via standard arguments.
\vspace{0.2cm}

Now we are going to prove the converse.

	\begin{equation}C_{key}(\{(W_t,{V}_t): t \in \theta\};g_n)
	\leq \max_{U\rightarrow A \rightarrow \{B_q^{\otimes n}:q\}} 
	\inf_{B_q \in Conv((B_t)_{t\in \theta})}\chi(p_U;B_q^{\otimes n})	\label{ckwtvttitgn}\end{equation}
holds trivially.

Let   $(E^{\gamma,(n)},\{D_{j}^{\gamma,(n)}:j\})$ 
be a sequence of $(n,J_n)$ code such
that for every $t^n\in\theta^n$  \[1- \frac{1}{J_n} \frac{1}{2^{ng_n}}\sum_{j=1}^{J_n}     \sum_{\gamma=1}^{2^{ng_n}}
\mathrm{tr}\left(W_{t^n}(E^{\gamma,(n)}(j))D_{j}^{\gamma,(n)}\right)< \epsilon_n\]
and 
\[\chi\left(R_{uni},
\frac{1}{J_n}\frac{1}{2^{ng_n}}  \sum_{j=1}^{J_n}  \sum_{\gamma=1}^{2^{ng_n}}V_{t^n}(E^{\gamma,(n)}(j))
\right) < \zeta_n\text{ 
,}\] 
where $ \lim_{n\to\infty}\epsilon_n=0$ and
$\lim_{n\to\infty}\zeta_n=0$. It is known that for  sufficiently large $n$ we have
	\begin{equation}\log J_n \leq \frac{1}{2^{ng_n}} \sum_{\gamma=1}^{2^{ng_n}}
\chi\left(R_{uni}, B_{q}^{\gamma\otimes n}\right) - 
\chi\left(R_{uni}, Z_{t^n}\right)
\text{ .}	\label{lsntiljnl}\end{equation}

Let $\psi_{q}^{j,\gamma\otimes n}$ $:=  W_{q}^{\otimes n}(E^{\gamma,(n)}(j))$.
We denote $\tilde{B}_{q}^{j\otimes n} := \{W_{q}^{\otimes n}(E^{\gamma,(n)}(j)):\gamma\in \Gamma_n\}$
and $\tilde{B}_{q}^{\otimes n}  := \{\frac{1}{J_n}W_{q}^{\otimes n}(E^{\gamma,(n)}(j)):\gamma\in \Gamma_n\}$.
Let $G_{uni}$ be the uniformly distributed random variable with value in
$\Gamma_n$.

We have
\begin{align}\allowdisplaybreaks[2]&\frac{1}{2^{ng_n}}\sum_{\gamma=1}^{2^{ng_n}}\chi\left(R_{uni};B_{q}^{\gamma\otimes n}\right)
- \chi\left(R_{uni};\frac{1}{2^{ng_n}}\sum_{\gamma=1}^{2^{ng_n}}B_{q}^{\gamma\otimes n}\right)\notag\\
&= \frac{1}{2^{ng_n}}\sum_{\gamma=1}^{2^{ng_n}}S\left(\frac{1}{J_n}\sum_{j=1}^{J_n}\psi_{q}^{j,\gamma\otimes n}\right)
-\frac{1}{2^{ng_n}}\frac{1}{J_n}\sum_{\gamma=1}^{2^{ng_n}}\sum_{j=1}^{J_n}S\left(\psi_{q}^{j,\gamma\otimes n}\right)\notag\\
&-\Biggl[S\left(\frac{1}{2^{ng_n}}\frac{1}{J_n}\sum_{\gamma=1}^{2^{ng_n}}\sum_{j=1}^{J_n}\psi_{q}^{j,\gamma\otimes n}\right)
- \frac{1}{J_n}\sum_{j=1}^{J_n}S\left(\frac{1}{2^{ng_n}}\sum_{\gamma=1}^{2^{ng_n}}\psi_{q}^{j,\gamma\otimes n}\right)\Biggr]\allowdisplaybreaks\notag\\
&= \frac{1}{2^{ng_n}}\sum_{\gamma=1}^{2^{ng_n}}S\left(\frac{1}{J_n}\sum_{j=1}^{J_n}\psi_{q}^{j,\gamma\otimes n}\right)
-S\left(\frac{1}{2^{ng_n}}\frac{1}{J_n}\sum_{\gamma=1}^{2^{ng_n}}\sum_{j=1}^{J_n}\psi_{q}^{j,\gamma\otimes n}\right)\notag\\
&-\Biggl[\frac{1}{2^{ng_n}}\frac{1}{J_n}\sum_{\gamma=1}^{2^{ng_n}}\sum_{j=1}^{J_n}S\left(\psi_{q}^{j,\gamma\otimes n}\right)
- \frac{1}{J_n}\sum_{j=1}^{J_n}S\left(\frac{1}{2^{ng_n}}\sum_{\gamma=1}^{2^{ng_n}}\psi_{q}^{j,\gamma\otimes n}\right)\Biggr]\allowdisplaybreaks\notag\\
&=\frac{1}{J_n}\sum_{j=1}^{J_n}S\left(G_{uni},\tilde{B}_{q}^{j\otimes n}\right)-S\left(G_{uni},\tilde{B}_{q}^{\otimes n} \right)\allowdisplaybreaks\notag\\
&\leq \frac{1}{J_n}\sum_{j=1}^{J_n}S\left(G_{uni},\tilde{B}_{q}^{j\otimes n}\right)\allowdisplaybreaks\notag\\
&\leq \frac{1}{J_n}\sum_{j=1}^{J_n} H\left(G_{uni}\right)\allowdisplaybreaks\notag\\
&= H\left(G_{uni}\right)\notag\\
&=ng_n \text{ .}\label{slgutbqnrn}\end{align} 

By (\ref{ckwtvttitgn}), (\ref{lsntiljnl}), and (\ref{slgutbqnrn}) we have
\begin{align*}&C_{key}(\{(W_t,{V}_t): t \in \theta\};g_n)\\
&\leq \lim_{n\rightarrow \infty} \frac{1}{n} \max_{U\rightarrow A \rightarrow 
\{B_q^{\otimes n},Z_{t^n}:q,t_n\}}\biggl(\inf_{B_q \in Conv((B_t)_{t\in \theta})}\chi(p_U;B_q^{\otimes n})\\
&-\max_{t^n\in \theta^n}\chi(p_U;Z_{t^n})\biggr)+g_n\text{ .}\end{align*}
\end{proof}

\subsection{Some Applications}
\label{SomeApp}

In this section we present some applications
 of our results in \cite{Bo/Ca/De} 
and \cite{Bo/Ca/De2}.

 In \cite{Bo/Ca/De} it has been shown that
the deterministic secrecy capacity of an arbitrarily varying classical-quantum wiretap channel is in general not continuous.
Now we 
deliver the sufficient and
necessary conditions for the continuity 
of 
the capacity function of arbitrarily varying classical-quantum wiretap channels.

\begin{corollary}
For an arbitrarily varying classical-quantum channel
$\{W_t: t \in \theta\}$ we define
\[F(\{W_t: t\}):=\min_{\tau\in C(\theta\mid \mathbf{A}) }\max_{a, a'}\left\|\sum_{t\in\theta}\tau(t\mid a)W_{t}({a'})-\sum_{t\in\theta}\tau(t\mid {a'})W_{t}(a)\right\|_1\text{ ,}\]
where $C(\theta\mid \mathbf{A})$ the set of
parametrized distributions sets $\{\tau(\cdot\mid a):
 a\in \mathbf{A}\}$ on $\theta$.
The statement $F(\{W_t: t\})=0$ is equivalent to $\{W_t: t \in \theta\}$
being symmetrizable.

For an arbitrarily
varying classical-quantum wiretap channel 
$\{(W_t,{V}_t): t \in \theta\}$, where
$W_{t}$  $:$
${P}(\mathbf{A}) \rightarrow \mathcal{S}(H)$ and ${V}_t$ $:$ ${P}(\mathbf{A})
\rightarrow \mathcal{S}(H')$,
 and
a positive $\delta$ let
$\mathsf{C}_{\delta}$ be the set of all
arbitrarily
varying classical-quantum wiretap channels
 $\{({W'}_t,{V'}_t): t \in \theta\}$,
where
${W'}_{t}$  $:$
${P}(\mathbf{A}) \rightarrow \mathcal{S}(H)$ and ${V'}_t$ $:$ $P(\mathbf{A})
\rightarrow \mathcal{S}(H')$,
 such
that
\[\max_{ a\in \mathbf{A}} \|W_t(a)- {W'}_t(a)\|_{1} <  \delta\]
and
\[\max_{ a\in \mathbf{A}} \|V_t(a)- {V'}_t(a)\|_{1} <  \delta\]
for all $t \in \theta$.

 $C_s(\{(W_t,{V}_t): t \})$,
the deterministic secrecy capacity of
arbitrarily
varying classical-quantum wiretap channel
is discontinuous at
$\{({W}_t,{V}_t): t \in \theta\}$ 
if
and only if the following hold:\\[0.15cm]
1)
the secrecy capacity of
$\{({W}_t,{V}_t): t \in \theta\}$  under
common randomness assisted quantum coding
is positive;\\
2) $F(\{W_t: t\})=0$ 
but for every  positive $\delta$ there is a
 $\{({W'}_t,{V'}_t): t \in \theta\}$
$\in$ $\mathsf{C}_{\delta}$ such that
$F(\{{W'}_t: t\})>0$.\label{ftsdmiait}
\end{corollary}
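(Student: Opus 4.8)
The plan is to reduce everything to two ingredients from the earlier sections and from \cite{Bo/Ca/De}, \cite{Bo/Ca/De2}: the Ahlswede-type dichotomy $C_s(\{(W_t,V_t):t\})=C_s(\{(W_t,V_t):t\};r)$ when $\{W_t:t\}$ is non-symmetrizable and $C_s(\{(W_t,V_t):t\})=0$ when it is symmetrizable, and the continuity of the random-assisted secrecy capacity $C_s(\cdot;r)$ as a function of the channel $\{(W_t,V_t):t\in\theta\}$; condition 1) of the corollary is exactly the assertion that $C_s(\{(W_t,V_t):t\};r)>0$. Given these, the deterministic secrecy capacity is the continuous function $C_s(\cdot;r)$ cut off to $0$ on the symmetrizable channels, and the only possible discontinuities sit at symmetrizable channels that are limits of non-symmetrizable ones with positive random-assisted capacity.

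First I would handle the preliminary equivalence. Since $C(\theta\mid\mathbf{A})$ is compact and $\tau\mapsto\max_{a,a'}\|\sum_t\tau(t\mid a)W_t(a')-\sum_t\tau(t\mid a')W_t(a)\|_1$ is continuous, the minimum defining $F(\{W_t:t\})$ is attained, so $F(\{W_t:t\})=0$ holds iff some parametrized family $\{\tau(\cdot\mid a):a\}$ makes all these trace-norm differences vanish, which is precisely Definition \ref{symmet}. The same compactness, now used jointly in the channel and in $\tau$, shows $F$ is continuous: an $\varepsilon$-perturbation of the $W_t$ in trace norm moves $F$ by at most $2\varepsilon$.

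Then I would prove the two directions. \emph{Sufficiency:} assume 1) and 2). By 2), $F(\{W_t:t\})=0$, so $\{W_t:t\}$ is symmetrizable and $C_s(\{(W_t,V_t):t\})=0$; choosing $\delta_k\downarrow 0$ and channels $\{(W_t^{(k)},V_t^{(k)}):t\}\in\mathsf{C}_{\delta_k}$ with $F(\{W_t^{(k)}:t\})>0$ gives a sequence converging to $\{(W_t,V_t):t\}$ on which, by the dichotomy, $C_s(\{(W_t^{(k)},V_t^{(k)}):t\})=C_s(\{(W_t^{(k)},V_t^{(k)}):t\};r)\to C_s(\{(W_t,V_t):t\};r)>0$, using continuity of $C_s(\cdot;r)$ and condition 1); hence $C_s$ is discontinuous at $\{(W_t,V_t):t\}$. \emph{Necessity:} if $C_s$ is discontinuous at $\{(W_t,V_t):t\}$, then $F(\{W_t:t\})$ cannot be positive, since otherwise continuity of $F$ would give a neighborhood of non-symmetrizable channels on which $C_s$ equals the continuous function $C_s(\cdot;r)$; so $F(\{W_t:t\})=0$ and $C_s(\{(W_t,V_t):t\})=0$, and discontinuity together with nonnegativity produces a sequence $\{(W_t^{(k)},V_t^{(k)}):t\}\to\{(W_t,V_t):t\}$ with $C_s(\{(W_t^{(k)},V_t^{(k)}):t\})\ge c>0$. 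Positivity of the deterministic secrecy capacity forces each $\{W_t^{(k)}:t\}$ non-symmetrizable, hence $F(\{W_t^{(k)}:t\})>0$ with $\{(W_t^{(k)},V_t^{(k)}):t\}\in\mathsf{C}_{\delta}$ for every $\delta$ eventually, giving the second half of 2); and it forces $C_s(\{(W_t^{(k)},V_t^{(k)}):t\};r)=C_s(\{(W_t^{(k)},V_t^{(k)}):t\})\ge c$, so continuity of $C_s(\cdot;r)$ yields $C_s(\{(W_t,V_t):t\};r)\ge c>0$, which is 1).

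The logical skeleton above is short; the real work — which I would import rather than redo — is the continuity of $C_s(\cdot;r)$ in the channel parameters. This is the main obstacle: it needs the single-letter/regularized characterization of the random-assisted secrecy capacity together with uniform continuity estimates for von Neumann entropy and the Holevo quantity (Fannes--Audenaert, Lemma \ref{eq_9}) and control of the optimizations over the auxiliary variable $U$ and over the states $Z_{t^n}$, $B_q$. All of this is available from \cite{Bo/Ca/De} and \cite{Bo/Ca/De2}, after which the corollary follows exactly as sketched.
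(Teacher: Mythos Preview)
Your proposal is correct and follows the same approach as the paper: both directions rest on the Ahlswede dichotomy from \cite{Bo/Ca/De} (deterministic capacity equals the common-randomness-assisted capacity when $\{W_t\}$ is non-symmetrizable and is zero otherwise), the continuity of the common-randomness-assisted capacity from \cite{Bo/Ca/De2}, and the elementary continuity of $F$. The only cosmetic difference is that in the necessity direction the paper treats the three sub-claims ($F=0$; arbitrarily close channels with $F>0$; $C_s(\cdot;cr)>0$) by separate case exclusions, whereas you extract all of them from a single witnessing sequence $\{(W_t^{(k)},V_t^{(k)})\}$ with $C_s\ge c>0$; the content is the same.
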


\begin{proof}
At first we assume that the secrecy capacity of
$\{({W}_t,{V}_t): t \in \theta\}$  under
common randomness assisted quantum coding
is positive and $F(\{W_t: t\})=0$.
We choose a positive $\epsilon$
such that $C_s(\{(W_t,{V}_t): t \};cr)$
$-\epsilon$ $:=C$ $>0$.
By Corollary 5.1 in 
\cite{Bo/Ca/De2} 
the secrecy capacity  under
common randomness assisted quantum coding
is continuous. Thus
there exist a 
positive $\delta$ such that the
for all 
$\{({W'}_t,{V'}_t): t \in \theta\}$
$\in$ $\mathsf{C}_{\delta}$ we have
\[ C_s\left(\{({W'}_t,{V'}_t): t \in \theta\};cr\right)\geq 
C_s\left (\{(W_t,{V}_t): t \};cr\right)-\epsilon \text{ .}\]\vspace{0.15cm}

Now we assume that there is a
 $\{({W''}_t,{V''}_t): t \in \theta\}$
$\in$ $\mathsf{C}_{\delta}$ such that
$F(\{{W''}_t: t\})>0$. This means that
$\{{W''}_t: t\}$ is not  symmetrizable.
By Theorem 1 in \cite{Bo/Ca/De} 
it holds
\[ C_s\left(\{({W''}_t,{V''}_t): t \in \theta\}\right)
= C_s\left (\{({W''}_t,{V''}_t): t \};cr\right)\geq C > 0 \text{ .}\]

Since $F(\{W_t: t\})=0$,  $\{{W}_t: t\}$ is symmetrizable.
 By Theorem 1 in \cite{Bo/Ca/De} 
 \[C_s(\{({W}_t,{V}_t): t \in \theta\})=0\text{ .}\] Therefore 
the deterministic secrecy capacity
is discontinuous at
$\{({W}_t,{V}_t): t \in \theta\}$ when 1) and 2) hold.\vspace{0.2cm}

Now let us consider the case when  the deterministic secrecy capacity
is discontinuous at
$\{({W}_t,{V}_t): t \in \theta\}$.\vspace{0.15cm}

We fix a 
$\tau\in C(\theta\mid \mathbf{A})$
and $a$, $a'$ $\in\mathbf{A}$.
The map  
\[\{({W}_t,{V}_t): t \in \theta\}\rightarrow \left\|\sum_{t\in\theta}\tau(t\mid a)W_{t}({a'})-\sum_{t\in\theta}\tau(t\mid {a'})W_{t}(a)\right\|_1\]
is continuous
in the following  sense:
When
$\left\|\sum_{t\in\theta}\tau(t\mid a)W_{t}({a'})-\sum_{t\in\theta}\tau(t\mid {a'})W_{t}(a)\right\|_1$ $=C$
holds then
for every positive $\delta$ and any
$\{({W'}_t,{V'}_t): t \in \theta\}$
$\in$ $\mathsf{C}_{\delta}$ we have
\[\left| \|\sum_{t\in\theta}\tau(t\mid a){W'}_{t}({a'})-\sum_{t\in\theta}\tau(t\mid {a'}){W'}_{t}(a)\|_1 -C\right|\leq 2\delta\text{ .}\]
Thus if for a $\tau\in C(\theta\mid \mathbf{A})$
we have 
$\left\|\sum_{t\in\theta}\tau(t\mid a)W_{t}({a'})-\sum_{t\in\theta}\tau(t\mid {a'})W_{t}(a)\right\|_1$ $=C$ $>0$
for all $a$, $a'$ $\in\mathbf{A}$, 
we also have 
\[\left\|\sum_{t\in\theta}\tau(t\mid a){W'}_{t}({a'})-\sum_{t\in\theta}\tau(t\mid {a'}){W'}_{t}(a)\right\|_1 \geq C-2\delta\text{ .}\]
This means that when  $F(\{W_t: t\})>0$ holds we can find a 
 positive $\delta$ such that
$F(\{{W'}_t: t\})>0$ holds for all
$\{({W'}_t,{V'}_t): t \in \theta\}$
$\in$ $\mathsf{C}_{\delta}$.
By  Theorem 1 in \cite{Bo/Ca/De} it
holds
\[ C_s\left(\{({W'}_t,{V'}_t): t \in \theta\}\right)
= C_s\left (\{({W'}_t,{V'}_t): t \};cr\right)\geq C > 0 \text{ .}\]
By Corollary 5.1 in 
\cite{Bo/Ca/De2} $ C_s\left (\{({W'}_t,{V'}_t): t \};cr\right)$
is continuous. 

Therefore, when  the deterministic secrecy capacity
is discontinuous at
$\{({W}_t,{V}_t): t \in \theta\}$, $F(\{W_t: t\})$ cannot be positive.\vspace{0.15cm}

We consider now that
$F(\{W_t: t\})=0$ holds. By  Theorem 1 in \cite{Bo/Ca/De}
\[C_s(\{({W}_t,{V}_t): t \in \theta\} =0\text{ .}\]
When for every  $\{({W'}_t,{V'}_t): t \in \theta\}$
$\in$ $\mathsf{C}_{\delta}$ we have $F(\{{W'}_t: t\})=0$,
then by  Theorem 1 in \cite{Bo/Ca/De}
 \[C_s(\{({W'}_t,{V'}_t): t \in \theta\})=0\]
and the deterministic secrecy capacity
is thus continuous at
$\{({W}_t,{V}_t): t \in \theta\}$. 

Therefore, when  the deterministic secrecy capacity
is discontinuous at
$\{({W}_t,{V}_t): t \in \theta\}$,  for every  positive $\delta$ there is a
 $\{({W'}_t,{V'}_t): t \in \theta\}$
$\in$ $\mathsf{C}_{\delta}$ such that
$F(\{{W'}_t: t\})>0$.\vspace{0.15cm}

When for every  positive $\delta$ there is a
 $\{({W'}_t,{V'}_t): t \in \theta\}$
$\in$ $\mathsf{C}_{\delta}$ such that
$F(\{{W'}_t: t\})>0$ and $C_s(\{({W}_t,{V}_t): t \in \theta\},cr)$ $=0$
holds,
then by  Theorem 1 in \cite{Bo/Ca/De} we have 
\[C_s(\{({W'}_t,{V'}_t): t \in \theta\})=C_s(\{({W'}_t,{V'}_t): t \in \theta\},cr)\text{ ,}\]
and the deterministic secrecy capacity
is  continuous at
$\{({W}_t,{V}_t): t \in \theta\}$. 

Therefore, when  the deterministic secrecy capacity
is discontinuous at
$\{({W}_t,{V}_t): t \in \theta\}$, $C_s(\{({W}_t,{V}_t): t \in \theta\},cr)$
must be positive.
\end{proof}

\begin{corollary}
Let $\{({W}_t,{V}_t): t \in \theta\}$ be
an  arbitrarily varying classical-quantum  wiretap channel.
When the secrecy capacity of $\{({W}_t,{V}_t): t \in \theta\}$
is positive then there is a  $\delta$
such that for all $\{({W'}_t,{V'}_t): t \in \theta\}$
$\in\mathsf{C}_{\delta}$ we have
\[C_s\left(\{({W'}_t,{V'}_t): t \in \theta\}\right)>0\text{ .}\]
\end{corollary}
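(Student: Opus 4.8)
The plan is to show that positivity of the deterministic secrecy capacity is an \emph{open} condition, by combining the Ahlswede dichotomy of \cite{Bo/Ca/De} with two continuity statements: continuity of the symmetrizability functional $F$ from Corollary \ref{ftsdmiait}, and continuity of the common randomness assisted secrecy capacity from \cite{Bo/Ca/De2}.

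First I would extract structural consequences of the hypothesis $C_s(\{(W_t,V_t):t\in\theta\})>0$. By Theorem 1 of \cite{Bo/Ca/De}, if $\{W_t:t\in\theta\}$ were symmetrizable then $C_s(\{(W_t,V_t):t\in\theta\})=0$; hence $\{W_t:t\in\theta\}$ is non-symmetrizable, i.e. $C:=F(\{W_t:t\})>0$ in the notation of Corollary \ref{ftsdmiait}. The same theorem also gives $C_s(\{(W_t,V_t):t\in\theta\})=C_s(\{(W_t,V_t):t\in\theta\};cr)$, so the common randomness assisted secrecy capacity is positive as well; fix $\epsilon>0$ with $C_s(\{(W_t,V_t):t\in\theta\};cr)-\epsilon>0$.

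Next I would invoke two perturbation estimates on $\mathsf{C}_\delta$. For $F$: exactly as in the proof of Corollary \ref{ftsdmiait}, for each fixed $\tau\in C(\theta\mid\mathbf{A})$ and $a,a'\in\mathbf{A}$ the quantity $\|\sum_{t}\tau(t\mid a)W_t(a')-\sum_{t}\tau(t\mid a')W_t(a)\|_1$ changes by at most $2\delta$ when one passes to a channel pair in $\mathsf{C}_\delta$; taking the maximum over $a,a'$ and then the minimum over $\tau$ preserves this bound, so $|F(\{W'_t:t\})-F(\{W_t:t\})|\le 2\delta$ for every $\{(W'_t,V'_t):t\in\theta\}\in\mathsf{C}_\delta$. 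Hence any $\delta_1<C/2$ forces $F(\{W'_t:t\})\ge C-2\delta_1>0$, i.e. non-symmetrizability of $\{W'_t:t\}$, for all $\{(W'_t,V'_t):t\in\theta\}\in\mathsf{C}_{\delta_1}$. For the common randomness assisted capacity: by Corollary 5.1 of \cite{Bo/Ca/De2} the map $\{(W_t,V_t):t\}\mapsto C_s(\{(W_t,V_t):t\};cr)$ is continuous, so there is $\delta_2>0$ with $C_s(\{(W'_t,V'_t):t\in\theta\};cr)\ge C_s(\{(W_t,V_t):t\in\theta\};cr)-\epsilon>0$ for all $\{(W'_t,V'_t):t\in\theta\}\in\mathsf{C}_{\delta_2}$.

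Finally I would set $\delta:=\min(\delta_1,\delta_2)$. For any $\{(W'_t,V'_t):t\in\theta\}\in\mathsf{C}_\delta$ the family $\{W'_t:t\in\theta\}$ is non-symmetrizable, so Theorem 1 of \cite{Bo/Ca/De} gives $C_s(\{(W'_t,V'_t):t\in\theta\})=C_s(\{(W'_t,V'_t):t\in\theta\};cr)\ge C_s(\{(W_t,V_t):t\in\theta\};cr)-\epsilon>0$, which is the assertion. The only genuinely delicate point is the robustness of non-symmetrizability, i.e. that the nonlinear min--max functional $F$ of trace-norm distances is Lipschitz under the $\mathsf{C}_\delta$-perturbation; but this is precisely the estimate already carried out in the proof of Corollary \ref{ftsdmiait}, so the remaining work is only to assemble these pieces.
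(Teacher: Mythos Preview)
Your proposal is correct and follows essentially the same approach as the paper: both argue that positivity of $C_s$ forces non-symmetrizability (hence $F>0$) and positive common randomness assisted capacity, then use continuity of $F$ (from the proof of Corollary \ref{ftsdmiait}) and of $C_s(\cdot;cr)$ (Corollary 5.1 of \cite{Bo/Ca/De2}) to find $\delta_1,\delta_2$ and set $\delta=\min(\delta_1,\delta_2)$, concluding via the Ahlswede dichotomy. You spell out the final implication $C_s(\{(W'_t,V'_t)\})=C_s(\{(W'_t,V'_t)\};cr)>0$ more explicitly than the paper does, but the logic is identical.
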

\begin{proof}
Suppose we have $C_s(\{({W}_t,{V}_t): t \in \theta\})$ $>0$.
Then $\{W_t: t \in \theta\}$
is not symmetrizable, which means that
$F(\{W_t: t\})$ is positive.
In the proof of Corollary \ref{ftsdmiait} we show that
$F$ is continuous. Thus there is a positive $\delta'$ such that
$F(\{{W'}_t: t\})$ $>0$ for all $\{({W'}_t,{V'}_t): t \in \theta\}$
$\in\mathsf{C}_{\delta'}$ .
When $\{W_t: t \in \theta\}$
is not symmetrizable then we have $C_s(\{({W}_t,{V}_t): t \in \theta\},cr)$ $=$
$C_s(\{({W}_t,{V}_t): t \in \theta\})$ $>0$.
By  Corollary 5.1 in \cite{Bo/Ca/De2}, 
the secrecy capacity  under
common randomness assisted quantum coding is continuous.
Thus there is a positive $\delta''$ such that
 $C_s(\{({W'}_t,{V'}_t): t \in \theta\},cr)$ $>0$
for all $\{({W'}_t,{V'}_t): t \in \theta\}$
$\in\mathsf{C}_{\delta''}$ .
We define $\delta$ $:=$ $\min (\delta',\delta'')$ and the
Corollary is shown.
\end{proof}

One of the properties of classical channels is that in the majority
of cases, if we have a channel system where two sub-channels are
used together, the capacity of this channel system is the sum of the
 two sub-channels' capacities. Particularly,
 a system consisting of two orthogonal classical
channels, where both are ``useless''  in the sense  that they both have zero
capacity for message transmission, the capacity  for message transmission
of the whole system is zero as well (``$0 + 0 = 0$''). 

In contrast to the classical information  theory, it is known that
in quantum information  theory, there are examples of two quantum
channels, $W_1$ and $W_2$,  with zero capacity, which
 allow perfect transmission if they are used together, i.e.,  the   capacity of
their product $W_1\otimes W_2$ is positive. This is due to the fact that there are
different reasons why a quantum channel can have zero capacity. We
call this phenomenon   ``super-activation'' (``$0+0 >0$'').
In \cite{Bo/Ca/De}  super-activation has been shown for  arbitrarily varying classical-quantum
 wiretap channels. Now we  deliver a complete 
characterization of super-activation for  arbitrarily varying classical-quantum
 wiretap channels.

\begin{corollary}
Let $\{({W}_t,{V}_t): t \in \theta\}$ and 
 $\{({W'}_t,{V'}_t): t \in \theta\}$ be
two  arbitrarily varying classical-quantum wiretap  channels.

1) If $C_s(\{({W}_t,{V}_{t}): t\in \theta\})$ $=C_s(\{({W'}_t,{V'}_t): t\in \theta\})$ $=0$
then $C_s(\{W_t\otimes {W'}_{t'},{V}_t\otimes {V'}_{t'}: t , t' \in \theta\})$ is positive
 if and only if $\{ W_t\otimes {W'}_{t'}: t, t' \in \theta \}$ is not symmetrizable and
 $C_s(\{W_t\otimes {W'}_{t'},{V}_t\otimes {V'}_{t'}: t, t'  \in \theta\},cr)$  is positive.

2) If the secrecy capacity   under
common randomness assisted quantum coding shows no super-activation for 
$\{({W}_t,{V}_t): t \in \theta\}$ and 
 $\{({W'}_t,{V'}_t): t \in \theta\}$
 then the secrecy capacity can only then show super-activation for 
$\{({W}_t,{V}_t): t \in \theta\}$ and 
 $\{({W'}_t,{V'}_t): t \in \theta\}$
if one of $\{({W}_t,{V}_t): t \in \theta\}$ and 
 $\{({W'}_t,{V'}_t): t \in \theta\}$  has positive secrecy capacity   under
common randomness assisted quantum coding and a
 symmetrizable legal channel and
while the other one 
 has zero secrecy capacity   under
common randomness assisted quantum 
coding and a non-symmetrizable legal channel.
\end{corollary}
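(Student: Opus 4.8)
The plan is to reduce everything to the Ahlswede dichotomy for arbitrarily varying classical-quantum wiretap channels (Theorem~1 of \cite{Bo/Ca/De}), supplemented by one elementary observation about tensor products of symmetrizable channels. For part~1 I would observe that $\{(W_t\otimes W'_{t'},V_t\otimes V'_{t'}): t,t'\in\theta\}$ is itself an arbitrarily varying classical-quantum wiretap channel, with state set $\theta\times\theta$ and legal component $\{W_t\otimes W'_{t'}: t,t'\in\theta\}$, and apply the dichotomy to it directly: its deterministic secrecy capacity is positive exactly when the legal component is non-symmetrizable, in which case the deterministic and the common-randomness-assisted secrecy capacities coincide. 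Reading this equivalence in both directions yields part~1 immediately.

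For part~2 the key preliminary step I would carry out is that the tensor product of two symmetrizable arbitrarily varying classical-quantum channels is again symmetrizable: if $\tau(\cdot\mid a)$ symmetrizes $\{W_t: t\in\theta\}$ and $\tau'(\cdot\mid a')$ symmetrizes $\{W'_t: t\in\theta\}$, then the product distributions $\tilde\tau((t,t')\mid(a,a')):=\tau(t\mid a)\,\tau'(t'\mid a')$ symmetrize $\{W_t\otimes W'_{t'}: t,t'\in\theta\}$, which I would verify by factoring each side of the symmetrizability identity across the tensor product. The contrapositive, which is what I actually need, says: if $\{W_t\otimes W'_{t'}\}$ is non-symmetrizable, then at least one of $\{W_t\}$, $\{W'_t\}$ is non-symmetrizable.

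I would then argue by contradiction. Assume super-activation of the deterministic secrecy capacity, i.e.\ $C_s(\{(W_t,V_t)\})=C_s(\{(W'_t,V'_t)\})=0$ and $C_s(\{W_t\otimes W'_{t'},V_t\otimes V'_{t'}\})>0$. By part~1, $\{W_t\otimes W'_{t'}\}$ is non-symmetrizable and $C_s(\{W_t\otimes W'_{t'},V_t\otimes V'_{t'}\},cr)>0$. Now I would eliminate two cases. If both $\{W_t\}$ and $\{W'_t\}$ were non-symmetrizable, the dichotomy applied to each factor would give $C_s(\{(W_t,V_t)\},cr)=C_s(\{(W_t,V_t)\})=0$ and likewise for the primed channel, so together with positivity of the product $cr$-capacity this would exhibit super-activation of the $cr$-capacity, contrary to the hypothesis of part~2. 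If both $\{W_t\}$ and $\{W'_t\}$ were symmetrizable, the preliminary step would make $\{W_t\otimes W'_{t'}\}$ symmetrizable, contradicting what was just derived. Hence exactly one of the two legal channels is symmetrizable; relabelling, say $\{W'_t\}$ is non-symmetrizable, so the dichotomy gives $C_s(\{(W'_t,V'_t)\},cr)=C_s(\{(W'_t,V'_t)\})=0$, i.e.\ the channel with the non-symmetrizable legal link has zero $cr$-secrecy capacity. Finally, were $C_s(\{(W_t,V_t)\},cr)$ also zero, then both factor $cr$-capacities would vanish while the product $cr$-capacity is positive, again a super-activation of the $cr$-capacity; so $C_s(\{(W_t,V_t)\},cr)>0$, and the channel with the symmetrizable legal link has positive $cr$-secrecy capacity. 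This is precisely the asserted structure.

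The genuinely substantive ingredients are the dichotomy (already available) and the closure of symmetrizability under tensor products; the main obstacle, such as it is, is just to invoke the ``no super-activation of the $cr$-capacity'' hypothesis in exactly the two places above and to keep straight which capacity ($C_s$ versus $C_s(\cdot,cr)$) the dichotomy relates. I would also state explicitly that ``super-activation of the common-randomness-assisted secrecy capacity'' is meant as the verbatim analogue of the definition in Section~\ref{bdacs2}, with every occurrence of $C_s$ replaced by $C_s(\cdot,cr)$.
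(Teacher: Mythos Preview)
Your proposal is correct and follows essentially the same route as the paper's own proof: part~1 is a direct application of the Ahlswede dichotomy (Theorem~1 of \cite{Bo/Ca/De}) to the product channel, and part~2 combines the dichotomy on the factors with the observation that the tensor product of two symmetrizable arbitrarily varying classical-quantum channels is again symmetrizable via the product kernel $\tilde\tau((t,t')\mid(a,a'))=\tau(t\mid a)\,\tau'(t'\mid a')$. The only cosmetic difference is that you organise part~2 as a proof by contradiction eliminating the two ``bad'' cases, whereas the paper runs a direct case analysis over the four symmetrizability configurations of the two legal channels; the logical content and the points at which the ``no $cr$-super-activation'' hypothesis is invoked are identical.
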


\begin{proof}
By Theorem 1 in \cite{Bo/Ca/De} 
$C_s(\{W_t\otimes {W'}_{t'},{V}_t\otimes {V'}_{t'}: t, t'  \in \theta\})$ 
is equal to
 $C_s(\{W_t\otimes {W'}_{t'},{V}_t\otimes {V'}_{t'}: t, t'  \in \theta\},cr)$ 
when $\{ W_t\otimes {W'}_{t'}: t, t' \in \theta\}$ is not symmetrizable and
to zero when $\{ W_t\otimes {W'}_{t'}: t, t'  \in \theta\}$ is symmetrizable. Thus
1) holds.\vspace{0.2cm}

When  $\{ W_t: t \in \theta\}$ and
 $\{  {W'}_t: t \in \theta\}$ are both
 symmetrizable then there exists two
parametrized set of distributions $\{\tau(\cdot\mid a):
 a\in \mathbf{A}\}$, $\{\tau'(\cdot\mid a):
 a\in \mathbf{A}\}$ 
on $\theta$ such that for all $a$, ${a'}\in \mathbf{A}$,
we have $\sum_{t\in\theta}\tau(t\mid a)W_{t}({a'}) $ $=\sum_{t\in\theta}\tau(t\mid {a'})W_{t}(a)$,
$\sum_{t\in\theta}\tau'(t\mid a){W'}_{t}({a'}) $ $=\sum_{t\in\theta}\tau;(t\mid {a'}){W'}_{t}(a)$,
We can set $\tau((t,t')\mid (a,a'))$ $:=$ $\tau(t\mid a)\tau'(t'\mid a')$
and obtain
\[\sum_{(t,t')\in\theta\times \theta}\tau((t,t')\mid (a_1,a_1'))W_{t}(a_2)\otimes {W'}_{t'}(a_2')
=\sum_{(t,t')\in\theta\times \theta}\tau((t,t')\mid (a_2,a_2'))W_{t}(a_1)\otimes {W'}_{t'}(a_1')\]
for all $(a_1,a_1')$, $(a_2,a_2')$ $\in \mathbf{A} \times \mathbf{A}$,
which means that $\{ W_t\otimes {W'}_t: t, t'  \in \theta\}$ is symmetrizable and
super-activation does not occur because of 1).

When $\{ W_t: t \in \theta\}$ and
 $\{  {W'}_t: t \in \theta\}$ are both
not symmetrizable
then their secrecy capacities  are equal to
their secrecy capacities  under
common randomness assisted quantum 
coding. When  $C_s(\{({W}_t,{V}_t): t \in \theta\},cr)$
 $=C_s(\{({W'}_t,{V'}_t): t \in \theta\},cr)$
$=0$. Because of our assumption
 $C_s(\{W_t\otimes {W'}_{t'},{V}_t\otimes {V'}_{t'}: t, t'  \in \theta\},cr)$ 
$=0$. By 1), super-activation cannot occur.

When one of  $\{ W_t: t \in \theta\}$ and
 $\{  {W'}_t: t \in \theta\}$, say $\{ W_t: t \in \theta\}$
is  not symmetrizable while the other one is symmetrizable,
then $C_s(\{({W}_t,{V}_t): t \in \theta\})$ $=0$ 
indicate that $C_s(\{({W}_t,{V}_t): t  \in \theta\},cr)$ $=0$.
When $C_s(\{({W'}_t,{V'}_t): t \in \theta\},cr)$ is also zero then
by our assumption
super-activation cannot occur. Thus 2) holds.
\end{proof}

\section*{Acknowledgment}
Supports by the Bundesministerium f\"ur Bildung und Forschung (BMBF)
via Grant 16KIS0118K and  16KIS0117K, the German Research Council (DFG) 
via Grant 1129/1-1, the ERC via Advanced Grant IRQUAT, the Spanish MINECO 
via Project
FIS2013-40627-P, 
and the Generalitat de CatalunyaCIRIT via
Project 2014 SGR 966
are gratefully acknowledged.

\end{document}